\newtheorem{theorem}{Theorem}
\newtheorem{proposition}{Proposition}
\newtheorem{lemma}{Lemma}
\newtheorem{corollary}{Corollary}
\newtheorem{example}{Example}
\newtheorem{assumption}{Assumption}
\newtheorem*{example*}{\examplenumber}
\providecommand{\examplenumber}{}
\newenvironment{newexample}[1]
{%
	\renewcommand{\examplenumber}{Example #1$'$}%
	\begin{example*}%
		\protected@edef\@currentlabel{#1$'$}%
	}
	{%
	\end{example*}
	}
\newtheorem*{assumption*}{\assumptionnumber}
\providecommand{\assumptionnumber}{}
\newenvironment{newassumption}[1]
{%
	\renewcommand{\assumptionnumber}{Assumption #1$'$}%
	\begin{assumption*}%
		\protected@edef\@currentlabel{#1$'$}%
	}
	{%
	\end{assumption*}
	}
\newtheorem*{theorem*}{\theoremnumber}
\providecommand{\theoremnumber}{}
\newenvironment{newt}[1]
{%
	\renewcommand{\theoremnumber}{Theorem #1$'$}%
	\begin{theorem*}%
		\protected@edef\@currentlabel{#1$'$}%
	}
	{%
	\end{theorem*}
	}
\newcommand{\expit}{\text{expit}}
\newcommand{\indep}{\perp \!\!\! \perp}
\newcommand{\nindep}{\not\!\perp \!\!\! \perp}
\newcommand{\blind}{1}
\begin{document}

\def\spacingset#1{\renewcommand{\baselinestretch}%
{#1}\small\normalsize} \spacingset{1}

%%%%%%%%%%%%%%%%%%%%%%%%%%%%%%%%%%%%%%%%%%%%%%%%%%%%%%%%%%%%%%%%%%%%%%%%%%%%%%

\if1\blind
{
  \title{\bf Double Negative Control Inference in Test-Negative Design Studies of Vaccine Effectiveness}
  \author{Kendrick Qijun Li\thanks{
    The authors gratefully acknowledge NIH grants R01AI27271, R01CA222147, R01AG065276, R01GM139926. }\hspace{.2cm}\\
    Department of Biostatistics, University of Michigan;\\
    Xu Shi \\
    Department of Biostatistics, University of Michigan;\\
    Wang Miao\\
    Department of Probability and Statistics, Peking University;\\
    Eric Tchetgen Tchetgen\\
    Department of Statistics and Data Science, The Wharton School,\\ University of Pennsylvania}
  \maketitle
} \fi

\if0\blind
{
  \bigskip
  \bigskip
  \bigskip
  \begin{center}
    {\LARGE\bf Double Negative Control Inference in Test-Negative Design Studies of Vaccine Effectiveness}
   %%Estimating vaccine effectiveness in a test negative design with negative control variables to reduce unmeasured confounding Confounding and selection biases adjustment in test negative design studies of vaccine effectivenesswith double negative controls(Bias correction for test-negative design using double negative controls)}
\end{center}
  \medskip
} \fi

\bigskip
\begin{abstract}
The test-negative design (TND) has become a standard approach to evaluate vaccine effectiveness against the risk of acquiring infectious diseases in real-world settings, such as Influenza, Rotavirus, Dengue fever, and more recently COVID-19. In a TND study, individuals who experience symptoms and seek care are recruited and tested for the infectious disease which defines cases and controls. Despite TND's potential to reduce unobserved differences in healthcare seeking behavior (HSB) between vaccinated and unvaccinated subjects, it remains subject to various potential biases. First, residual confounding bias may remain due to unobserved HSB, occupation as healthcare worker, or previous infection history. Second, because selection into the TND sample is a common consequence of infection and HSB, collider stratification bias may exist when conditioning the analysis on testing, which further induces confounding by latent HSB. In this paper, we present a novel approach to identify and estimate vaccine effectiveness in the target population by carefully leveraging a pair of negative control exposure and outcome variables to account for potential hidden bias in TND studies. We illustrate our proposed method with extensive simulation and an application to study COVID-19 vaccine effectiveness using data from the University of Michigan Health System.

\end{abstract}

\noindent 
{\it Keywords:}  Causal inference, proximal causal inference, selection bias, unmeasured confounding
\vfill

\newpage
\spacingset{1.45} % DON'T change the spacing!
\newpage
\tableofcontents

\clearpage
\section{Introduction}
\label{sec:intro}

\subsection{Text-negative design studies of vaccine effectiveness}
The test-negative design (TND) has become a standard approach to evaluate real-world vaccine effectiveness (VE) against the risk of acquiring infections diseases~\citep{jackson2017influenza,flannery2019influenza,rolfes2019effects,chung2020effects,tenforde2021influenza}. In an outpatient Influenza VE test-negative design, for example, symptomatic individuals seeking care and meeting eligibility criteria are enrolled and their Influenza virus infection status is subsequently confirmed via a laboratory test. VE against flu infection is then measured by comparing the prevalence of vaccination between the test-positive ``cases'' and test-negative ``controls''~\citep{jackson2013test,jackson2017influenza}. Besides Influenza, the TND and its variants have also been applied to study VE against pneumococcal disease~\citep{broome1980pneumococcal}, dengue~\citep{anders2018awed,utarini2021efficacy}, rotavirus~\citep{boom2010effectiveness,schwartz2017rotavirus}, and other infectious diseases.  Recently, the TND has increasingly been used in post-licensure evaluation of COVID-19 VE~\citep{patel2020postlicensure,dean2021COVID,hitchings2021effectiveness,thompson2021effectiveness,dagan2021bnt162b2,olson2022effectiveness}.  

Test-negative designs are believed to reduce unmeasured confounding bias due to healthcare seeking behavior (HSB), whereby care seekers are more likely to be vaccinated, have healthier behaviors that reduces the risk of infection, and get tested when ill~\citep{jackson2006evidence,shrank2011healthy}. By restricting analysis to care seekers who are tested for the infection in view (e.g. Influenza or COVID-19), the vaccinated and unvaccinated are more likely to share similar HSB and underlying health characteristics. Misclassification of infection status is also reduced because the analysis is restricted to tested individuals~\citep{jackson2013test}. 

Sullivan et al. (2016) used directed acyclic graphs (DAG) to illustrate the rationale behind TND in the context of evaluating VE against Influenza infection, as shown in Figures~\ref{fig:dag-tnd}(a) and (b). We denote Influenza vaccination status by $A$ and Influenza infection by $Y$, so that the arrow $A\rightarrow Y$ represents VE against flu infection. Selection into the TND study sample, denoted by $S$, is triggered by a subject experiencing flu-like symptoms or acute respiratory illness, seeking care at clinics or hospitals, and getting tested for Influenza infection, hence the  $Y\rightarrow S$ edge. Healthcare seeking behavior, denoted by HSB, may affect $S$, $A$, and $Y$ because subjects with certain health seeking proclivities may be more likely to seek care, take annual flu shots, and participate in healthy and preventative behaviors. The above variables are subject to  effects of other clinical or demographic factors, such as age, season and high-risk conditions, included in Figure~\ref{fig:dag-tnd} as confounders $X$. The TND assumes that by restricting recruitment to care seekers, the study subjects have identical healthcare seeking behavior; in other words, conditioning the analysis on $S=1$ necessarily leads to HSB$=1$, which blocks the effects of HSB (Figure~\ref{fig:dag-tnd}(b)). The effects of $X$ are further adjusted for by including these factors in a logistic regression model or by inverse probability weighting~\citep{bond2016regression,thompson2021effectiveness}. 

However, the TND remains subject to potential hidden bias. First, the assumption that all study subjects seeking care are lumped into a single category HSB$=1$ may be unrealistic. It may be more realistic that HSB is not a deterministic function of $S$ and remains a source of confounding bias even after conditioning on $S$. Furthermore, there might be other mismeasured or unmeasured confounders, denoted as $U$. For example, healthcare workers are at increased risk of flu infection due to higher exposure to flu patients and are more likely to seek care and receive vaccination due to health agency guidelines~\citep{black2018influenza}. Previous flu infection history may also be a source of confounding if it alters the likelihood of vaccination and care seeking, while also providing immunity against circulating strains~\citep{sullivan2016theoretical,krammer2019human}. These unmeasured or mismeasured potential sources of confounding, if not properly accounted for, can result in additional confounding bias, as illustrated in Figure~\ref{fig:dag-tnd}(c). Finally,  collider stratification bias is likely present due to conditioning on $S$, which is a common consequence of HSB, other risk factors $(X,U)$, and Influenza infection $Y$~\citep{lipsitch2016observational}. That is, conditioning on $S$
unblocks the  backdoor path $A\leftarrow (X, U, HSB)\rightarrow S\leftarrow Y$, which would in principle be blocked if study subjects had identical levels of HSB and other risk factors~\citep{sullivan2016theoretical}.

\begin{figure}[!htbp]
		\centering
\resizebox{0.88\textwidth}{!}{
\begin{tabular}{cc}	
	\begin{minipage}{0.5\textwidth}%\vskip 2.5em
		\centering
		\begin{tikzpicture}
			
			\tikzset{line width=1pt,inner sep=5pt,
				swig vsplit={gap=3pt, inner line width right=0.4pt},
				ell/.style={draw, inner sep=1.5pt,line width=1pt}}

			\node[shape = circle, ell] (A) at (-2.5, 0) {$A$};
			\node (Aname) at (-2.9, -0.5) {Flu Vac};
			
			\node[shape=ellipse,ell] (Y) at (0,0) {$Y$};
			\node (Yname) at (0,-0.5) {Infection};
			
			\node[shape=ellipse,ell] (HS) at (0,2.1) {$HSB$};
			\node (HSname) at (1.3, 2.67) {\begin{tabular}{l}Healthcare seeking\\behavior\end{tabular}};

			\node[shape=ellipse,ell] (X) at (-2.5,2.1) {$X$};
			\node (Xname) at (-2.9,2.67) {
				\begin{tabular}{l}	Measured\\confounders \end{tabular}
			};

			\node[shape=ellipse,ell] (S) at (1.5,1.05) {$S$};
			\node (Sname) at (2,1.5) {Selected};
			
			\draw[-stealth,line width=0.5pt](A) to (Y);
			
			\foreach \from/\to in {Y/S, HS/S, X/Y, HS/A, X/A, X/HS, HS/Y}
			\draw[-stealth, line width = 0.5pt] (\from) -- (\to);
			
		\end{tikzpicture}
	\end{minipage}
	&
	\begin{minipage}{0.5\textwidth}
		\centering
		\begin{tikzpicture}
			
			\tikzset{line width=1pt,inner sep=5pt,
				swig vsplit={gap=3pt, inner line width right=0.4pt},
				ell/.style={draw, inner sep=1.5pt,line width=1pt}}

			\node[shape = circle, ell] (A) at (-2.5, 0) {$A$};
			\node (Aname) at (-2.9, -0.5) {\textcolor{white}{Flu Vac}};
			
			\node[shape=ellipse,ell] (Y) at (0,0) {$Y$};
			\node (Yname) at (0,-0.5) {\textcolor{white}{Infection}};

			\node[shape=rectangle,ell] (HS) at (0,2.1) {$HSB=1$};
			\node (HSname) at (1.3, 2.67) {\begin{tabular}{l}\textcolor{white}{Healthcare seeking}\\\textcolor{white}{behavior}\end{tabular}};
			
			\node[shape=ellipse,ell] (X) at (-2.5,2.1) {$X$};
			\node (Xname) at (-2.9,2.67) {
				\begin{tabular}{l}	\textcolor{white}{Measured}\\\textcolor{white}{confounders} \end{tabular}
			};

			\node[shape=rectangle,ell] (S) at (1.5,1.05) {$S=1$};
			\node (Sname) at (2,1.5) {\textcolor{white}{Selected}};
			
			\draw[-stealth,line width=0.5pt](A) to (Y);
			
			\foreach \from/\to in {Y/S,  X/Y, X/A, X/HS}
			\draw[-stealth, line width = 0.5pt] (\from) -- (\to);
			
		\end{tikzpicture}
	\end{minipage}
	
	\\
	\textbf{(a)} & \textbf{(b)}\\
	\begin{minipage}{0.5\textwidth}\vskip 2.5em
		
		\begin{tikzpicture}
			
			\tikzset{line width=1pt,inner sep=5pt,
				swig vsplit={gap=3pt, inner line width right=0.4pt},
				ell/.style={draw, inner sep=1.5pt,line width=1pt}}

			\node[shape = circle, ell] (A) at (-2.5, 0) {$A$};
			\node[shape=ellipse,ell] (Y) at (0,0) {$Y$};
			\node[shape=ellipse,ell] (HS) at (0,2.1) {$HSB$};
			\node (HSname) at (1.3, 2.67) {\begin{tabular}{l}Healthcare seeking\\ behavior\end{tabular}};

			\node[shape=ellipse,ell] (X) at (-2.5,2.1) {$X$};
			\node (Xname) at (-2.9,2.67) {
				\begin{tabular}{l}	\textcolor{white}{Measured}\\\textcolor{white}{confounders} \end{tabular}

			};
			
			\node[shape=ellipse,ell] (HCW) at (-1.25, 3) {$U$};
			
			\node (HCWname) at (-1.25, 3.6) {
				\begin{tabular}{l}
					Healthcare worker;\\
					Previous infection
				\end{tabular}
			};

			\node[shape=rectangle,ell] (S) at (1.5,1.05) {$S=1$};

			\draw[-stealth,line width=0.5pt](A) to (Y);
			
			\foreach \from/\to in {Y/S,  X/Y, X/A, X/HS, HS/A, HS/Y, HS/S, HCW/A, HCW/Y, HCW/HS} \draw[-stealth, line width = 0.5pt] (\from) -- (\to);

		\end{tikzpicture}
	\end{minipage}
	
	&
	
	\begin{minipage}{0.5\textwidth}\vskip 2.5em
		\centering
		\begin{tikzpicture}
			
			\tikzset{line width=1pt,inner sep=5pt,
				swig vsplit={gap=3pt, inner line width right=0.4pt},
				ell/.style={draw, inner sep=1.5pt,line width=1pt}}

			\node[shape = circle, ell] (A) at (-2.5, 0) {$A$};
			\node[shape=ellipse,ell] (Y) at (0,0) {$Y$};
			\node[shape=ellipse,ell] (U) at (0,2) {$\mathbf{U}, X$};
			
			\node (Uname) at (0, 3) {
				\begin{tabular}{l}
					U=\{healthcare worker, \\
					previous infection, HSB\}
				\end{tabular}
			};

			\node (Zname) at (-2, 2.67) {\begin{tabular}{l}\textcolor{white}{NCE:}\\ \textcolor{white}{Previous/other Vac}\end{tabular}};
			
			\node[shape=rectangle,ell] (S) at (2.5,0) {$S=1$};

			\draw[-stealth,line width=0.5pt](A) to (Y);

			\foreach \from/\to in {Y/S, U/S, U/Y, U/A}
			\draw[-stealth, line width = 0.5pt] (\from) -- (\to);
			
		\end{tikzpicture}
	\end{minipage}

	\\
	\textbf{(c)} & \textbf{(d)}\\
	\begin{minipage}{0.5\textwidth}\vskip 2.5em
		\begin{tikzpicture}
			
			\tikzset{line width=1pt,inner sep=5pt,
				swig vsplit={gap=3pt, inner line width right=0.4pt},
				ell/.style={draw, inner sep=1.5pt,line width=1pt}}

			\node[shape = circle, ell] (A) at (-2.5, 0) {$A$};
			\node[shape=ellipse,ell] (Y) at (0,0) {$Y$};
			
			\node[shape=ellipse,ell] (U) at (0,2) {$U, X$};
			
			\node[shape=ellipse,ell] (W) at (2.5, 2) {$W$};
			\node (Wname) at (2, 2.67) {\begin{tabular}{l}NCO: \\Other Infection\end{tabular}};
			
			\node[shape=ellipse,ell] (Z) at (-2.5, 2) {$Z$};
			\node (Zname) at (-2, 2.67) {\begin{tabular}{l}NCE:\\ Previous/other Vac\end{tabular}};
			
			\node[shape=rectangle,ell] (S) at (2.5,0) {$S=1$};
			
			\draw[-stealth, line width=0.5pt, bend right,color=white](A) to (S);
			
			\draw[-stealth,line width=0.5pt](A) to (Y);
			
			\draw[dashed,stealth-stealth, line width=0.5pt](A) to (Z);
			
			\draw[dashed, -stealth, line width = 0.5pt] (W) to (Y); 
			
			\draw[dashed, -stealth, line width = 0.5pt] (W) to (S);
			
			\foreach \from/\to in {Y/S, U/S, U/Y, U/A, U/W, U/Z}
			\draw[-stealth, line width = 0.5pt] (\from) -- (\to);
			
		\end{tikzpicture}
	\end{minipage}&
	\begin{minipage}{0.5\textwidth}\vskip 2.5em
		\begin{tikzpicture}
			
			\tikzset{line width=1pt,inner sep=5pt,
				swig vsplit={gap=3pt, inner line width right=0.4pt},
				ell/.style={draw, inner sep=1.5pt,line width=1pt}}

			\node[shape = circle, ell] (A) at (-2.5, 0) {$A$};

			\node[shape=ellipse,ell] (Y) at (0,0) {$Y$};
			
			\node[shape=ellipse,ell] (U) at (0,2) {$U, X$};
			
			\node[shape=ellipse,ell] (W) at (2.5, 2) {$W$};
			\node (Wname) at (2, 2.67) {\begin{tabular}{l}NCO: \\Other Infection\end{tabular}};
			
			\node[shape=ellipse,ell] (Z) at (-2.5, 2) {$Z$};
			\node (Zname) at (-2, 2.67) {\begin{tabular}{l}NCE:\\ Previous/other Vac\end{tabular}};
			
			\node[shape=rectangle,ell] (S) at (2.5,0) {$S=1$};
			
			\draw[dashed,-stealth, line width=0.5pt](A) to (Z);
			
			\draw[-stealth, line width=0.5pt, bend right](A) to (S);
			
			\draw[-stealth,line width=0.5pt](A) to (Y);
			
			\draw[-stealth,line width=0.5pt, dashed](Y) to (W);
			
			\foreach \from/\to in {Y/S, U/S, U/Y, U/A, U/W, U/Z}
			\draw[-stealth, line width = 0.5pt] (\from) -- (\to);
			
		\end{tikzpicture}
	\end{minipage}
	
	\\
	\textbf{(e)}&\textbf{(f)}\\
	
\end{tabular}}

\caption{\label{fig:dag-tnd} Causal relationships of variables in a test-negative design. \cite{sullivan2016theoretical} used (a) to illustrate the causal relationship between variables in a test-negative design in the general population, and used (b) to illustrate the assumption implicit in the common approach to estimate VE from the study data that study subjects have identical healthcare seeking behavior (HSB)~\citep{sullivan2016theoretical}. (c) shows that if $HSB$ remains partially unobserved, then the backdoor paths $A\leftarrow HSB\rightarrow Y$ and $A\leftarrow HSB\rightarrow S=1\leftarrow Y$ indicate unmeasured confounding bias and selection bias, respectively. Other unmeasured confounders, such as occupation as a healthcare worker and previous infection, open additional backdoor paths between $A$ and $Y$ and result in additional confounding bias. (d) shows a simplified DAG from (c) that combines the unmeasured confounders into a single variable $U$.  (e) illustrates our approach to estimate VE leveraging negative control exposure $Z$ and outcome $W$. Dashed arrows indicate effects that are not required. (f) shows a scenario with the $A\rightarrow Y$ arrow where the causal odds ratio can still be identified under additional assumptions.}
\end{figure}

Accounting for these potential sources of bias is well known to be challenging, and potentially infeasible without  additional assumptions or data. This can be seen in Figure~\ref{fig:dag-tnd}(d), which is a simplified version of Figure~\ref{fig:dag-tnd}(c) where the unmeasured confounders $U$ include individuals' occupation as a healthcare worker, previous flu infection, HSB, and so on. Figure~\ref{fig:dag-tnd}(d) indicates that the unmeasured confounders $U$ induce both confounding bias through the path $A\leftarrow U\rightarrow Y$ and collider stratification bias through the path $A\leftarrow U\rightarrow S\leftarrow Y$. In presence of both unmeasured confounding and collider bias, causal bounds may be available ~\citep{gabriel2020causal} but likely too wide to be informative; causal identification in TND therefore remains to date an important and outstanding open problem in the causal inference literature which we aim to resolve.

\subsection{Negative control methods}
In recent years, negative control variables have emerged as powerful tools to detect, reduce and potentially correct for unmeasured confounding bias~\citep{lipsitch2010negative,miao2018identifying,shi2020selective}. The framework requires that at least one of two types of negative control variables are available which  are \textit{a priori} known to satisfy certain conditions: a negative control exposure (NCE) known to have no   direct effect on the primary outcome; or a negative control outcome (NCO), known not to be an effect of the primary exposure. Such negative control variables are only valid and therefore useful to address unmeasured confounding in a given setting to the extent that they are subject to the same source of confounding as the exposure-outcome relationship of primary interest. Thus, the observed association between a valid NCE and the primary outcome (conditional on the primary treatment and observed covariates) or that between a valid NCO and the primary exposure can indicate the presence of residual confounding bias. For example, in a cohort  study to investigate flu VE against hospitalization and death among seniors, to detect the presence of confounding bias due to underlying health characteristics, \citet{jackson2006evidence} used hospitalization/death before and after the flu season as NCOs and found that the association between flu vaccination and hospitalization was virtually the same before and during the flu season, suggesting that the lower hospitalization rate observed among vaccinated seniors versus unvaccinated seniors was partially due to healthy user bias. 

Recently, new causal methods have been developed to not only detect residual confounding when present, but also to potentially de-bias an observational estimate of a treatment causal effect in the presence of unmeasured confounders when both an NCE and an NCO are available, referred to as the double negative control~\citep{miao2018identifying,shi2020multiply,tchetgen2020introduction}. In this recent body of work, the double negative control design was extended in several important directions including settings in which proxies of treatment and outcome confounding routinely measured in well designed observational studies may be used as negative control variables, a framework termed \textit{proximal causal inference};  
longitudinal settings where one is interested in the joint effects of time-varying exposures~\citep{ying2021proximal}, potentially subject to both measured and unmeasured confounding by time-varying factors; and in settings where one aims to  estimate direct and indirect effects in mediation analysis subject to unmeasured confounding or unmeasured mediators~\citep{dukes2021proximal,ghassami2021proximal}. Additional recent papers in this fast-growing literature include \citet{qi2021proximal}, \citet{liu2021regression}, \citet{egami2021identification}, \citet{kallus2021causal}, \citet{imbens2021controlling}, \citet{deaner2018proxy}, \citet{deaner2021many}, \citet{ghassami2021minimax}, \citet{mastouri2021proximal} and \citet{ghassami2022combining}. Importantly, existing identification results in negative control and proximal causal inference literature has been restricted to i.i.d settings~\citep{miao2018confounding} and time series settings~\citep{shi2021theory}, and to date, to the best of our knowledge, outcome-dependent sampling settings such as TND, have not been considered, particularly one where confounding and selection bias might co-exist.

\subsection{Outline}
The rest of the paper is organized as followed: we introduce notation and the identification challenge in view in Sections~\ref{sec:model}. Next we develop the identification strategy and describe a new debiased estimator under a double negative control TND study in Section~\ref{sec:semiparam-model}-\ref{sec:sum}, assuming (1) homogeneous VE across strata defined by all measured and unmeasured confounders and (2) no direct effect of vaccination on selection into the TND sample. In Section~\ref{sec:eff-mod}, we relax the homogeneous VE assumption and describe identification and estimation allowing for VE to depend on observed covariates. In Section~\ref{sec:a-s-arrow}, we relax the assumption of no direct effect of vaccination on selection and introduce the assumptions under which our VE estimator is unbiased on the odds ratio scale. In Section~\ref{sec:sim}, we demonstrate the performance of our method with simulation. In Section~\ref{sec:application}, the approach is further illustrated in an application to estimate  COVID-19 VE against infection in a TND study nested within electronic health records from University of Michigan Health System.  We then conclude with a discussion in Section~\ref{sec:discussion}.

\section{Method}

\subsection{Preliminary:  estimation under no unmeasured confounding and no selection bias\label{sec:model}}
To fix ideas, we first review estimation  assuming all confounders $(U,X)$ are fully observed and the study sample is randomly drawn (rather than selected by testing) from source population, referred to as the ``target population''. 
That is, we observe data on $(A, Y, U, X)$ which are independent and identically distributed in the target population. For each individual, we write $Y(a)$ as the binary potential infection outcome had, possibly contrary to fact, the person's vaccination status been $A=a$, $a=0, 1$. Our goal is to provide identification and estimation strategies for the  causal risk ratio (RR)
defined as $RR= E[Y(1)]/E[Y(0)]$.
Let $\beta_0$ denote the log causal RR, i.e., $RR=\exp(\beta_0)$.
Following ~\citet{hudgens2006causal} and \citet{struchiner2007randomization}, we define VE as one minus the causal RR: $VE=1-RR=1-\exp(\beta_0)$. The potential outcomes and the observed data are related through the following assumptions:

\begin{assumption}\label{assump:ident}(Identification conditions of mean potential outcomes).

\begin{enumerate}[(a)]
    \item (Consistency) $Y(a)=Y$ if $A=a$ almost surely for $a=0,1$;
    \item (Exchangeability) $A\indep Y(a)|U, X$ for $a=0,1$.
    \item (Positivity)  $0<P(A=a|U, X)$ almost surely $a=0,1$.
\end{enumerate}
\end{assumption}

Assumption~\ref{assump:ident}(a) states that the infection status of a subject with vaccination status $A=a$ is equal to the corresponding potential outcome $Y(a)$. This further requires that the treatment is sufficiently well-defined and a subject's potential outcome is not affected by the treatment of other subjects~\citep{cole2009consistency}. Assumption~\ref{assump:ident}(b) states that treatment is exchangeable within strata of $(U,X)$, i.e. there is no unmeasured confounding given $(U,X)$. We develop methods that allow $U$ to be unmeasured in Section~\ref{sec:bridge-approach}. 
Assumption~\ref{assump:ident}(c) states that for all realized values of $(U,X)$ there is at least one individual with an opportunity to get treatment $a=0,1$.

Let $Q(A=a, U, X)=1/P(A=a|U, X)$ denote the inverse of the probability of vaccination status $A=a$ given confounders \citep{rosenbaum1987model}. Under Assumption~\ref{assump:ident}, it is well known that, if $U$ were observed, the mean potential outcome for treatment $a$ in the general population can be identified by inverse probability of treatment weighting (IPTW):
\begin{equation}
    E[Y(a)]= E\left[I(A=a)Q(A=a,U, X)Y\right],\label{eq:iptw}
\end{equation}
for $a=0,1$.
Therefore, the log causal RR $\beta_0$  satisfies the following equation
$$
E\left[{I(A=1)Q(A=1,U, X)Y\exp(-\beta_0)}\right] - E\left[{I(A=0)Q(A=0,U, X)Y}\right]=0.
$$
Equivalently, we have 
\begin{equation}\label{eq:ee-general-population}E[V_0(A, Y, U, X;\beta_0)]=0\end{equation} for the  unbiased estimating function, where 
$V_0(A, Y, U, X;\beta)=(-1)^{1-A}Q(A, U, X)Y\exp(-\beta A)$.

\subsection{Tackling selection bias under a semiparametric risk model}\label{sec:semiparam-model}
Next, consider a TND study for which data $(A,Y,X,U)$ is observed only for the tested individuals with $S=1$. Because $S$ is impacted by other factors such as infection, the estimating function $V_0(A, Y, U, X;\beta_0)$ may not be unbiased with respect to the study sample; i.e. \begin{equation}\label{eq:ee-general-sample}E[V_0(A, Y, U, X;\beta_0)|S=1]\neq 0\end{equation}  without another assumption about the selection process into the TND sample. 

For a study sample of size $n$ from a TND, we denote the $i$-th study subject's variables as $(A_i, Y_i, U_i, X_i)$, $i=1,\dots, n$. For generalizability, we first make the key assumption that vaccination $A$ is unrelated to selection $S$ other than through a subject's infection status $Y$ and confounders $(U,X)$. 

\begin{assumption}[Treatment-independent sampling]\label{assump:trt-indep-samp}  $S\indep A|Y, U, X$.
\end{assumption}
In the test-negative design, this assumption requires that an individual's decision to seek care and get tested only depends on the presence of symptoms and his/her underlying behavioral or socioeconomic characteristics, including HSB (contained in $(U,X)$); a person's vaccination status does not directly Influence their selection process. The DAGs in Figure 1(a)-(e) in fact encode this conditional independence condition. We will relax this assumption in Section~\ref{sec:a-s-arrow}.

\begin{assumption}[No effect modification by a latent confounder]
\label{assump:no-eff-mod}
For $a=0,1$,
\begin{equation}\label{eq:rr-model}
    P(Y=1|A=a, U, X)=\exp(\beta_0a)g(U, X)
\end{equation}
where $g(U, X)$ is an unknown function restricted by $0\leq  P(Y=1|A, U, X)\leq 1$ almost surely. 
\end{assumption}

Assumption~\ref{assump:no-eff-mod} defines a  semiparametric multiplicative risk model which posits that vaccine effectiveness, measured on the RR scale, is constant across $(U,X)$ strata in the target population. In other words, the effect of vaccination $A$ on the risk of infection $Y$ is not modified by confounders $U,X$. In Section~\ref{sec:eff-mod}, we will relax the assumption to allow for effect modification by measured confounders $X$. Infection risk for control subjects $P(Y=1|A=0, U, X)=g(U, X)$ is left unspecified and thus defines the nonparametric component of the model.

Under Assumptions~\ref{assump:ident} and \ref{assump:no-eff-mod}, one can verify that
$\exp(\beta_0)=E[Y(1)]/E[Y(0)]$, which is the marginal causal RR. The potential infection outcome means $E[Y(1)]$ and $E[Y(0)]$ in the target population cannot be identified due to the study selection process without an additional restriction. Nevertheless, the estimating equation~\eqref{eq:ee-general-population} implies that  it may still be possible to identify $\beta_0$ without necessarily identifying $E[Y(0)]$ and $E[Y(1)]$. The following proposition indicates that the same is true when the data are subject to selection bias of certain structure.

\begin{proposition}\label{prop:selection_bias}
Under Assumptions \ref{assump:ident}-~\ref{assump:no-eff-mod}, the parameter $\beta_0$ satisfies
\begin{equation}
    E[V_0(A, Y, U, X;\beta_0)\mid S=1]=0.\label{eq:select_eq}
\end{equation}
\end{proposition}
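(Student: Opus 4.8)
The plan is to compute $E[V_0(A,Y,U,X;\beta_0)\mid U,X,S=1]$ directly by expanding the conditional expectation over $A$ and $Y$, and to show that the two terms corresponding to $A=1$ and $A=0$ cancel. Recall $V_0(A,Y,U,X;\beta) = (-1)^{1-A} Q(A,U,X) Y \exp(-\beta A)$ with $Q(A,U,X) = 1/P(A\mid U,X)$. First I would write
\[
E[V_0\mid U,X,S=1] = E\bigl[Q(1,U,X)\,AY\exp(-\beta_0)\mid U,X,S=1\bigr] - E\bigl[Q(0,U,X)\,(1-A)Y\mid U,X,S=1\bigr],
\]
so it suffices to show
\[
\exp(-\beta_0)\,E\bigl[AY\mid U,X,S=1\bigr]\,/\,P(A=1\mid U,X) = E\bigl[(1-A)Y\mid U,X,S=1\bigr]\,/\,P(A=0\mid U,X).
\]
The key is to re-express each side using Bayes' rule to pull the conditioning on $S=1$ out onto $P(S=1\mid \cdot)$ factors.

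The main step is the following observation. For each $a$, write $E[I(A=a)Y\mid U,X,S=1] = P(A=a, Y=1\mid U,X,S=1)$, and apply Bayes' rule conditional on $(U,X)$:
\[
P(A=a,Y=1\mid U,X,S=1) = \frac{P(S=1\mid A=a,Y=1,U,X)\,P(A=a,Y=1\mid U,X)}{P(S=1\mid U,X)}.
\]
Now invoke Assumption~\ref{assump:trt-indep-samp} ($S\indep A\mid Y,U,X$) to drop the dependence of $P(S=1\mid A=a,Y=1,U,X)$ on $a$, so it equals $P(S=1\mid Y=1,U,X)$, a factor that does not involve $a$. Hence
\[
\frac{P(A=a,Y=1\mid U,X,S=1)}{P(A=a\mid U,X)} = \frac{P(S=1\mid Y=1,U,X)}{P(S=1\mid U,X)}\cdot\frac{P(A=a,Y=1\mid U,X)}{P(A=a\mid U,X)} = \frac{P(S=1\mid Y=1,U,X)}{P(S=1\mid U,X)}\,P(Y=1\mid A=a,U,X).
\]
By Assumption~\ref{assump:no-eff-mod}, $P(Y=1\mid A=a,U,X) = \exp(\beta_0 a)\,g(U,X)$, so the right-hand side equals $\exp(\beta_0 a)$ times a quantity free of $a$. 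Substituting $a=1$ into the first term of $E[V_0\mid U,X,S=1]$ yields $\exp(-\beta_0)\cdot\exp(\beta_0)\cdot C(U,X) = C(U,X)$ where $C(U,X) = g(U,X)\,P(S=1\mid Y=1,U,X)/P(S=1\mid U,X)$; substituting $a=0$ into the second term yields $\exp(0)\cdot C(U,X) = C(U,X)$. The difference is zero, which is the claim.

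I expect the only subtlety — not really an obstacle — to be bookkeeping the conditioning events carefully: everything is conditioned on $(U,X)$ throughout, the Bayes inversions are on the joint law of $(A,Y,S)$ given $(U,X)$, and the positivity in Assumption~\ref{assump:ident}(c) together with the implicit requirement $P(S=1\mid Y=1,U,X)>0$ (so that the conditional expectation given $S=1$ is well-defined) guarantees no division by zero. Note Assumption~\ref{assump:ident}(b) (exchangeability) is not actually needed for this particular identity; it is the combination of treatment-independent sampling and the multiplicative no-effect-modification model that drives the cancellation. I would present the computation in the compact two-term form above and flag that the same algebra shows why a direct effect $A\to S$ would break the argument (the factor $P(S=1\mid A=a,Y=1,U,X)$ would then carry an $a$-dependence that does not factor through $\exp(\beta_0 a)$), motivating Section~\ref{sec:a-s-arrow}.
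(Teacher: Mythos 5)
Your proof is correct and follows essentially the same route as the paper's: both expand the conditional expectation over $a$, invoke Assumption~\ref{assump:trt-indep-samp} to remove the $a$-dependence of the selection probability, apply Bayes' rule given $(U,X)$, and use the multiplicative model of Assumption~\ref{assump:no-eff-mod} to cancel the $\exp(\pm\beta_0 a)$ factors so the $a=1$ and $a=0$ terms coincide. The only cosmetic difference is that the paper first conditions on $Y=1,S=1$ and drops $S=1$ outright, whereas you carry the factor $P(S=1\mid Y=1,U,X)/P(S=1\mid U,X)$ explicitly and let it cancel; your side remarks (exchangeability not needed for the identity, positivity of selection for well-definedness) are also accurate.
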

The proof of Proposition~\ref{prop:selection_bias} is in Appendix~\ref{append:oracle-ee}. From Proposition~\ref{prop:selection_bias}, the IPTW estimating function $V_0$  derived from the target population is also unbiased with respect to the study sample.

Under Assumptions~\ref{assump:ident}-\ref{assump:no-eff-mod}, one can estimate $\beta_0$ with $\hat{\beta}$ as the solution to
\begin{equation}\label{eq:propensity}
    \dfrac{1}{n}\sum_{i=1}^n(-1)^{1-A_i}\widehat{Q}(A_i, U_i, X_i)Y_i\exp(-\hat{\beta_0} A_i)=0,
\end{equation}

where $n$ is the size of the selected sample,  $\widehat{Q}(A_i, U_i, X_i)=\widehat{P}(A=A_i\mid U_i, X_i)$ is the estimated probability of have vaccination status $A=A_i$ given confounders $(U_i,X_i)$. The resulting estimator $$\hat\beta_0=\left[\sum_{i=1}^n\hat Q(A_i, U_i, X_i)A_iY_i\right]/\left[\sum_{i=1}^n\hat Q(A_i, U_i, X_i)(1-A_i)Y_i\right]$$
is essentially the IPTW estimator of marginal RR in~\citet{schnitzer2022estimands} assuming $(U_i, X_i)$'s are all observed.

However, $Q(A,U,X)$ cannot be estimated because $U$ is unobserved. Furthermore, even if $U$ were observed, $\widehat{Q}(A_i, U_i, X_i)$ may not be identified from the TND sample due to selection bias.
In the next section, we describe a new framework to account for unmeasured confounding in a TND setting, leveraging negative control exposure and outcome variables.

\subsection{Tackling unmeasured confounding bias leveraging negative controls \label{sec:bridge-approach}}

\subsubsection{Negative control exposure (NCE) and treatment confounding bridge function}
\label{sec:nce}

As shown in Figure~\ref{fig:dag-tnd}(e), suppose that one has observed a valid possibly vector-valued NCE, denoted as $Z$, which is a priori known to satisfy the following key independence conditions:
\begin{assumption}[NCE independence conditions]\label{assump:nce}
$Z\indep (Y,S)|A, U, X$. 
\end{assumption}
 Assumption~\ref{assump:nce} essentially states that any existing $Z-Y$ association conditional on $(X,A)$ in the target population must be a consequence of their respective association with $U$, therefore indicating the presence of confounding bias. Importantly, the NCE must a priori be known to have no causal effect on infection status~\citep{miao2018confounding}. Likewise, the association between $Z$ and $S$ conditional on $(X,A)$ is completely due to their respective association with $U$. Figure \ref{fig:dag-tnd}(e) presents a graphical illustration of an NCE that satisfies Assumption~\ref{assump:nce}.

In the Influenza VE setting, a candidate NCE can be vaccination status for the preceding year, or other vaccination status such as Tdap (Tetanus, Diphtheria, Pertussis) vaccine, as both are known to effectively provide no protection against the circulating flu strain in a given year. %little effect on the current-year flu infection.  
We now provide an intuitive description of our approach to leverage $Z$ as an imperfect proxy of $U$ for identification  despite being unable to directly observe $U$.

To illustrate the rationale behind identification, ignore selection bias for now and suppose that  $Q(A, U)=\alpha_0 + \alpha_1A + \alpha_2 U$, suppressing  measured confounders $X$. Although $U$ is unobserved, suppose further that $Z$ satisfies $E[Z|A, U] = \gamma_0 + \gamma_1A + \gamma_2U$. Then we have that \begin{align*}
Q(A, U) &= E[q(A, Z)|A, U],\qquad U = E[\tilde U(A, Z)|A, U],
\end{align*} where $\tilde U=(Z - \gamma_0 - \gamma_1A)/\gamma_2$. Replacing $U$ with $\tilde U$ in $Q(A, U)$, we get $q(A, Z)=\alpha_0 + \alpha_1A + \alpha_2\tilde U(A, Z)$, which does not depend on unmeasured confounder $U$ and can represent the inverse probability of vaccination as
 $Q(A, U)=E[q(A, Z)|A, U]$.
If all parameters of $q$ were known, it would naturally follow that the IPTW method in \eqref{eq:iptw} can be recovered by \[
E[Y(a)]=E\{I(A=a)E[q(A, Z)|A, U]Y\}\stackrel{A.\ref{assump:nce}}{=}E\left[I(A=a)q(A, Z)Y\right],
\]
  
Therefore, $\beta_0$ can be identified if the distribution of $(A,Y,Z)$ in the target population is available provided that parameters indexing $q$ can be identified.  

The above insight motivates the following assumption:

\begin{assumption}[treatment confounding bridge function]\label{assump:trt-bridge} There exists a function $q(A, Z, X)$ that satisfies, for every $a$, $u$ and $x$, \begin{equation}\label{eq:trt-bridge-defn}Q(A=a,U=u,X=x)=E\left[q(A, Z, X)|A=a, U=u, X=x\right]\end{equation}
\end{assumption}

The function $q$ that satisfies~(\ref{eq:trt-bridge-defn}) is called a treatment confounding bridge function, as it bridges the observed NCE with the unobserved propensity score~\citep{cui2020semiparametric}. Below we give two examples where the integral equation~\eqref{eq:trt-bridge-defn} can easily be solved and the treatment confounding bridge function $q$ admits a closed form solution.

\begin{example}
\label{eg:binary} (Binary $U$ and $Z$)  Suppose that $U$ is binary, and so is the NCE $Z$. For simplicity we suppress $X$. The integral equation~\eqref{eq:trt-bridge-defn}  can then be written as
$\sum_{z=0}^1 q(a, z)P(Z=z|U=u, A=a) =  {P(A=a|U=u)^{-1}}$, 

or equivalently,
$\sum_{z=0}^1\,p_{za.u}q(a, z)=1$
for each $a, u\in\{0, 1\}$, where  $p_{za.u}=P(Z=z, A=a|U=u)$. Therefore, the treatment confounding bridge function $q(a, z)$ solves the linear equation system
$$P_{Z,A\mid U}\begin{pmatrix}q(a,0)\\q(a,1)\end{pmatrix}=\begin{pmatrix}1\\1\end{pmatrix},\text{ where }
P_{Z,A\mid U}=\begin{pmatrix}
p_{0a.0} & p_{1a.0}\\
p_{0a.1} & p_{1a.1}
\end{pmatrix}.
$$

If the matrix $P_{Z,A\mid U}$ is invertible, then $q(a, z)$ has a closed form solution given by

\begin{equation}\label{eq:trt-bridge-binary}
    q(a, z) = \left[p_{1a.1}-p_{1a.0} + (p_{0a.0}-p_{0a.1}-p_{1a.1}+p_{1a.0})z\right]/\left(p_{0a.0}p_{1a.1}-p_{0a.1}p_{1a.0}\right).
\end{equation}
The result can be extended to the cases where $Z$ is polytomous as detailed in Appendix~\ref{append:eg-categorical}.

\end{example}

\begin{example}
\label{eq:contn}

(Continuous $U$ and $Z$) Suppose the unmeasured confounder $U$ and the NCE $Z$ are continuous. Further assume that 
\begin{align*}
    A|U, X  &\sim Bernoulli([1+\exp(-\mu_{0A} - \mu_{UA}U - \mu_{XA}X)]^{-1})\\
    Z|A, U, X &\sim N(\mu_{0Z} + \mu_{AZ}A +\mu_{UZ}U + \mu_{XZ}X, \sigma_{Z}^2).
\end{align*}

By the derivation in Appendix~\ref{append:trt-bridge-cont}, the treatment confounding bridge function $q(A, Z, X)$ is
\begin{equation}\label{eq:trt-bridge-cont}q(A, Z, X)= 1+ \exp\left[(-1)^A(\tau_0 + \tau_1A + \tau_2Z + \tau_3X)\right]\end{equation}
where 
$\tau_0 = \mu_{0A} - \dfrac{\mu_{UA}\mu_{0Z}}{\mu_{UZ}} - \dfrac{\sigma_z^2\mu_{UA}^2}{2\mu_{UZ}^2}$,
$\tau_1 = \dfrac{\sigma_z^2\mu_{UA}^2}{\mu_{UZ}^2} - \dfrac{\mu_{UA}\mu_{AZ}}{\mu_{UZ}},$
$\tau_2 = \mu_{UA} / \mu_{UZ},$ and
$\tau_3 = \mu_{XA} - \mu_{XZ}\mu_{UA}/\mu_{UZ}.$
\end{example}

Formally, Equation~\eqref{eq:trt-bridge-defn} defines a Fredholm integral equation of the first kind, with treatment confounding bridge function $q(A, Z, X)$ as its solution~\citep{cui2020semiparametric}. Heuristically, the existence of a treatment confounding bridge function requires that variation in $Z$ induced  by $U$ is sufficiently correlated with variation in $A$ induced by $U$. For instance, in Example~\ref{eg:binary}, existence of a treatment confounding bridge function requires that the matrix $P_{Z,A\mid U}$

is nonsingular. In Example~\ref{eq:contn}, the existence of a treatment confounding bridge function amounts to the condition $\mu_{UZ}\neq 0$, which again requires $Z\nindep U|A, X$. \citet{cui2020semiparametric}  provided formal conditions sufficient for the existence of the treatment confounding bridge function satisfying   Equation~\eqref{eq:trt-bridge-defn}. These conditions are reproduced for completeness in Appendix~\ref{append:trt-bridge-existence}.

Thus, under Assumption~\ref{assump:trt-bridge}, we propose to construct a new unbiased estimating function for $\beta_0$ by replacing $Q(A, U, X)$ with $q(Z,A,X)$ in $U_0(A, Y, U, X;\beta_0)$.

\begin{theorem}\label{thm:causal-rr-identifiability} (Moment restriction of $\beta_0$)
Under Assumptions \ref{assump:ident}-\ref{assump:trt-bridge}, we have that $$E[V_1(A, Y, Z, X;\beta_0)\mid S=1]=0$$ where
$V_1(A, Y, Z, X;\beta_0)=(-1)^{1-A}q(A, Z, X)Y\exp(-\beta_0 A).$
\end{theorem}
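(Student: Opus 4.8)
The plan is to reduce the claim to Proposition~\ref{prop:selection_bias} by showing that, conditional on $(U,X,S=1)$, one may substitute the unknown inverse propensity $Q(A,U,X)$ appearing in $V_0$ by the bridge function $q(A,Z,X)$ without altering the conditional mean. First I would apply the tower property, conditioning additionally on $(A,Y)$ inside the event $\{S=1\}$, and pull out the factor $(-1)^{1-A}Y\exp(-\beta_0 A)$, which is a fixed function of $(A,Y)$:
\begin{equation*}
E[V_1(A,Y,Z,X;\beta_0)\mid U,X,S=1] = E\Big[(-1)^{1-A}Y\exp(-\beta_0 A)\, E\big[q(A,Z,X)\mid A,Y,U,X,S=1\big]\;\Big|\;U,X,S=1\Big].
\end{equation*}

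The crux is evaluating the inner conditional expectation $E[q(A,Z,X)\mid A,Y,U,X,S=1]$, which I would do in two steps. First, Assumption~\ref{assump:nce} ($Z\indep(Y,S)\mid A,U,X$) gives the factorization $P(Z,Y,S=1\mid A,U,X)=P(Z\mid A,U,X)\,P(Y,S=1\mid A,U,X)$, so that $P(Z\in\cdot\mid A,Y,U,X,S=1)=P(Z\in\cdot\mid A,U,X)$; hence $E[q(A,Z,X)\mid A,Y,U,X,S=1]=E[q(A,Z,X)\mid A,U,X]$ almost surely. Second, the defining property of the treatment confounding bridge function (Assumption~\ref{assump:trt-bridge}, Equation~\eqref{eq:trt-bridge-defn}) yields $E[q(A,Z,X)\mid A,U,X]=Q(A,U,X)$ almost surely. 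Substituting back,
\begin{equation*}
E[V_1(A,Y,Z,X;\beta_0)\mid U,X,S=1] = E\big[(-1)^{1-A}Q(A,U,X)Y\exp(-\beta_0 A)\mid U,X,S=1\big] = E[V_0(A,Y,U,X;\beta_0)\mid U,X,S=1],
\end{equation*}
and the right-hand side is $0$ by Proposition~\ref{prop:selection_bias} under Assumptions~\ref{assump:ident}--\ref{assump:no-eff-mod}. This completes the argument.

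I expect the only delicate point to be the transfer of the NCE independence condition from the target population to the selected subpopulation $\{S=1\}$: namely, establishing $P(Z\in\cdot\mid A,Y,U,X,S=1)=P(Z\in\cdot\mid A,U,X)$ rather than merely $Z\indep Y\mid A,U,X$. This is precisely where the joint form of Assumption~\ref{assump:nce} (in $(Y,S)$ together, not $Y$ alone) is essential; the remainder is bookkeeping with the tower property and the bridge identity. A secondary, minor caveat is that Equation~\eqref{eq:trt-bridge-defn} is a pointwise identity over $(a,u,x)$, so evaluating it at the random vector $(A,U,X)$ is valid only almost surely, which is exactly what is needed for the conditional-expectation statement.
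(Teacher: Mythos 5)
Your proof is correct and follows essentially the same route as the paper's: the paper likewise combines the tower property, the NCE condition $Z\indep (Y,S)\mid A,U,X$, and the bridge identity of Assumption~\ref{assump:trt-bridge} to replace $q(A,Z,X)$ by $1/P(A\mid U,X)$, and then concludes via the computation underlying Proposition~\ref{prop:selection_bias}. The only cosmetic difference is ordering: the paper first strips off the conditioning on $S=1$ (and reduces to the $Y=1$ stratum) before substituting the bridge function in the target population, whereas you perform the substitution inside the selected sample and then invoke Proposition~\ref{prop:selection_bias} directly; both are valid.
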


The proof of Theorem~\ref{thm:causal-rr-identifiability} is in Appendix~\ref{append:ee-rr-unbiasedness}. In practice, if one can consistently estimate the treatment confounding bridge function  $ q(A, Z, X)$ with $\widehat q(A, Z, X)$, Theorem~\ref{thm:causal-rr-identifiability} suggests estimating $\beta_0$ by solving the estimating equation
\begin{equation}\label{eq:ee-risk-ratio}
     \dfrac{1}{n}\sum_{i=1}^n(-1)^{1-A_i}\widehat q(A_i, Z_i, X_i)Y_i\exp(-\beta_0 A_i)=0,
\end{equation}
which results in a closed form estimator
$$\widehat{\beta}_0 = \log\left(\dfrac{\sum\widehat{q}(A_i, Z_i, X_i)A_iY_i}{\sum\widehat{q}(A_i, Z_i, X_i)(1-A_i)Y_i}\right).$$

Importantly, although \eqref{eq:trt-bridge-defn} may not have a unique solution, any solution uniquely identifies the causal log RR $\beta_0$. 
The result in Theorem~\ref{thm:causal-rr-identifiability} cannot directly be applied in practice because the treatment confounding bridge function is not identifiable even if random samples from the target population were available -- solving~\eqref{eq:trt-bridge-defn} requires additional information about $U$ which is unobserved. For instance, in Example 1 one is unable to directly estimate $q(a, z)$ because $p_{za.u}$  in \eqref{eq:trt-bridge-binary} cannot directly be estimated from the observed data.

\subsubsection{Negative control outcome (NCO) for identification of treatment confounding bridge function }

For identification and estimation of $q$, we leverage negative control outcomes (NCO) to construct feasible estimating equations for the treatment confounding bridge function as in \citet{cui2020semiparametric}. Similar to NCEs, NCOs can be viewed as imperfect proxies of $U$. However, unlike NCEs, a valid NCO, denoted by $W$, is a measured covariate which is (i) known a priori not to be a causal effect of either the primary exposure $A$ nor the NCE $Z$; and (ii) is associated with $(A,Z)$ conditional on $X$ only to the extent that it is associated with $U$.

Formally, we make the following assumption.

\begin{assumption}\label{assump:nco}(NCO Independence Conditions).
\begin{multicols}{2}
(a) $W\indep A|U, X$;

(c) $S\indep Z|A, U, X, W, Y$.

(b) $W \indep Z|A, U, X, Y$;
\end{multicols}
\end{assumption} 

Assumptions~\ref{assump:nco}(a) and (b) formalize the requirement that neither the primary exposure nor NCE have direct effects on the NCO. Assumption~\ref{assump:nco}(c) complements Assumption~\ref{assump:nce} and states that conditioning on $W$ in addition to $(A, U, X, Y)$ does not alter the conditional independence of $Z$ with $S$. In flu VE studies, a candidate NCO can be an infection whose risk is not causally affected by either $A$ or $Z$.  For example, if the NCE is selected to be Tdap vaccination, then a potential NCO may be current-year respiratory syncytial virus infection, as its risk is unlikely to be affected by Influenza or Tdap vaccination. Recent outpatient visits for other acute illnesses can also serve as NCO, such as blepharitis, wrist/hand sprain, lipoma, ingrowing nail, etc.~\citep{leung2011herpes}. Figure~\ref{fig:dag-tnd}(e) illustrates an NCO $W$ that satisfies Assumptions~\ref{assump:nco}(a) and (b).

 Similar to \citet{cui2020semiparametric}, we leverage the availability of an NCO as an additional proxy to identify the treatment confounding bridge function. However, a complication arises due to lack of a random sample from the target population, a key requirement in the approach outlined in \citet{cui2020semiparametric}. In general, it is not possible to obtain sufficient information about neither the distribution of $W$ nor that of $U$ in the target population from the TND data without an additional structural assumption~\citep{bareinboim2012controlling}. In the following, we avoid imposing such an additional structural assumption by leveraging an important feature of infectious diseases such as Influenza and COVID-19; mainly that contracting such an infection is a rare event in most target populations of interest, and therefore information from the target population that is relevant for estimating the treatment confounding bridge function can be recovered from the test-negative control group. Formally, we make the following rare disease assumption.

\begin{assumption}[Rare infection]\label{assump:rare-disease}\text{There exist a small positive number $\delta>0$ such that }\begin{equation}\label{eq:rare-disease-assumption} 
    P(Y=1|A=a, W=w,U=u, X=x)\leq \delta,\qquad   \mbox{for almost every $a, w, u, x$}
\end{equation}
\end{assumption}Assumption~\ref{assump:rare-disease} states that infected subjects, whether vaccinated or not and regardless of their negative control outcomes, only constitute a small proportion of each $(U, X)$ stratum in the general population; specifically, the assumption implies that $\frac{1}{1-\delta}\leq \frac{P(A,Z|U=u, X=x,Y=0)}{P(A,Z|U=u, X=x)}\leq 1-\delta$. Thus, under Assumptions \ref{assump:trt-indep-samp}, \ref{assump:nce} and~\ref{assump:rare-disease}, $P(A=a,Z=z|U=u, X=x)\approx P(A=a,Z=z| U=u, X=x,Y=0,S=1)$ for all $a,z,x,u$. We now introduce a key property of the treatment confounding bridge function in Theorem~\ref{thm:trt-bridge-identification}, which is proved in Appendix~\ref{append:trt-bridge-identification}.
\begin{theorem}[Identification of the treatment confounding bridge function]\label{thm:trt-bridge-identification} Under Assumptions~\ref{assump:trt-indep-samp}, \ref{assump:nce}, \ref{assump:trt-bridge},
\ref{assump:nco}, and \ref{assump:rare-disease}, for $a=0, 1$ we have that\begin{align*}\label{eq:trt-bridge-approx}
    & \dfrac{(1-\delta)^3}{P(A=a|W, X, Y=0, S=1)} < E[q(a, Z, X)|W, A=a, X, Y=0, S=1]\\&\qquad\qquad < \dfrac{1}{(1-\delta)^3P(A=a|W, X, Y=0, S=1)}
\end{align*}
\end{theorem}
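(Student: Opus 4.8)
The plan is to reduce the two‑sided bound to a statement about a single scalar, extract an \emph{exact} ``backbone'' identity, and then absorb the remaining slack into a bounded number of elementary applications of the rare‑infection bound. Since, by the definition of conditional expectation, $E[q(a,Z,X)\mid W,A=a,X,Y=0,S=1]$ equals $E[I(A=a)q(a,Z,X)\mid W,X,Y=0,S=1]$ divided by $P(A=a\mid W,X,Y=0,S=1)$, the displayed inequalities are equivalent to the single claim that $T(W,X):=E[I(A=a)q(a,Z,X)\mid W,X,Y=0,S=1]$ lies strictly between $(1-\delta)^3$ and $(1-\delta)^{-3}$ — i.e.\ that $T$ stays within a multiplicative factor of its target‑population value, which equals $1$. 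Throughout I would use that any treatment confounding bridge function obeys $E[q(a,Z,X)\mid A=a,U,X]=Q(A=a,U,X)=1/P(A=a\mid U,X)$, hence $E[I(A=a)q(a,Z,X)\mid U,X]=1$ in the target population; I would also assume $q\ge 0$ almost surely so that the inequalities pass through expectations (if $q$ is not sign‑definite, carry $|q|$ through the intermediate estimates).

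The backbone identity is that, conditionally on $U$, selection and the event $Y=0$ are \emph{harmless} for the bridge‑weighted moment: $E[q(a,Z,X)\mid U,W,A=a,X,Y=0,S=1]=1/P(A=a\mid U,X)$ exactly. To see this I would peel the extra conditioning off the conditional law of $Z$ — Assumption~\ref{assump:nco}(c) removes $S=1$, Assumption~\ref{assump:nco}(b) removes $W$, and Assumption~\ref{assump:nce} removes $Y=0$ — so the inner $Z$‑integral collapses to $E[q(a,Z,X)\mid A=a,U,X]$, to which the treatment confounding bridge equation~\eqref{eq:trt-bridge-defn} applies verbatim. Consequently all of the approximation error is confined to replacing the latent $U$ by the proxy $W$: a Radon--Nikodym/Bayes rewrite gives $T(W,X)=E[\psi(a;U,W)\eta(a;U,W)\mid W,X]\big/E\big[\sum_{a'}\psi(a';U,W)\eta(a';U,W)\,P(A=a'\mid U,X)\mid W,X\big]$, where $\eta(a';u,w)=P(Y=0\mid A=a',U=u,W=w,X)$ and $\psi(a';u,w)=P(S=1\mid A=a',Y=0,U=u,W=w,X)$, and I have used Assumption~\ref{assump:nco}(a) to strip $W$ from $P(A=a'\mid U,W,X)$. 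In the target population $\psi$ and $\eta$ both cancel and this collapses to $1$; in general Assumption~\ref{assump:rare-disease} forces every $\eta$‑factor into $[1-\delta,1]$ and, after a further Bayes rewrite of $\psi$, forces each ratio $\psi(a;u,w)/\psi(a';u,w)$ into $[(1-\delta)^2,(1-\delta)^{-2}]$, and the three resulting multiplicative distortions compound to give the stated bounds.

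The main obstacle is precisely the bookkeeping of conditional independences that do \emph{not} survive conditioning on the collider‑like events $Y=0$ and $S=1$: $Y$ is a common child of $A$ and $W$, and $S$ is a common child of $(Y,U,X)$ and of $W$, so neither $W\indep A\mid U,X,Y$ nor $S\indep A\mid U,W,X,Y$ is available from the stated assumptions, and hence $\psi$ and $\eta$ genuinely depend on $a'$. The crux is to show that each such quantity differs from an $a'$‑free reference only by a multiplicative factor in $[1-\delta,(1-\delta)^{-1}]$ — this is where Assumption~\ref{assump:rare-disease} gets used three times, once for the shift induced in the law of $(A,Z)$ given $(U,X)$, once for the shift in the law of $W$ given $(U,X)$, and once for the shift in the selection probability — and then to check that the three sandwich estimates multiply rather than interact. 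I would present the backbone identity and the target‑population identity $E[I(A=a)q(a,Z,X)\mid W,X]=1$ (an immediate consequence of the bridge equation together with Assumptions~\ref{assump:nce} and~\ref{assump:nco}) as short lemmas first, so that the rare‑infection perturbation argument — the genuinely delicate part — is isolated and reduced to these three scalar inequalities.
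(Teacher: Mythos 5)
Your skeleton is essentially the paper's: your ``backbone identity'' (peel $S=1$ off the conditional law of $Z$ via Assumption~\ref{assump:nco}(c), peel $W$ off via \ref{assump:nco}(b), peel $Y=0$ off via \ref{assump:nce}, then apply the bridge equation~\eqref{eq:trt-bridge-defn}) is exactly the paper's Lemma~\ref{lemma:rare-disease-approximation}(d) combined with Assumption~\ref{assump:trt-bridge}, and your final Bayes integration over $u$ against $f(u\mid W,X,Y=0,S=1)$ is the same closing step. The genuine gap is precisely at the point you call the crux: the claim that a Bayes rewrite plus Assumption~\ref{assump:rare-disease} forces $\psi(a;u,w)/\psi(a';u,w)$, with $\psi(a;u,w)=P(S=1\mid A=a,Y=0,U=u,W=w,X)$, into $[(1-\delta)^2,(1-\delta)^{-2}]$. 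The rare-infection assumption only bounds probabilities of the event $Y=1$; it gives no control over how the selection probability varies with $a$ at fixed $(W,U,X,Y=0)$, and Assumption~\ref{assump:trt-indep-samp} constrains that dependence only after marginalizing $W$ out. Concretely (suppress $U,X$, take $Y$ independent of $(A,W,S)$ with $P(Y=1)=\delta/2$, $W\indep A$, $P(W=1)=1/2$, and $Z$ pure noise so Assumptions \ref{assump:nce}, \ref{assump:trt-bridge}, \ref{assump:nco}(b),(c) hold trivially): set $P(S=1\mid A=1,W=1,Y=0)=0.9$, $P(S=1\mid A=1,W=0,Y=0)=0.1$, $P(S=1\mid A=0,W=1,Y=0)=0.1$, $P(S=1\mid A=0,W=0,Y=0)=0.9$. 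Then Assumptions \ref{assump:trt-indep-samp}, \ref{assump:nco}(a) and \ref{assump:rare-disease} all hold, yet $\psi(1;w)/\psi(0;w)=9$, so your sandwich on the $\psi$-ratio is not a consequence of the stated assumptions and the compounding argument does not close.

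What the paper does at the parallel point is to treat the $S=1$ conditioning as \emph{exactly} removable: Lemma~\ref{lemma:rare-disease-approximation}(b)--(c) invoke $P(A=a\mid U,X,Y=0)=P(A=a\mid U,X,Y=0,S=1)$ and $P(A=a\mid W,U,X,Y=0)=P(A=a\mid W,U,X,Y=0,S=1)$ (the latter attributed to Assumptions~\ref{assump:trt-indep-samp} and \ref{assump:nco}(c); it is in effect the independence $S\indep A\mid W,U,X,Y$ encoded by the DAG of Figure~\ref{fig:dag-tnd}(e)), so that the only approximations are ratios of $P(Y=0\mid\cdot)$ terms, each controlled by $\delta$, giving the three factors of $(1-\delta)$. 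If you grant that same independence, $\psi(a;u,w)$ is free of $a$, your own ratio formula for $T(W,X)$ closes immediately, and in fact yields a factor sharper than $(1-\delta)^{\pm3}$ since only the $\eta$-terms contribute; without it, no rare-disease bound can substitute (in the example above the displayed two-sided bound itself fails), so your stated program of avoiding that independence cannot succeed. Two small additional points: your reduction to $T(W,X)$ and the ratio-of-expectations rewrite are correct, and your remark that positivity of $q$ is needed to push bounds through expectations is a real (if minor) caveat that the paper also leaves implicit.
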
Thus, provided $\delta\approx 0$, Theorem~\ref{thm:trt-bridge-identification} suggests that an approximation to the treatment confounding bridge function can be obtained by solving the following integral equation involving only observed data
\begin{equation}\label{eq:trt-bridge-identification}
    E[q^*(A, Z, X)|W, A=a, X, Y=0, S=1] = 1/P(A=a|W, X, Y=0, S=1).
\end{equation}
as long as a solution exists. Accordingly, hereafter suppose that the following assumption holds.
 
\begin{assumption}[Existence of a unique solution to \eqref{eq:trt-bridge-identification}]\label{assump:solvability}
There exists a unique square-integrable function $q^*(A, Z, X)$ that satisfies \eqref{eq:trt-bridge-identification}.
\end{assumption}

Heuristically, uniqueness of a solution to~\eqref{eq:trt-bridge-identification} requires that variation in $W$ is sufficiently informative about variation in $Z$, in the sense that there is no source of variation in $W$ that is not associated with a corresponding source of variation in $Z$. See Appendix~\ref{append:completeness} for further elaboration of completeness conditions and \citet{newey2003instrumental,d2011completeness} for related use of the assumption in the literature. Below we briefly illustrate Assumption~\ref{assump:solvability}  in the examples of Section~\ref{sec:nce}. \begin{newexample}{1} Suppose $U$ and $Z$ are both binary, and a binary NCO $W$ is also observed. Let $p'_{za.w}=P(Z=z,A=a|W=w, Y=0, S=1)$ for $z,a,w\in\{0, 1\}$, then solving~\eqref{eq:trt-bridge-identification} is equivalent to solving the system of linear equations \begin{align*}
p'_{0a.0}q^*(a,0)  + p'_{1a.0}q^*(a,1) = 1;\qquad
p'_{0a.1}q^*(a,0) + p'_{1a.1}q^*(a,1) = 1,
\end{align*} giving $q^*(a, z) = \left[p'_{1a.1}-p'_{1a.0} + (p'_{0a.0}-p'_{0a.1}-p'_{1a.1}+p'_{1a.0})z\right]/\left(p'_{0a.0}p'_{1a.1}-p'_{0a.1}p'_{1a.0}\right)$.
Note that the probabilities $p'_{za.w}$ can all be estimated from the study sample.
\end{newexample}

We emphasize that the solution to Equation~\eqref{eq:trt-bridge-identification} is ultimately an approximation to the (non-identifiable) treatment confounding bridge function in the target population. The accuracy of this approximation relies on the extent to which the rare disease assumption holds in the target population of interest. We study the potential bias resulting from a departure of this key assumption in the Appendix~\ref{append:rare-disease}. We further observe that, under the null hypothesis of no vaccine effectiveness, or if $W$ has no direct effects on $Y$ or $S$, then the function $q^*(A, Z, X)$ equals the treatment confounding bridge exactly, even for a non-rare disease outcome, as stated in the corollary below. We prove Corollary~\ref{cor:null-preserving} in Appendix~\ref{append:trt-bridge-identification}. 

\begin{corollary}\label{cor:null-preserving}
Under the Assumptions of Theorem~\ref{thm:causal-rr-identifiability} and Assumption~\ref{assump:solvability}, if there is no vaccine effect against infection, such that $Y\indep A| U, X$.
\end{corollary}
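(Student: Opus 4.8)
To prove Corollary~\ref{cor:null-preserving}, the plan is to show that the (non-identified) target-population treatment confounding bridge function $q$ defined through~\eqref{eq:trt-bridge-defn} itself solves the feasible equation~\eqref{eq:trt-bridge-identification} \emph{exactly} once $Y\indep A\mid U,X$, so that, in contrast to Theorem~\ref{thm:trt-bridge-identification}, no rare-disease approximation is needed. I would establish this by evaluating $E[q(A,Z,X)\mid W,A=a,X,Y=0,S=1]$ directly: first condition on the latent $U$; then collapse the inner conditional law of $Z$ given $(U,W,A,X,Y=0,S=1)$ down to its target-population form $Z\mid U,A,X$ using the negative-control independence conditions; then invoke the defining equation~\eqref{eq:trt-bridge-defn}; and finally reduce the remaining average over $U$ using the null hypothesis together with treatment-independent sampling.

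In detail, I would carry out the following steps. (i) By the tower property, $E[q(A,Z,X)\mid W,A=a,X,Y=0,S=1]=E\bigl\{E[q(A,Z,X)\mid U,W,A=a,X,Y=0,S=1]\bigm| W,A=a,X,Y=0,S=1\bigr\}$. (ii) Show the inner conditional expectation does not involve $(W,Y,S)$: successively drop $S$ using Assumption~\ref{assump:nco}(c), then $W$ using Assumption~\ref{assump:nco}(b), then $Y=0$ using Assumption~\ref{assump:nce}, which yields $E[q(A,Z,X)\mid U,W,A=a,X,Y=0,S=1]=E[q(A,Z,X)\mid A=a,U,X]$; this step is exact and uses none of the null. (iii) By~\eqref{eq:trt-bridge-defn} the right-hand side equals $Q(A=a,U,X)=1/P(A=a\mid U,X)$. (iv) Under the null $Y\indep A\mid U,X$ one has $P(A=a\mid U,X)=P(A=a\mid U,X,Y=0)$, and Assumption~\ref{assump:trt-indep-samp} gives $P(A=a\mid U,X,Y=0)=P(A=a\mid U,X,Y=0,S=1)$; combining (i)--(iii) we are left with $E\bigl[\{P(A=a\mid U,X,Y=0,S=1)\}^{-1}\bigm| W,A=a,X,Y=0,S=1\bigr]$. (v) Apply Bayes' rule to $P(U\mid W,A=a,X,Y=0,S=1)$ inside the stratum $\{Y=0,S=1\}$ and use that, restricted to this stratum, $A\indep W\mid U,X$; the propensity scores cancel and the expectation telescopes to $\{P(A=a\mid W,X,Y=0,S=1)\}^{-1}\int dP(U\mid W,X,Y=0,S=1)=1/P(A=a\mid W,X,Y=0,S=1)$, which is the asserted identity; Assumption~\ref{assump:solvability} then also forces $q=q^{*}$.

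I expect step (v) to be the main obstacle, specifically verifying that $A\indep W\mid U,X$ holds within the test-negative control stratum $\{Y=0,S=1\}$, equivalently $P(A=a\mid U,W,X,Y=0,S=1)=P(A=a\mid U,X)$; this is precisely what makes the Bayes step telescope. It is here that the null is essential: with the $A\to Y$ arrow removed, $A$ has no descendants, so in Figure~\ref{fig:dag-tnd}(e) every path from $A$ to $W$ meets $U$ or $X$ as a non-collider and is blocked by conditioning on $(U,X)$, while further conditioning on the colliders $Y$ and $S$ cannot open any such path. This also explains why the statement fails in Figure~\ref{fig:dag-tnd}(f): the extra edge $A\to S$ makes $S$ a collider on $A\to S\leftarrow W$, so conditioning on $S=1$ opens a back-door path and $A\nindep W\mid U,X,Y,S$. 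A secondary but routine point is step (iv) --- one must check that the null renders the event $\{Y=0\}$ inert for the propensity score; equivalently, retracing the proof of Theorem~\ref{thm:trt-bridge-identification}, every place that invoked the rare-disease bound on a ratio such as $P(A,Z\mid U,X,Y=0,S=1)/P(A,Z\mid U,X)$ becomes an exact identity under $Y\indep A\mid U,X$ via Assumptions~\ref{assump:nce} and~\ref{assump:trt-indep-samp}, collapsing the two-sided bound of Theorem~\ref{thm:trt-bridge-identification} to the equality claimed in the corollary. The remaining manipulations --- the tower property, the chain of conditional-independence reductions, and Bayes' rule --- are routine.
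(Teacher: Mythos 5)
Your proposal is correct and follows essentially the same route as the paper's proof: condition on $U$, reduce the conditional law of $Z$ to $f(z\mid A,U,X)$ via Assumptions~\ref{assump:nce} and \ref{assump:nco} (the paper's Lemma~\ref{lemma:rare-disease-approximation}(d)), invoke the bridge equation~\eqref{eq:trt-bridge-defn}, and then apply Bayes' rule in the stratum $\{Y=0,S=1\}$ where the null makes the propensity-score factors cancel. The only difference is cosmetic — you convert $1/P(A=a\mid U,X)$ to the within-stratum propensity before the Bayes step, while the paper instead shows the Bayes numerator $P(A=a\mid U,W,X,Y=0,S=1)$ equals $P(A=a\mid U,X)$ under the null and Assumptions~\ref{assump:trt-indep-samp}, \ref{assump:nco} — and your within-stratum independence $A\indep W\mid U,X,Y=0,S=1$ is justified at the same (graphical) level of rigor as the corresponding step in the paper.
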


From Theorem~\ref{thm:trt-bridge-identification}, we immediately have the following corollary which provides a basis for estimation of $q^*(A, Z, X)$ from the observed TND  sample. 
\begin{corollary}\label{cor:trt-bridge-ee}
Under the conditions listed in Theorem~\ref{thm:trt-bridge-identification}, for any function $m(W, A, X)$, the solution $q^*(A, Z, X)$ to Equation~\eqref{eq:trt-bridge-identification} also solves the population moment equation 
\begin{equation}\label{eq:trt-bridge-ee-unbiasedness}
    E\left[m(W, A, X)q^*(A, Z, X) - m(W, 1, X) - m(W, 0, A)|Y=0, S=1\right]=0.
\end{equation}
\end{corollary}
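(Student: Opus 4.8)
The plan is to obtain the feasible moment equation~\eqref{eq:trt-bridge-ee-unbiasedness} directly from the defining integral equation~\eqref{eq:trt-bridge-identification} by two applications of the law of iterated expectations, using crucially that $A$ is binary. All expectations below are taken within the event $\{Y=0,S=1\}$, which I suppress where convenient; the two subtracted terms in~\eqref{eq:trt-bridge-ee-unbiasedness} are read as $m(W,1,X)$ and $m(W,0,X)$.

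First I would expand the leading term $E[m(W,A,X)q^*(A,Z,X)\mid Y=0,S=1]$ by conditioning on $(W,A,X)$. By the tower property this equals $E\bigl[m(W,A,X)\,E[q^*(A,Z,X)\mid W,A,X,Y=0,S=1]\mid Y=0,S=1\bigr]$. Since the inner conditioning event fixes the realized value of $A$, Assumption~\ref{assump:solvability} together with Equation~\eqref{eq:trt-bridge-identification} gives $E[q^*(A,Z,X)\mid W,A,X,Y=0,S=1]=1/P(A\mid W,X,Y=0,S=1)$, the reciprocal of the conditional probability of the observed treatment level. Hence the leading term becomes $E\bigl[m(W,A,X)/P(A\mid W,X,Y=0,S=1)\mid Y=0,S=1\bigr]$.

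Next I would condition on $(W,X)$ and apply the tower property once more. Because $A\in\{0,1\}$, the inner conditional expectation is a two-term sum, $\sum_{a=0}^{1}P(A=a\mid W,X,Y=0,S=1)\cdot m(W,a,X)/P(A=a\mid W,X,Y=0,S=1)=m(W,0,X)+m(W,1,X)$, where the propensity factors cancel (each is strictly positive within the selected stratum, so no division by zero occurs). Therefore $E[m(W,A,X)q^*(A,Z,X)\mid Y=0,S=1]=E[m(W,1,X)+m(W,0,X)\mid Y=0,S=1]$, which cancels exactly the two subtracted terms in~\eqref{eq:trt-bridge-ee-unbiasedness}, establishing the identity.

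The argument contains no genuinely hard step; the only points requiring care are (i) distinguishing conditioning on the random variable $A$ from conditioning on a fixed level $A=a$ when invoking~\eqref{eq:trt-bridge-identification}, and (ii) checking that the conditional expectations and the division are well defined — the former follows from square-integrability of $q^*$ in Assumption~\ref{assump:solvability} plus a mild integrability requirement on $m$, and the latter from positivity of $P(A=a\mid W,X,Y=0,S=1)$ in the test-negative sample, which is already implicit in the well-posedness of~\eqref{eq:trt-bridge-identification}. The resulting identity is the GMM-type moment condition that can be used to estimate $q^*$ from the observed test-negative data.
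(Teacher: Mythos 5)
Your proposal is correct and follows essentially the same argument as the paper's proof: apply the tower property conditioning on $(W,A,X)$ to invoke Equation~\eqref{eq:trt-bridge-identification}, then condition on $(W,X)$ so that the sum over the binary $A$ cancels the propensity factors and yields $m(W,1,X)+m(W,0,X)$, which cancels the subtracted terms (and, as you note, the term $m(W,0,A)$ in the statement is a typo for $m(W,0,X)$). No gaps; your added remarks on integrability and positivity are minor housekeeping the paper leaves implicit.
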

We prove Corollary~\ref{cor:trt-bridge-ee} in Appendix~\ref{append:ee-trt-bridge}. In practical situations where a parametric model $q^*(A, Z, X;\tau)$ for the treatment confounding bridge function might be appropriate, where $\tau$ is an unknown finite dimensional parameter indexing the model, Corollary~\ref{cor:trt-bridge-ee} suggests one can estimate $\tau$ by solving the estimating equation
\begin{equation}\label{eq:ee-trt-bridge}\dfrac{1}{n}\sum_{i=1}^n(1-Y_i)[m(W, A, X)q(A, Z, X;\tau)-m(W, 1, X) - m(W, 0, X)]=0,\end{equation}where $m(W, A, X)$ is a user-specified function whose dimension is no smaller than $\tau$'s. 

\begin{newexample}{1$'$}
If $Z$ and $W$ are both binary, rather than solving the system of equations implied by~\eqref{eq:trt-bridge-identification}, one can instead  specify a saturated model for the treatment confounding bridge function:\begin{equation}\label{eq:trt-bridge-model-saturated}
    q^*(A, Z;\tau) =\tau_0 + \tau_1Z+\tau_2A + \tau_3ZA 
\end{equation}and estimate $\tau=(\tau_0,\tau_1,\tau_2,\tau_3)^T$ by solving \eqref{eq:ee-trt-bridge} with $m(W, A)=(1, W, A, WA)^T$. Extension to $Z$ and $X$ with multiple categories is straightforward.
\end{newexample}

\begin{newexample}{2}
In case of continuous $(U, X, Z)$, result \eqref{eq:trt-bridge-cont} suggests the model
\begin{equation}\label{eq:trt-bridge-model-continuous}
    q^*(A, Z, X;\tau) = 1+\exp\left[(-1)^A (\tau_0 + \tau_1 A + \tau_2Z + \tau_3X)\right].
\end{equation}
If a univariate NCO $W$ is available, we may solve \eqref{eq:ee-trt-bridge} with $m(W, A, X)=(1, W, A, X)^T$.
\end{newexample}

\subsection{Estimation and Inference}
\label{sec:sum}
In the previous sections, we have defined the causal parameter of interest $\beta_0$ as stratum-specific log risk ratio, introduced the treatment confounding bridge function as a key ingredient to identification of $\beta_0$, and presented a strategy to estimate the  treatment confounding bridge function leveraging an available NCO. We summarize the steps of our estimation framework in Algorithm~\ref{alg:nc-tnd} and present the large-sample properties of the resulting estimator $(\widehat\beta,\widehat\tau)$ in Theorem~\ref{thm:ee-joint}.

\begin{algorithm}[!htbp]
\caption{\label{alg:nc-tnd}Negative control method to estimate vaccine effectiveness from a test-negative design} %\textcolor{red}{[TODO: R package]}}
\begin{algorithmic}[1]
\State Identify the variables in the data according to Figure~\ref{fig:dag-tnd}(e), in particular the NCEs and NCOs.
\State Estimate the treatment confounding bridge function by solving Equation~\eqref{eq:ee-trt-bridge} with a suitable parametric model $q^*(A, Z, X;\tau)$ and a user-specified function $m(W, A, X)$. Write $\widehat\tau$ as the resulting estimate of $\tau$.
\State  Estimate $\beta_0$ by solving Equation~(\ref{eq:ee-risk-ratio}) with $\widehat{q}(A, Z, X) = q^*(A, Z, X;\widehat\tau)$. The resulting estimator of $\beta_0$ is
    \begin{equation}\label{eq:rr-hat}\widehat\beta = \log\left(\dfrac{\sum q^*(A_i, Z_i, X_i;\widehat\tau)A_iY_i}{\sum q^*(A_i, Z_i, X_i;\widehat\tau)(1-A_i)Y_i}\right);
\end{equation}
 The estimated vaccine effectiveness is 
    $\widehat{VE} = 1 - \exp(\widehat\beta).$
\end{algorithmic}
\end{algorithm}

\begin{theorem}[Inference based on $(\widehat\beta,\widehat\tau)$]\label{thm:ee-joint}
Under Assumptions \ref{assump:ident}-\ref{assump:solvability} and suitable regularity conditions provided in Appendix~\ref{append:ee-joint} , the estimator $(\widehat\beta, \widehat\tau)$ in Algorithm~\ref{alg:nc-tnd}, or equivalently, the solution to the estimating equation $\dfrac{1}{n}\sum_{i=1}^n G_i(\beta, \tau)=0$
is regular and asymptotically linear with the i-th influence function
$$IF_i(\beta, \tau) = -\left[\Omega(\beta, \tau)^T\Omega(\beta, \tau)\right]^{-1}\Omega(\beta, \tau)^TG_i(\beta, \tau),$$
where
$$G_i(\beta, \tau) = \left(\begin{array}{c}  {(-1)}^{1-A_i}q^*(A_i, Z_i, X_i; \tau)Y_i\exp(-\beta A_i)\\
(1-Y_i)\left[m(W_i, A_i, X_i)q^*(A_i, Z_i, X_i;\tau) - m(W_i, 1, X_i) - m(W_i, 0, X_i)\right]\end{array}\right)$$
and $\Omega(\beta, \tau)=\left(E\left[\partial G_i(\beta, \tau)/\partial \beta^T\right], E\left[\partial G_i(\beta, \tau)/\partial \tau^T\right]\right).$\end{theorem}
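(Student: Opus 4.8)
The plan is to read Theorem~\ref{thm:ee-joint} as the standard large-sample result for a (possibly over-identified) $Z$-estimator, applied to the stacked estimating function $G_i(\beta,\tau)$. The two blocks are bundled on purpose: the lower block identifies the bridge parameter $\tau$ and does not involve $\beta$, while the upper block identifies $\beta$ through the plugged-in $q^*(\cdot;\widehat\tau)$, so that the sampling error of the first-stage $\widehat\tau$ is automatically propagated into $\widehat\beta$ by treating both as one joint estimating equation. Concretely I would proceed in three steps: (i) verify that the population moment equation holds at the truth; (ii) deduce consistency from uniform convergence plus identification; (iii) obtain the asymptotic-linearity expansion by a one-term Taylor argument.

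For step (i), the lower block is handled by Corollary~\ref{cor:trt-bridge-ee}: since every expectation in the TND sample is conditional on $S=1$ and $1-Y_i=I(Y_i=0)$, we get $E[(1-Y_i)\{m(W_i,A_i,X_i)q^*(A_i,Z_i,X_i;\tau_0)-m(W_i,1,X_i)-m(W_i,0,X_i)\}\mid S=1] = P(Y=0\mid S=1)\,E[\,\cdot\mid Y=0,S=1] = 0$ when the working model is correctly specified with true value $\tau_0$ (existence and uniqueness of such a $\tau_0$ being exactly Assumptions~\ref{assump:nco} and~\ref{assump:solvability}). For the upper block, Theorem~\ref{thm:causal-rr-identifiability} gives $E[V_1(A,Y,Z,X;\beta_0)\mid U,X,S=1]=0$ with the genuine bridge function $q$; multiplying by the $X$-measurable weight $c(X)$ and averaging over $(U,X)\mid S=1$ yields $E[G_i^{(1)}(\beta_0,\tau_0)\mid S=1]=0$, once we use that under Assumption~\ref{assump:rare-disease} the feasible solution $q^*$ agrees with $q$ up to $O(\delta)$ (Theorem~\ref{thm:trt-bridge-identification}) and exactly under the sharp null (Corollary~\ref{cor:null-preserving}); I would treat this approximation as exact for the asymptotics and point to Appendix~\ref{append:rare-disease} for the resulting bias. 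Combined with the remark after Theorem~\ref{thm:causal-rr-identifiability} that, for fixed $c(\cdot)$ and a given bridge function, the upper equation pins down $\beta_0$ uniquely, this makes $(\beta_0,\tau_0)$ the unique zero of $\theta\mapsto E[G_i(\theta)\mid S=1]$.

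For steps (ii)--(iii), I would invoke the usual machinery under the regularity conditions deferred to Appendix~\ref{append:ee-joint}: a compact parameter space, smoothness of $\tau\mapsto q^*(A,Z,X;\tau)$ and of $\beta\mapsto\exp(-\beta A)$, and envelope/domination conditions giving a uniform law of large numbers for both $\frac1n\sum_i G_i(\theta)$ and $\frac1n\sum_i \partial G_i(\theta)/\partial\theta^T$ on a neighborhood of $\theta_0$. Uniform convergence plus the identification in step (i) gives $(\widehat\beta,\widehat\tau)\to(\beta_0,\tau_0)$ in probability (standard $Z$-estimator consistency when $\dim m=\dim\tau$, or standard GMM consistency with identity weight when $\dim m>\dim\tau$ and $\widehat\tau$ minimizes $\|\frac1n\sum_i G_i\|$). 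A one-term Taylor expansion of the (GMM first-order) condition about $\theta_0$, together with $\frac1n\sum_i\partial G_i/\partial\theta^T\to\Omega(\beta_0,\tau_0)$ and $\sqrt n$-consistency, yields $\sqrt n\{(\widehat\beta,\widehat\tau)-(\beta_0,\tau_0)\} = -\Omega(\beta_0,\tau_0)^+\frac1{\sqrt n}\sum_i G_i(\beta_0,\tau_0) + o_p(1)$, where full column rank of $\Omega$ makes $(\Omega^T\Omega)^{-1}\Omega^T=\Omega^+$ and unifies the just- and over-identified cases; regularity of the expansion under local perturbations then follows from the smoothness of $G_i$ in $\theta$, so $(\widehat\beta,\widehat\tau)$ is RAL with influence function $IF_i=-\Omega(\beta_0,\tau_0)^+G_i(\beta_0,\tau_0)$.

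The main obstacle is not the $M$-estimation bookkeeping but the verification of the regularity conditions, which have real content here. First, the block structure $E[\partial G_i^{(2)}/\partial\beta]=0$ makes $\Omega$ block-triangular, reflecting the two-stage nature of Algorithm~\ref{alg:nc-tnd}; one must check that $\Omega$ has full column rank (this is where the dimension requirement $\dim m\ge\dim\tau$ enters) so that the cross term $\Omega_{12}=E[\partial G_i^{(1)}/\partial\tau^T]$ is genuinely carried through and appears as a first-stage correction in the influence function of $\widehat\beta$. Second, the working bridge $q^*(A,Z,X;\widehat\tau)$ can be unbounded (for instance the exponential form in Example~\ref{eq:contn}), so envelope conditions such as $E\sup_{\tau}\|m(W,A,X)q^*(A,Z,X;\tau)\|^2<\infty$ over a neighborhood of $\tau_0$ are substantive restrictions rather than formalities, and the completeness-type conditions backing Assumption~\ref{assump:solvability} are what ultimately guarantee $\Omega$ is well-behaved. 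Once these are in hand, the derivation of the stated influence function is routine.
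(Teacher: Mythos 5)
Your proposal is correct and follows essentially the same route as the paper, which simply invokes standard estimating-equation (Z-estimator) asymptotics in the style of van der Vaart's Theorem 5.21 for the stacked moment function $G_i(\beta,\tau)$, with Lipschitz/differentiability/moment conditions and nonsingularity of $\Omega$ relegated to Appendix~\ref{append:ee-joint}. The only minor differences are that the paper \emph{assumes} consistency and approximate zero of the empirical equation as regularity conditions (rather than deriving consistency from a uniform LLN plus identification as you do), and it is slightly more careful about the estimand: the limit is formally $(\beta_0^*,\tau_0^*)$ defined through the feasible bridge $q^*$, with the $\beta_0^*-\beta_0$ discrepancy bounded separately under the rare-disease condition in Appendix~\ref{append:rare-disease}, exactly the caveat you flag.
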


    The proof follows from standard estimating equation theory (See~\citet{van2000asymptotic} Theorem 5.21). An immediate consequence of Theorem~\ref{thm:ee-joint} is that we may estimate the variance-covariance matrix of $(\widehat\beta,\widehat\tau)$ with
\begin{equation}\label{eq:sandwich}
    \widehat\Sigma_n = \left[\widehat\Omega(\widehat\beta, \widehat\tau)^T\widehat\Omega(\widehat\beta, \widehat\tau)\right]^{-1}\widehat\Omega(\widehat\beta, \widehat\tau)^T\widehat{Var}(G_i(\widehat\beta,\widehat\tau))\widehat\Omega(\widehat\beta, \widehat\tau)^T\left[\widehat\Omega(\widehat\beta, \widehat\tau)^T\widehat\Omega(\widehat\beta, \widehat\tau)\right]^{-1}/n,
\end{equation}
where $
\widehat\Omega(\beta,\tau)=\left(\hat E\left[\partial G_i(\beta, \tau)/\partial \beta^T\bigg|_{\beta = \hat\beta,\tau = \hat\tau}\right], \hat E\left[\partial G_i(\beta, \tau)/\partial \tau^T\bigg|_{\beta = \hat\beta,\tau = \hat\tau}\right]\right).$ Here $\hat E$ and $\widehat{Var}$ denote the expectation and variance with respect to the empirical distribution, respectively. A two-sided $\alpha$-level Wald-type confidence interval of VE can then be obtained as
$$\left(1 - \exp\left(\widehat\beta -z_{1-\alpha/2}\sqrt{\widehat\Sigma_{n, 1, 1}}\right), 1 - \exp\left(\widehat\beta +z_{1-\alpha/2}\sqrt{\widehat\Sigma_{n, 1, 1}}\right)\right)$$
where $\widehat\Sigma_{n, 1, 1}$ is the $(1,1)$-th entry of $\widehat\Sigma_n$ and $z_{1-\alpha/2}$ is the $(1-\alpha/2)$-th quantile of a standard normal distribution.

The estimator $\hat\beta$ and the above confidence interval are constructed under the assumption that the disease is rare in the target population; for non-rare diseases, $\hat\beta$ is in general going to be biased and the confidence interval may not be well-calibrated. However, by Corollary~\ref{cor:null-preserving}, under the null hypothesis of no vaccine effects, the estimated $q^*(A, Z, X)$ converges to the true treatment confounding bridge function and $\hat\beta$ is consistent for $\beta_0=0$. This implies that while our methods are approximately asymptotically unbiased for rare infections, they provide a valid test of no vaccine effect even if the infection is not rare. 

\subsection{Accounting for effect modification by measured confounders}
\label{sec:eff-mod}

So far we have operated under Assumption~\ref{assump:no-eff-mod} that VE is constant across levels of $(U,X)$. As we now show, this assumption can be relaxed to allow for potential effect modification with respect to $X$ without compromising identification. This extension is particularly important because empirical evidence has indeed suggested that flu vaccine effectiveness may vary across sex and age groups~\citep{chambers2018should}; and similar effect heterogeneity is of key interest in case of COVID-19~\citep{fernandez2021effect}.

Instead of Assumption~\ref{assump:no-eff-mod}, we consider a less stringent assumption:
\begin{assumption}[No effect modification by unmeasured confounders]\label{assump:no-eff-mod-by-u}
\begin{equation}\label{eq:no-eff-mod-by-u}
    P(Y=1|A=a, U, X)=\exp(\beta_0(X)a)g(U,X)
\end{equation}
where $\beta_0(x)$ are $g(u,x)$ are unknown functions of $x$ and $u,x$ respectively.
\end{assumption}
Under condition~\ref{assump:ident}, Assumption~\ref{assump:no-eff-mod-by-u} further implies that
$\beta_0(x)=E[Y(1)|X=x]/E[Y(0)|X=x]$, i.e. the conditional causal RR as a function of $x$. 
Similar to Theorem~\ref{thm:causal-rr-identifiability}, we have:

\begin{theorem}\label{thm:rr-identification-2}
Under Assumptions~\ref{assump:ident}, \ref{assump:trt-indep-samp}, \ref{assump:nce}, \ref{assump:trt-bridge} and \ref{assump:no-eff-mod-by-u}, we have that  $E[V_3(A, Y, Z, X;\beta_0)|S=1]=0$, where
$V_3(A, Y, Z, X;\beta_0)=(-1)^{1-A}q(A, Z, X)\exp(-\beta_0(X)A)$.
\end{theorem}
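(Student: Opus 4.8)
The plan is to reproduce the argument behind Theorem~\ref{thm:causal-rr-identifiability} essentially line for line, carrying the $X$-dependent log risk ratio $\beta_0(X)$ through the calculation in place of the constant $\beta_0$. Writing the estimating function as $V_3(A,Y,Z,X;\beta_0)=(-1)^{1-A}c(X)q(A,Z,X)Y\exp(-\beta_0(X)A)$, I would first establish the stronger statement $E[V_3\mid U,X,S=1]=0$ and then obtain $E[V_3\mid S=1]=0$ by iterated expectation over $(U,X)$ given $S=1$. Expanding the $(U,X,S=1)$-conditional expectation over $a\in\{0,1\}$ and pulling the $X$-measurable factors $c(X)$ and $\exp(-\beta_0(X)a)$ outside reduces the claim to evaluating $E[q(a,Z,X)Y\mid A=a,U,X,S=1]\,P(A=a\mid U,X,S=1)$ for each $a$.

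For this quantity I would apply, in order: (i) that $Y$ is binary, so only the event $Y=1$ contributes; (ii) the NCE exclusion restriction $Z\indep(Y,S)\mid A,U,X$ from Assumption~\ref{assump:nce}, which factors $E[q(a,Z,X)\mid A=a,U,X]$ away from $P(S=1\mid A=a,Y=1,U,X)$; (iii) the defining property~\eqref{eq:trt-bridge-defn} of the treatment confounding bridge function (Assumption~\ref{assump:trt-bridge}), which replaces $E[q(a,Z,X)\mid A=a,U,X]$ by $Q(a,U,X)=1/P(A=a\mid U,X)$, well defined by positivity in Assumption~\ref{assump:ident}; and (iv) the treatment-independent sampling condition $S\indep A\mid Y,U,X$ from Assumption~\ref{assump:trt-indep-samp}, which gives both $P(S=1\mid A=a,Y=1,U,X)=P(S=1\mid Y=1,U,X)$ and $P(A=a\mid U,X,S=1)=P(S=1\mid A=a,U,X)P(A=a\mid U,X)/P(S=1\mid U,X)$. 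Multiplying these pieces together, the factors $P(S=1\mid A=a,U,X)$ and $P(A=a\mid U,X)$ cancel and one is left with
\[
E[V_3\mid U,X,S=1]=\frac{c(X)\,P(S=1\mid Y=1,U,X)}{P(S=1\mid U,X)}\sum_{a\in\{0,1\}}(-1)^{1-a}\exp(-\beta_0(X)a)\,P(Y=1\mid A=a,U,X).
\]

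Substituting the semiparametric risk model $P(Y=1\mid A=a,U,X)=\exp(\beta_0(X)a)g(U,X)$ of Assumption~\ref{assump:no-eff-mod-by-u}, the exponentials cancel termwise because $\exp(-\beta_0(X)a)\exp(\beta_0(X)a)=1$, so the bracketed sum collapses to $g(U,X)\big((-1)^{1}+(-1)^{0}\big)=0$; hence $E[V_3\mid U,X,S=1]=0$, and averaging over $(U,X)\mid S=1$ yields $E[V_3\mid S=1]=0$ for every choice of $c(\cdot)$. I do not anticipate a real obstacle here: the derivation is a direct restatement of the proof of Theorem~\ref{thm:causal-rr-identifiability}, and none of the NCO-related conditions (Assumptions~\ref{assump:nco}--\ref{assump:solvability}) enter, since the true bridge function $q$ rather than its rare-disease approximation $q^*$ is used. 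The one point I would state carefully is why heterogeneity in $X$ is harmless while heterogeneity in $U$ is not: the intermediate identity is computed conditionally on $(U,X)$, so $\beta_0(X)$ is a fixed scalar there and cancels exactly against the risk model; allowing $\beta_0$ to depend on $U$ as well would preserve this conditional identity but would make $V_3$, and hence the estimating equation~\eqref{eq:ee-risk-ratio}, a function of the unobserved $U$, which is precisely what Assumption~\ref{assump:no-eff-mod-by-u} rules out.
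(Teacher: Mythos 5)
Your proposal is correct and takes essentially the same route as the paper, which simply observes that the argument of Theorem~\ref{thm:causal-rr-identifiability} goes through verbatim with $\beta_0A$ replaced by $\beta_0(X)A$: conditionally on $(U,X)$, the factor $\exp(-\beta_0(X)a)$ is a constant that cancels against the risk model of Assumption~\ref{assump:no-eff-mod-by-u}, exactly as you note, using only Assumptions~\ref{assump:trt-indep-samp}, \ref{assump:nce}, \ref{assump:trt-bridge} and positivity. You also correctly reinstate the factor $Y$ in $V_3$ (an apparent typo in the theorem statement, consistent with $V_1$ and the estimating equation~\eqref{eq:ee-risk-ratio}).
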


The proof of Theorem~\ref{thm:rr-identification-2} is identical to that of Theorem~\ref{thm:causal-rr-identifiability} with $\beta_0A$ replaced with $\beta_0(X)A$. Identification and estimation of the treatment confounding bridge function are also essentially identical to that of Corollary~\ref{cor:trt-bridge-ee}. Therefore, it is straightforward to extend Algorithm~\ref{alg:nc-tnd} to allow effect modification by measured confounders. We describe the algorithm and the large sample properties of the resulting estimator in Appendix~\ref{append:tnd-alg-2}.

 \subsection{Estimating VE under treatment-induced selection into TND sample}\label{sec:a-s-arrow}
 
Thus far, unbiasedness of the estimating function $V_0$ has crucially relied  on Assumption~\ref{assump:trt-indep-samp} that $A$ does not have a direct effect on $S$. In some settings, the assumption may be violated if an infected person who is vaccinated is on average more likely to present to the ER than an unvaccinated infected person with similar symptoms, so that treatment or vaccination-induced selection into the TND sample is said to be present. In such settings, both estimators $\hat\beta$ and $\hat\beta(X)$ produced by Algorithms~\ref{alg:nc-tnd} and \ref{alg:nc-tnd-2} may be severely biased because Assumption~\ref{assump:trt-indep-samp} may no longer be valid. Crucially, we note that this form of selection bias can be present even in context of a randomized trial in which vaccination/treatment is assigned completely at random, if the outcome is ascertained using a TND, for example in the cluster-randomized test-negative design studies of community-level dengue intervention effectiveness \citet{anders2018awed,jewell2019analysis,dufault2020analysis,wang2022randomization}.  In this Section, we provide sufficient conditions for identification under treatment-induced selection. In this vein, consider the following assumptions:

\begin{newassumption}{\ref{assump:trt-indep-samp}} $P(S=1|A=a, Y=1, U, X)/P(S=1|A=a, Y=0,U, X)=\exp(h(U, X))$ for $a=0,1.$ 

\end{newassumption}
That is, the risk ratio association between infection status and selection into the TND sample is independent of vaccination status. Furthermore,  

\begin{newassumption}{\ref{assump:no-eff-mod}}(No effect modification by confounders on the OR scale).
$$\dfrac{P(Y=1|A=1, U, X)/P(Y=0|A=1, U, X)}{P(Y=1|A=0, U, X)/P(Y=0|A=0,U, X)}=\exp(\beta_0').$$
\end{newassumption}
Recall that Assumption~\ref{assump:no-eff-mod} posited a constant vaccination causal effect on the RR scale across levels of $(U,X)$, while Assumption 3$'$ posits that the corresponding causal effect on the odds ratio scale is constant across levels of $(U,X)$. In case of a rare infection in the target population, the OR and RR are approximately equal, in which case VE is well approximated by $1-OR$.

Furthermore, identification relies on the following modified definition of a treatment confounding bridge function:
\begin{newassumption}{\ref{assump:trt-bridge}}
    There exists a treatment confounding bridge function $\tilde q$ such that for $a=0, 1$, \begin{equation}\label{eq:trt-bridge-alternative}E[\tilde q(a, Z, X)|A=a, U, X]=1/P(A=a|U, X, Y=0, S=1)\qquad\text{almost surely}.\end{equation}
\end{newassumption}

Note that if the infection is rare in the target population in the sense of Assumption~\ref{assump:rare-disease}, then the treatment confounding bridge function defined in Assumption~\ref{assump:trt-bridge} in Section~\ref{sec:nce} satisfies~\eqref{eq:trt-bridge-alternative} approximately .

We now introduce the identification of the OR with the following theorem:
 \begin{newt}{1}
Under Assumptions \ref{assump:ident}, \ref{assump:trt-indep-samp}$'$, \ref{assump:no-eff-mod}$'$, \ref{assump:nce} and \ref{assump:trt-bridge}$'$, we have 
$$E[\tilde V_1(A, Y, Z, X;\beta_0')\mid S=1]=0$$ where $\tilde V_1(A, Y, Z, X;\beta)$ is the same as $V_1$ defined in Theorem~\ref{thm:causal-rr-identifiability} except with ${\tilde q}$ replacing $q$.

\end{newt}
 Importantly, the theorem establishes that the estimating function $V_1$ previously developed in the paper can under certain conditions, remain unbiased for the odds ratio association of vaccination with testing positive for the infection, even in the presence of treatment-induced selection into the TND sample. We leave the proof of Theorem 1$'$ to Appendix~\ref{append:causal-or-ee}.

 Estimation of the treatment confounding bridge function ${\tilde q}(A, Z, X)$ requires a negative control outcome that satisfies:

\begin{newassumption}{\ref{assump:nco}}(NCO Independence Conditions) $W\indep (A, Z,  S)|U, X, Y.$
\end{newassumption}
In addition to Assumptions~\ref{assump:nco}, this last assumption requires that neither $Y$ nor $S$ is a causal effect of $W$. Figure~\ref{fig:dag-tnd}(f) illustrates a DAG that satisfies our assumptions regarding $(Z,W)$. As can be verified in the graph, Assumption 6$'$ is needed to ensure that collider stratification bias induced by the path $A\rightarrow [S=1]\leftarrow W$ upon conditioning on $S=1$ is no longer present. Identification of the function $\tilde q$ is given below:
 
 \begin{newt}{\ref{thm:trt-bridge-identification}}
 Under assumptions~\ref{assump:nce}, \ref{assump:trt-bridge}$'$ and
\ref{assump:nco}$'$, for $a=0, 1$ we have that
\begin{align*}
    E[\tilde q(a, Z, X)|A=a, W, X, Y=0, S=1]= 1/P(A=a|W, X, Y=0, S=1)
\end{align*}
 \end{newt}

 We prove Theorem~\ref{thm:trt-bridge-identification}$'$ in Appendix~\ref{append:tilde-q-identification}. As a result of Theorem~\ref{thm:trt-bridge-identification}$'$, the parameters in the treatment confounding bridge function can be estimated by solving moment equation \eqref{eq:ee-trt-bridge}.

  In summary, the above discussion suggests that one can continue to use Algorithm~\ref{alg:nc-tnd} to estimate VE in presence of treatment induced selection bias, albeit on the OR scale and under a modified set of negative control conditions. Algorithm~\ref{alg:nc-tnd-2} can similarly be justified under treatment-induced selection with assumptions in this section, except that $\beta_0'$ in Assumption~\ref{assump:no-eff-mod}$'$ is replaced by the conditional log RR $\beta_0'(X)$.
 
As a side note, Assumption~\ref{assump:trt-indep-samp}$'$ automatically holds under Assumption~\ref{assump:trt-indep-samp}, and hence the above results in this section also apply to the setting in previous sections that is illustrated in Figure~\ref{fig:dag-tnd}(e). We present this statement in the following corollary.
 
\begin{corollary}\label{cor:ee-or}
 Under Assumptions~\ref{assump:ident}, \ref{assump:trt-indep-samp}, \ref{assump:no-eff-mod}$'$, \ref{assump:nce} and \ref{assump:trt-bridge}$'$, we have 
$$E[\tilde V_1(A, Y, Z, X;\beta_0')\mid S=1]=0.$$
 \end{corollary}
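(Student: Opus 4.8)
The plan is to obtain Corollary~\ref{cor:ee-or} as an immediate specialization of Theorem~1$'$. Comparing hypothesis lists, Theorem~1$'$ requires Assumptions~\ref{assump:ident}, \ref{assump:trt-indep-samp}$'$, \ref{assump:no-eff-mod}$'$, \ref{assump:nce} and \ref{assump:trt-bridge}$'$, whereas the corollary assumes Assumptions~\ref{assump:ident}, \ref{assump:trt-indep-samp}, \ref{assump:no-eff-mod}$'$, \ref{assump:nce} and \ref{assump:trt-bridge}$'$; the only discrepancy is that the corollary imposes the stronger conditional-independence condition~\ref{assump:trt-indep-samp} in place of the risk-ratio constancy condition~\ref{assump:trt-indep-samp}$'$, while the remaining four assumptions appear verbatim in both. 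Hence the entire argument reduces to verifying the single implication that Assumption~\ref{assump:trt-indep-samp} entails Assumption~\ref{assump:trt-indep-samp}$'$, after which Theorem~1$'$ applies and delivers the stated moment identity $E[\tilde V_1(A, Y, Z, X;\beta_0')\mid U, X, S=1]=0$.

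For the one nontrivial step, I would argue as follows. Under Assumption~\ref{assump:trt-indep-samp}, $S\indep A\mid Y, U, X$, so $P(S=1\mid A=a, Y=y, U, X)=P(S=1\mid Y=y, U, X)$ for $y=0,1$, which does not depend on $a$. Therefore
\[
\frac{P(S=1\mid A=a, Y=1, U, X)}{P(S=1\mid A=a, Y=0, U, X)}=\frac{P(S=1\mid Y=1, U, X)}{P(S=1\mid Y=0, U, X)}=:\exp\{h(U,X)\},
\]
where the middle expression is a function of $(U,X)$ only, and $h(U,X)$ is its logarithm; this is well-defined and finite because conditioning on $S=1$ throughout the TND analysis implicitly presumes $P(S=1\mid Y=y, U, X)>0$ for almost every $(U,X)$ and $y=0,1$. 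Since this holds for both $a=0$ and $a=1$ with the same function $h$, Assumption~\ref{assump:trt-indep-samp}$'$ is satisfied. With all five hypotheses of Theorem~1$'$ now in force, a direct invocation of Theorem~1$'$ concludes the proof.

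I do not anticipate any real obstacle: the corollary is purely a bookkeeping consequence, and the substantive work has already been carried out in the proof of Theorem~1$'$ (Appendix~\ref{append:causal-or-ee}). The only point meriting mild care is confirming that the objects referenced by Assumptions~\ref{assump:no-eff-mod}$'$ and \ref{assump:trt-bridge}$'$ — in particular the conditional probability $P(A=a\mid U, X, Y=0, S=1)$ appearing in~\eqref{eq:trt-bridge-alternative} and the conditioning event $\{U, X, S=1\}$ in the conclusion — are the identical quantities used in Theorem~1$'$, so that the implication is literal rather than merely analogous. Should a fully self-contained proof be preferred over deferral, one could instead inline the argument of Theorem~1$'$ verbatim, taking $h(U,X)=\log\{P(S=1\mid Y=1,U,X)/P(S=1\mid Y=0,U,X)\}$ as above; but invoking Theorem~1$'$ is the cleaner route and is what I would write.
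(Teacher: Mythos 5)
Your proposal is correct and matches the paper's own argument: the paper proves the corollary exactly by noting that Assumption~\ref{assump:trt-indep-samp} implies Assumption~\ref{assump:trt-indep-samp}$'$ (with $h(U,X)=\log\{P(S=1\mid Y=1,U,X)/P(S=1\mid Y=0,U,X)\}$) and then invoking Theorem~1$'$. Your explicit verification of that implication, including the positivity caveat, is the only nontrivial step and you handle it correctly.
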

 
 With Assumption~\ref{assump:trt-indep-samp}, the treatment confounding bridge function $\tilde q$ can be estimated by solving the moment equation~\eqref{eq:ee-trt-bridge} either under under Assumption~\ref{assump:nco}$'$ and \ref{assump:solvability}, or Assumptions~\ref{assump:nco}, \ref{assump:rare-disease} and \ref{assump:solvability} as an approximation under the rare disease assumption. Corollary~\ref{cor:ee-or} leads to an interesting observation: under the treatment-independent sampling (Assumption~\ref{assump:trt-indep-samp}), the estimator $\hat\beta$ from Algorithm~\ref{alg:nc-tnd} can be viewed as either log RR or log OR, depending on the set of assumptions made.

\section{Simulation Study}
\label{sec:sim}

To assess the empirical performance of our proposed method, we consider two settings with different types of confounding and negative control variables, and perform corresponding simulation studies. 

In the first setting, we consider no measured confounder, a binary unmeasured confounder $U$, a binary NCE $Z$ and a binary NCO $W$. To trigger selection among subjects with $Y=0$, we let $D$ be a binary indicator of the presence of other flu like illnesses. The treatment confounding bridge function is thus given by \eqref{eq:trt-bridge-binary}. We assume the distribution of $Y$ is Bernoulli with a log-linear risk model:
$Y|A, U\sim Bernoulli(\exp(\eta_{0Y}+\beta_0 A + \eta_{UY}U))$. We consider values of $\beta_0$ to be -1.609, -0.693, -0.357 or 0, corresponding to a risk ratio of 0.2, 0.5, 0.7 or 1. We assume the selection $S$ only equals one with nonzero probability if at least one of $Y$, $W$ and $D$ equals one, and is independent of $A$ and $Z$ conditional on other variables. The resulting prevalence of $Y$ in the target population is 0.75\% among the unvaccinated individuals and 0.55\%, 0.65\%, 0.72\% or 0.75\% among the vaccinated individuals, corresponding to four values of $\beta_0$.  Next, we consider a setting where $X$, $U$, $Z$ and $W$ are all univariate continuous variables.
We generate the infection outcome using a log-linear model
$$Y|A, U, X\sim Bernoulli(\exp(\mu_{0Y} + \beta_0A + \mu_{UY}U + \mu_{XY}X + \mu_{UXY}UX)).$$
We generate $A$ 
and $Z$ following Example~\ref{eq:contn} in Section~\ref{sec:bridge-approach}. As such the treatment confounding bridge function is given by Equation \eqref{eq:trt-bridge-cont}.  The probability of $S=1$ is 1 only if at least one of $Y$ and $D$ is nonzero. The resulting prevalence of $Y$ in the target population is 0.34\% among the unvaccinated individuals and 0.24\%, 0.28\%, 0.31\% or 0.34\% among the vaccinated individuals, corresponding to four values of $\beta_0$. Appendix~\ref{append:sim-binary} and ~\ref{append:sim-continuous} give more details on the data-generating mechanism for the two settings.

In each scenario, we simulate a target population of size $N=7,000,000$ and implement $1,000$ simulation iterations. For both settings, we evaluate the performance of three estimators for $\beta_0$:
\begin{itemize}
    \item NC estimator: our proposed estimator given by Algorithm~\ref{alg:nc-tnd}. In the first setting, we use a saturated parametric model \eqref{eq:trt-bridge-model-saturated} for the treatment confounding bridge function in the first setting, with $m(W,A)=(1,W,A,WA)^T$; in the second setting, we use model~\eqref{eq:trt-bridge-model-continuous} and $m(W,A,X)=(1,W,A,X)^T$.
    \item NC-Oracle estimator: the estimated treatment confounding bridge function in Algorithm~\ref{alg:nc-tnd} is only an approximation under Assumption~\ref{assump:rare-disease}, whose bias may affect the estimation for $\beta_0$, as derived in Appendix~\ref{append:rare-disease}. We therefore include NC-Oracle estimator  that uses the true treatment confounding bridge function $q(A, Z, X)$. Appendices \ref{append:trt-bridge-cont} include derivation of the true treatment confounding bridge function under the continuous $(X, U, Z, W)$ setting.
    \item Logistic regression: we also consider a logistic regression model of $Y$ on $A$ (and $X$ in the second setting), overlooking the unmeasured confounders $U$. This is a common choice for covariate adjusted analyses of test-negative designs but ignores biases caused by $U$~\citep{bond2016regression}. We comment  in Appendix~\ref{append:logit-reg} that the estimator is appropriate in the absence of unmeasured confounders except for potential model misspecification.

\end{itemize}

We note that the target parameter $\beta_0$ for NC estimator and NC-Oracle estimator is 
    log causal RR, while
    logistic regression gives log causal OR. However, the two parameters are approximately equal under Assumption~\ref{assump:rare-disease} where the infection risk is low in the target population.

Figure~\ref{fig:sim} shows the bias of three estimators considered and the coverage of their 95\% confidence intervals. In both settings, both NC and NC-Oracle are essentially unbiased whereas logistic regression gives a biased estimate in all scenarios. NC-Oracle exhibits slightly higher precision than NC, which implies that estimating the treatment confounding bridge function in the TND is only slightly more variable. The 95\% confidence intervals for NC and NC-Oracle both achieve nominal coverage, whereas logistic regression-based confidence intervals under-cover severely. We repeated the simulation under a non-rare disease setting in Appendix~\ref{append:non-rare}. In such scenarios, while NC-Oracle estimator is still unbiased with calibrated 95\% confidence intervals, the NC estimator is biased in general except when $\beta_0=0$. We conclude that the proposed NC estimator is unbiased of the log causal RR either under a rare disease setting or under a non-rare disease setting with no vaccine effect.

\begin{figure}[h]
    \centering
    \begin{tabular}{cc}
    \includegraphics[width = 0.45\textwidth]{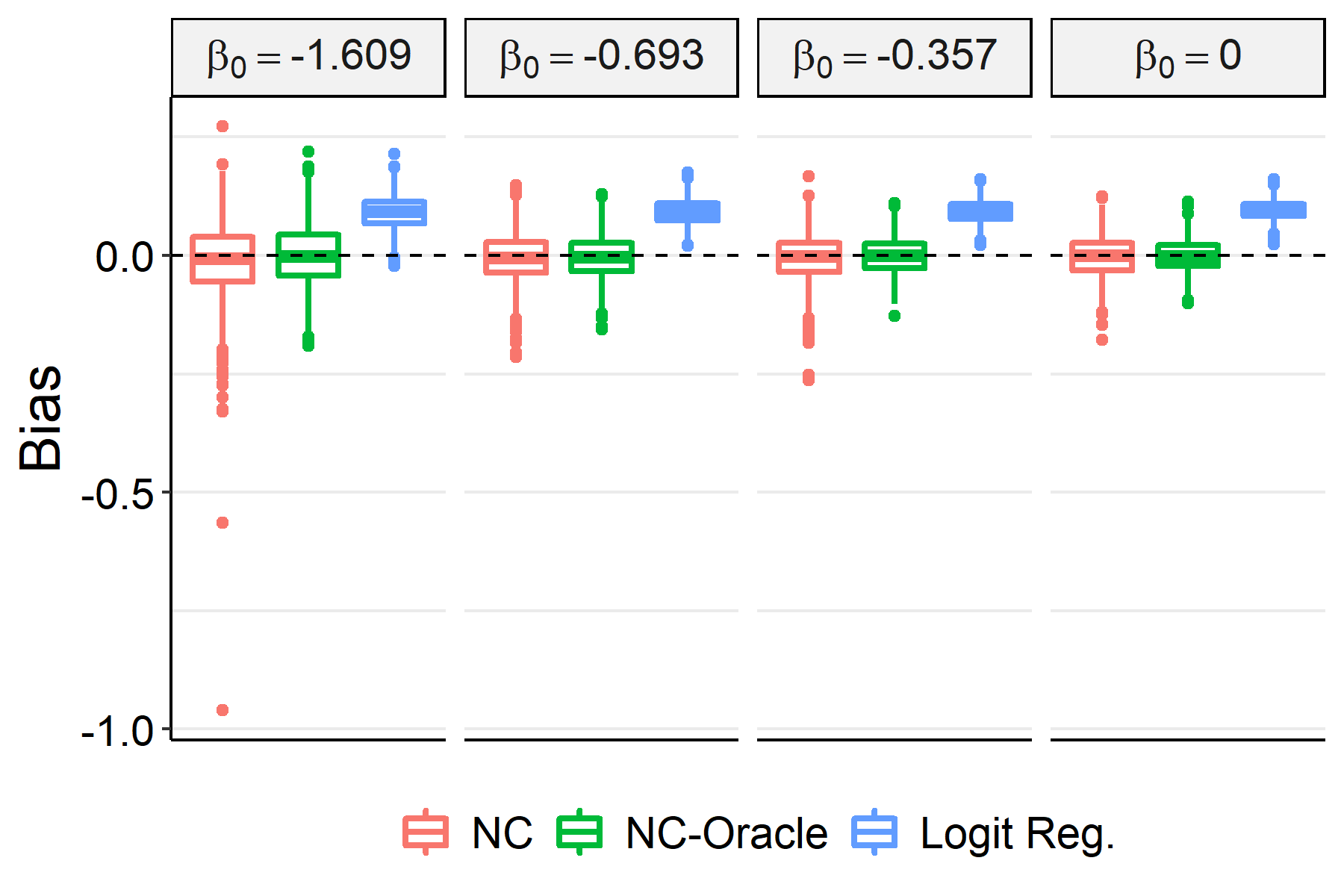}&
    \includegraphics[width = 0.45\textwidth]{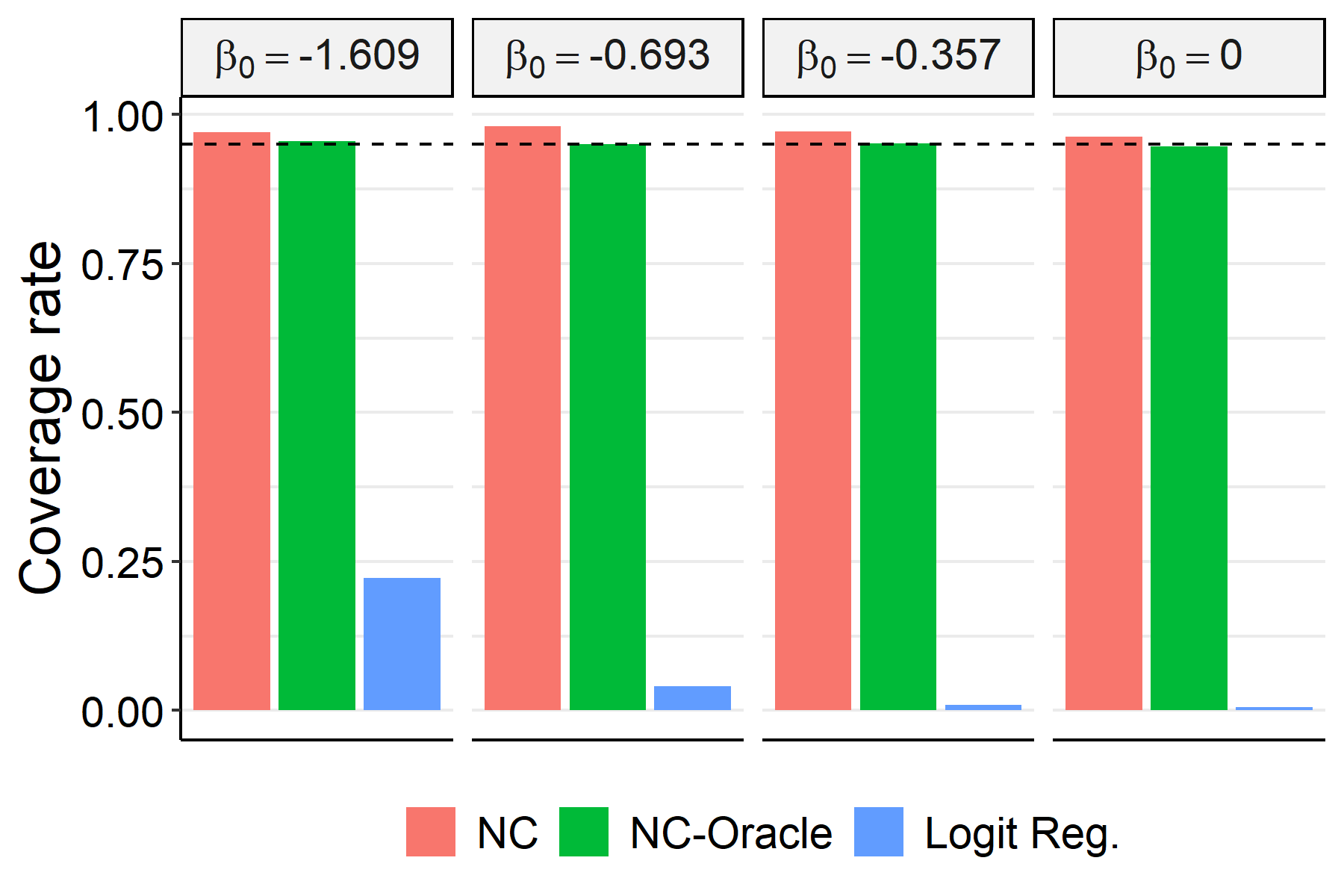}\\
    \multicolumn{2}{c}{(a)}\\
    \includegraphics[width = 0.45\textwidth]{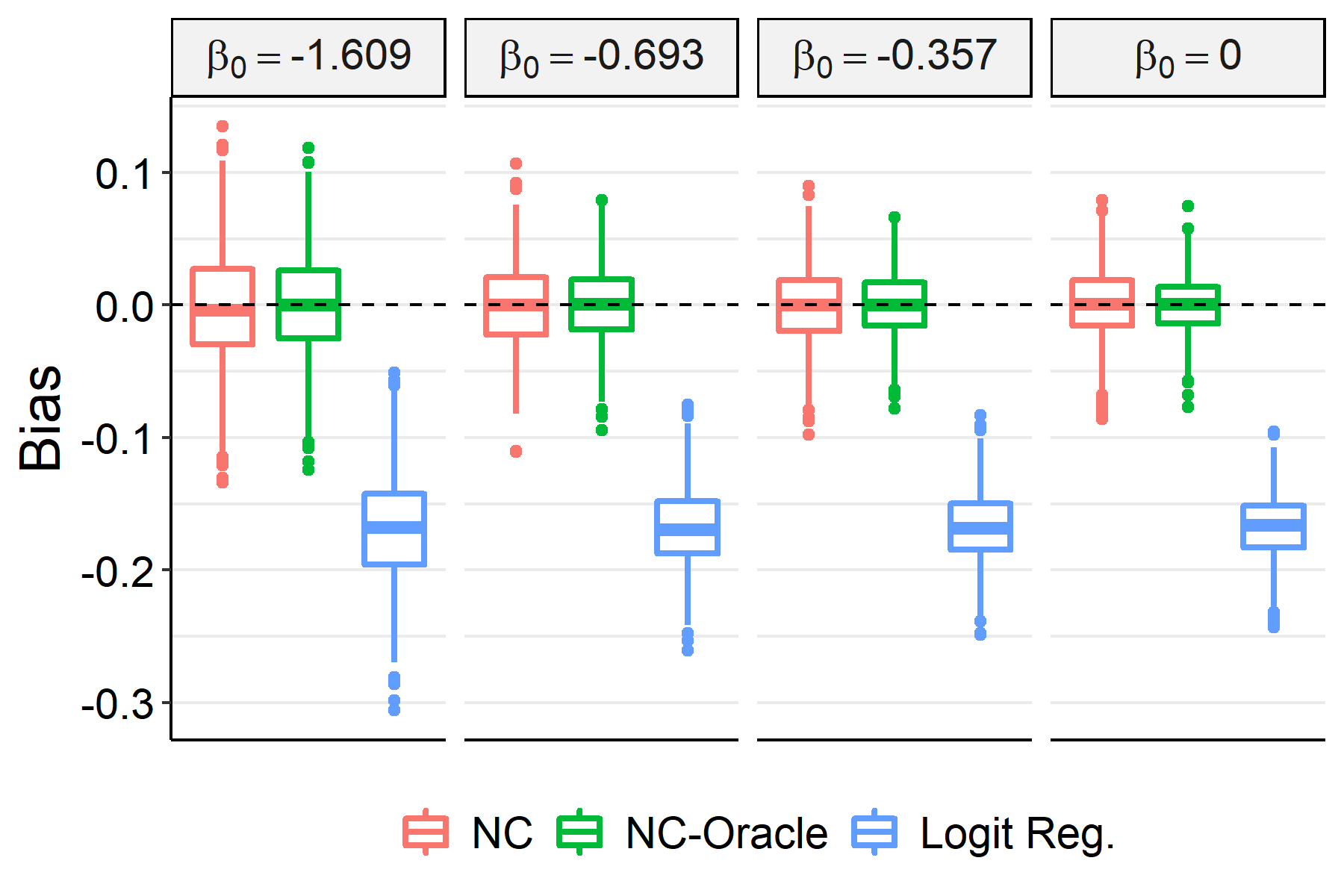}&
    \includegraphics[width = 0.45\textwidth]{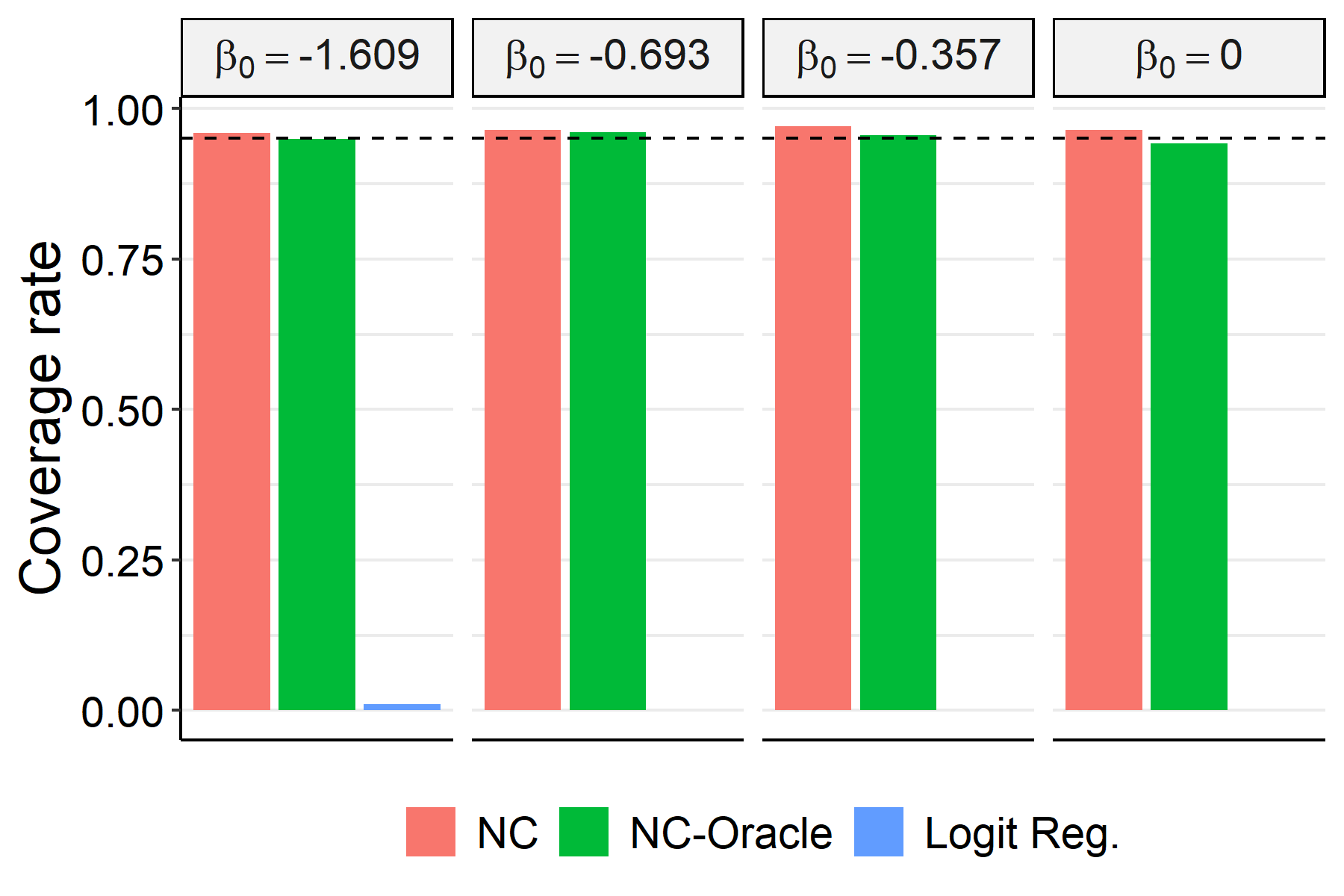}\\
    \multicolumn{2}{c}{(b)}\\
    \end{tabular}
    \caption{Bias (left) and coverage rates of 95\% confidence interval (right) for the oracle estimator (NC-Oracle), GMM estimator (NC-GMM) and logistic regression (Logit Reg.) with a (a) binary or (b) continuous unmeasured confounder.}
    \label{fig:sim}
\end{figure}

\section{Application}
\label{sec:application}

We applied our proposed method to a TND study of COVID-19 VE against COVID-19 infection nested in the University of Michigan Health System. The selected study sample includes patients who interacted with the University of Michigan Health System and experienced COVID-19 symptoms, had suspected exposure to COVID-19 virus, or sought to screen for COVID-19 infection, between April 5, 2021 and December 7, 2021. In addition, the selected test-positive subjects had at least one positive lab tests for COVID-19 infection after April 5. Vaccination history was obtained through electronic health records. A study subject was considered fully vaccinated if they received at least one dose of Johnson \& Johnson's Janssen vaccine or at least two doses of Moderna or Pfizer vaccine. If a subject tested positive before or within 14 days after their first dose of Janssen vaccine  or within 14 days after their second dose of Moderna or Pfizer vaccine, they were considered unvaccinated~\citep{moline2021effectiveness}.

We took immunization visits before December 2020 as NCE since COVID-19 vaccines were not available before December 2020 and immunization before was unlikely to affect the risk of COVID-19 infection; nor that of the selected NCOs we describe next. For NCO, we selected a binary indicator of having at least one of the following ``negative control outcome'' conditions after April 5, 2021: arm/leg cellulitis, eye/ear disorder,  gastro-esophageal disease, atopic dermatitis, and injuries. Such candidate NCE and NCO are likely to satisfy the requisite conditional independence conditions for them to be valid negative control variables and to be related to a patient's latent HSB. We adjusted for age groups ($<$18, between 18 and 60, or $\geq 60$), gender, race (white or non-white), Charlson comorbidity score $\geq$ 3, and the calendar month of a test-positive subject's first positive COVID test or a test-negative subject's last COVID test.  Table~\ref{tab:data_app} in Appendix~\ref{append:umhs-tables} summarizes the distribution of negative control variables, demographic variables and COVID-19 infection among vaccinated and unvaccinated subjects.

Because NCE is expected not to be associated with either the outcome or NCE in a fully adjusted analysis unless there is unmeasured confounding, we first fit regression models to detect presence of residual confounding bias. Conditioning on the baseline covariates, in both vaccinated and unvaccinated groups, NCE is significantly associated with COVID-19 infection  ($p<0.001$) and NCO ($p<0.001$) in corresponding adjusted logistic regression models, suggesting the presence of hidden biases~(See Appendix~\ref{append:umhs-tables} Table \ref{tab:reg-nce}, \ref{tab:reg-nco}).

We implemented Algorithm~\ref{alg:nc-tnd} to estimate VE. We specified a linear model for the treatment confounding bridge function with an interaction term between COVID-19 vaccination and the NCE, and set the function $m$ to include one (for an intercept term), COVID-19 vaccination, the NCO, and baseline covariates, as well as all two-way interactions. For comparison, we also implemented a logistic regression model, which gives an unbiased estimate of causal odds ratio under the no unmeasured confounding assumption,  adjusting for gender and age groups. In this case, The VE can be approximated by one minus the odds ratio of COVID-19 infection against vaccination, as COVID-19 infection rate is known to be low across strata in the target population.

 The double negative control Algorithm~\ref{alg:nc-tnd} estimated a causal log-RR associated with vaccination of -2.80 (95\% CI: -3.08, -2.54) for COVID-19 infection in the target population, and the estimated VE was  94.0\% (95\% CI: 92.1\%, 95.4\%). The logistic regression estimates for the same log-RR was -3.18 (95\% CI: -3.35, -3.01\%), yielding VE estimate of
95.8\% (95\% CI: 95.1\%, 96.5\%). Table~\ref{tab:umhs-nc-results} and \ref{tab:logit-reg-umhs} in Appendix~\ref{append:umhs-tables} give detailed output of Algorithm~\ref{alg:nc-tnd} and the logistic regression model respectively. Although there is significant evidence of hidden bias as summarized in Tables~\ref{tab:reg-nce} and \ref{tab:reg-nco}, the magnitude of detected bias is not appreciable, approximately $12\%$ on the log-risk ratio scale, corresponding to a 1.8\% difference on vaccine efficacy scale. A Hausman chi-squared test statistic~\citep{hausman1978specification} comparing the two estimates on the log-RR scale is 8.84, giving a p-value of 0.003, indicating that the double negative control VE estimate  is significantly smaller than that given by logistic regression. 

The VE estimated with a standard logistic regression model is larger than VE estimates from RCTs which reported VE=94$\%$ for Moderna and VE=95$\%$ for Pfizer vaccines, respectively~\citep{baden2021efficacy,sadoff2021safety,polack2020safety}. We hypothesize that this may be due to some degree of confounding by HSB and related factors, which our proposed double NC approach appears to account for to some extent, recovering VE estimates more consistent with those of RCTs.

\section{Discussion}
\label{sec:discussion}

In this article, we have introduced a statistical method for estimating vaccine effectiveness in a test-negative design. The approach leverages negative control variables to account for hidden bias due to residual confounding and/or selection mechanism into the TND sample. Negative control variables abound in practice, such as vaccination history which is routinely collected in insurance claims and electronic health records. Hence the proposed method may be particularly useful in such real world settings to obtain improved estimates of vaccine effectiveness. 

The TND is a challenging setting in causal inference where selection bias and unmeasured confounding co-exist, selection is outcome-dependent, and unmeasured confounders also impact selection.  As a result, the causal effect of interest is in general not identified from such studies~\citep{cai2012identifying}. Nevertheless, we establish that progress can be made under a semiparametric multiplicative model, provided the outcome is rare in the target population, and double negative control variables are available. To this end, this article showcases the potential power of negative control methods and proximal causal inference in epidemiologic research~\citep{shi2020selective,tchetgen2020introduction}.

We focused on the outpatient TND, where recruitment is restricted to subjects who seek care voluntarily. TNDs have also been applied to inpatient settings for studying VE against, for example, flu hospitalization~\citep{foppa2016case,feng2016influenza}. In inpatient TNDs, differential access to healthcare and underlying health characteristics between vaccinated and unvaccinated subjects are likely the main causes of confounding bias~\citep{feng2016influenza}. Our methods are still applicable in such settings, but negative control variables should be selected to be relevant to the source of unmeasured confounding mechanism. For example, previous vaccination and hospitalization outside the flu season or hospitalization due to other flu-like illnesses are viable candidate NCE and NCO, respectively~\citep{jackson2006evidence}.

 Our approach is suitable for post-market TND studies where real-world vaccine effectiveness is of interest and vaccination history is obtained retrospectively, possibly through electronic health records. For vaccine efficacy in a controlled trial setting, \citet{wang2022randomization} recently developed estimation and inference of RR in cluster-randomized TND, aiming to correct for bias due to differential HSB induced by the intervention being unblinded. Because of randomization, they considered HSB as a post-treatment variable and proposed a log-contrast estimator which corrects for selection bias by leveraging a valid test-negative outcome, under an assumption that either (i) the vaccine does not have a causal effect in the population, and the causal impact of vaccination on selection is equal for test-positive and -negative subsamples; or (ii) among care seekers, the incidence of test-negative outcomes does not differ between vaccinated and unvaccinated, and the intervention effect among care seekers is generalizable to the whole population. We note that even under randomization, identification conditions given in Section~\ref{sec:a-s-arrow} are neither stronger nor weaker than those of \citet{wang2022randomization} described above, as neither set of assumptions appear to imply the other. An important advantage of our proposed methods is that they can be used to account for selection bias in a TND study irrespective of randomization.

 Our methods target RR as a measure of VE instead of the more common OR~\citep{jackson2006evidence,sullivan2016theoretical}. These two measures are approximately equal for rare infections.  \citet{schnitzer2022estimands} recently considered estimation of a marginal causal RR in the TND sample and justified the use of an inverse probability of treatment weighted (IPTW) estimator in a setting in which an unmeasured common cause of infection and selection into the TND sample does not cause vaccination (and thus there is no unmeasured confounding). Instead, our methods allow for an unmeasured common cause of vaccination, infection and selection into the TND sample; however in order to estimate a causal RR, we invoke both, an assumptions of no effect modification by an unmeasured confounder, and a rare-disease condition. As we establish, the latter assumption is not needed if there is no vaccine effect against infection outcome.  In Section~\ref{sec:a-s-arrow}, we establish that under a homogeneous OR vaccine effect measure condition, and an alternative definition of the treatment bridge function, our methods can identify a causal effect of the vaccine on the odds ratio scale without invoking the rare disease condition.
 
 Throughout the article, we have assumed diagnostic tests are accurate and individuals who seek care are sparsely distributed, such that the vaccination of a given subject in the TND sample does not protect another study subject  from infection, , i.e. there is no interference in the TND sample, a common assumption in TND literature. This assumption may be violated if members of the same households present in the ER in which case block interference must be accounted using results from interference literature~\citep{hudgens2008toward,tchetgen2012causal}.  Sensitivity analysis may be considered to evaluate how violation of these assumptions can potentially bias inferences about VE.
 
\if1\blind{
\section*{Acknowledgements}
 The authors thank Dr. Lili Zhao, Dr. Chen Shen and University of Michigan Precision Health Initiative for providing the University of Michigan Health System data.
}\fi

%\bibliographystyle{agsm}
%\bibliography{references}
\pagebreak
\appendix

\section{Proof of Proposition~\ref{prop:selection_bias}}\label{append:oracle-ee}

We need to prove
 $$E[(-1)^{1-A}\dfrac{1}{P(A|U, X)}Y\exp(-\beta_0A)|S=1]=0$$

Since \begin{align*}
    & E\left[(-1)^{1-A}\dfrac{1}{P(A|U, X)}Y\exp(-\beta_0A)\bigg |S=1\right]\\
    =& E\left[(-1)^{1-A}\dfrac{1}{P(A|U, X)}\exp(-\beta_0A)\bigg|Y=1,S=1\right]P(Y=1|S=1)\\
    =& E\left\{E\left[(-1)^{1-A}\dfrac{1}{P(A|U, X)}\exp(-\beta_0A)\bigg|U, X, Y=1, S=1\right]\bigg|Y=1,S=1\right\}P(Y=1|S=1),
\end{align*}
the result immediately follows if we can show that
\begin{equation}E\left[(-1)^{1-A}\dfrac{1}{P(A|U, X)}\exp(-\beta_0A)\bigg|U, X, Y=1,S=1\right]=0.\label{eq:propensity-conditional}
\end{equation}

By Assumption~\ref{assump:trt-indep-samp}, the left-hand side of (\ref{eq:propensity-conditional}) equals
\begin{align*}
    & E\left[(-1)^{1-A}\dfrac{1}{P(A|U, X)}\exp(-\beta_0A)\bigg|U, X, Y=1\right].
\end{align*}

We further have
\begin{align*}
    & E\left[(-1)^{1-A}\dfrac{1}{P(A|U, X)}\exp(-\beta_0A)\bigg|U, X, Y=1\right]\\
    =& \sum_{a=0}^1 (-1)^{1-a}\dfrac{1}{P(A=a|U, X)}\exp(-\beta_0 a)P(A=a|U, X, Y=1)\\
    =& \sum_{a=0}^1 (-1)^{1-a}\dfrac{1}{P(A=a|U, X)}\exp(-\beta_0 a)\dfrac{P(A=a|U, X)P(Y=1|A=a, U, X)}{P(Y=1|U, X)}\\
    \stackrel{A.\ref{assump:no-eff-mod}}{=}& \sum_{a=0}^1 (-1)^{1-a}\exp(-\beta_0a)\dfrac{\exp(\beta_0 a)P(Y=1|A=0, U, X)}{P(Y=1|U, X)}\\
    =& \sum_{a=0}^1(-1)^{1-a}\dfrac{P(Y=1|A=0, U, X)}{P(Y=1|U, X)}\\
    =& 0.
\end{align*}

\pagebreak
\section{Existence of solutions to \eqref{eq:trt-bridge-defn}}\label{append:trt-bridge-existence}
In this section, we provide the conditions of existence of solutions to \eqref{eq:trt-bridge-defn}. The conditions for \eqref{eq:trt-bridge-identification} can be similarly derived. The results in this section directly adapted from Appendix B of \citet{cui2020semiparametric}.

Let $L^2\{F(t)\}$ denote the Hilbert space of all square-integrable functions of $t$ with respect to distribution function $F(t)$, equiped with inner product $\langle g_1, g_2\rangle=\int g_1(t)g_2(t)dF(t)$. Let $T_{a,x}$ denote the operator $L^2\{F(z|a,x)\}\rightarrow L^2\{F(u|a, x)\}$, $T_{a,x}q=E[q(Z)|A=a, U=u, X=x]$ and let $(\lambda_{a,x,n},\varphi_{a,x,n},\phi_{a,x,n})$ denote a singular value decomposition of $T_{a,x}$. The solution to \eqref{eq:trt-bridge-defn} exists if:
\begin{enumerate}[(1)]
    \item $\int\int f(z|a, u, x)f(u|a,z,x)dzdu<\infty$;
    \item $\int P^{-2}(A=a|U=u,X=x)f(u|a,x)du<\infty$;
    \item $\sum_{n=1}^\infty \lambda_{a,x,n}^{-2}|\langle P^{-1}(A=a|U=u,X=x),\phi_{a,x,n}\rangle|^2<\infty$.
\end{enumerate}

\pagebreak
\section{Proof of Theorem~\ref{thm:causal-rr-identifiability}}
\label{append:ee-rr-unbiasedness}

We need to prove $$E[(-1)^{1-A}q(A,Z,X)Y\exp(-\beta_0A)|S=1]=0$$

Since \begin{align*}
    & E[(-1)^{1-A}q(A, Z, X)Y\exp(-\beta_0A)|S=1]\\
    =& E[(-1)^{1-A}q(A, Z, X)\exp(-\beta_0A)|Y=1,S=1]P(Y=1|S=1)\\
    =& E\left\{E\left[(-1)^{1-A}q(A, Z, X)\exp(-\beta_0A)|U, X, Y=1, S=1\right]|Y=1,S=1\right\}P(Y=1|S=1),
\end{align*}
it suffices to show that

$$E[(-1)^{1-A}q(A, Z, X)\exp(-\beta_0A)|U, X, Y=1, S=1]=0.$$

By Assumption~\ref{assump:trt-indep-samp}, the left-hand side is
\begin{align*}
    & E[e^{-\beta_0 A}(-1)^{1-A}q(A, Z, X)|U, X,Y=1] \\
    =& E\{e^{-\beta_0 A}(-1)^{1-A}E[q(A, Z, X)|A, U, X, Y=1]|U,X, Y=1\}\\
    \stackrel{A.\ref{assump:nce}}{=}& E\{e^{-\beta_0 A}(-1)^{1-A}E[q(A, Z, X)|A, U,X]|U, X, Y=1\}\\
    \stackrel{A.\ref{assump:trt-bridge}}{=}& E\{e^{-\beta_0 A}(-1)^{1-A}\dfrac{1}{P(A|U,X)}|U,X, Y=1\}\\
    =& 0
\end{align*}
The last equality is proved in Appendix~\ref{append:oracle-ee}.

\pagebreak

\section{Treatmeng bridge function and estimation with categorical NCE, NCO and unmeasured confounders.}
\label{append:eg-categorical}
We consider a categorical unmeasured confounder $U$ with categories $u_1,\dots,u_J$ and NCE $Z$ with categories $z_1,\dots, z_I$. We assume there are no other covariates $X$. We write $p_{ia.k}=P(Z=z_i, A=a|U=u_j)$ for $i=1,\dots, I$, $a=0,1$, and $j=1,\dots, J$.

Similar to before, the treatment confounding bridge function $q(a, z)$ should satisfy 

\begin{equation}
\sum_{i=1}^I\,p_{ia.j}q(a, z_i)=1
\end{equation}
for all $i,j$. Therefore, any solution of the following equation system, if exists, is a treatment confounding bridge function:
\begin{equation}
\begin{array}{c}
p_{1a.1}q(a, z_1)  + p_{2a.1}q(a, z_2) + \dots + p_{Ia.1}q(a,z_I) = 1\\
p_{1a.2}q(a, z_1)  + p_{2a.2}q(a, z_2) + \dots + p_{Ia.2}q(a, z_I) = 1\\
\dots\\
p_{1a.J}q(a, z_1)  + p_{2a.J}q(a, z_2) + \dots + p_{Ia.J}q(a, z_I) = 1
\end{array}\label{eq:categorical-trt-bridge}
\end{equation}
for $a=0,1$.   We denote the probability matrix
\begin{equation}\label{eq:prob-mat-categorical}P_{Z,A\mid U}=\begin{pmatrix}
p_{1a.1} & p_{2a.1} & \dots & p_{Ia.1}\\
p_{1a.2} & p_{2a.2} & \dots & p_{Ia.2}\\
\vdots & & & \vdots\\
p_{1a.J} & p_{2a.J} & \dots & p_{Ia.J}
\end{pmatrix}.\end{equation}
A treatment confounding bridge function exists if the matrix $P_{Z,A\mid U}$ is invertible.

 Suppose besides a categorical NCE $Z$ with levels $z_1,\dots,z_I$,  we also have a categorical NCO $W$ with levels $w_1,\dots, w_K$. The integral equation~\eqref{eq:trt-bridge-identification} is equivalent to the linear equation system

\begin{equation}
\begin{array}{c}
p'_{1a.1}q^*(a, z_1)  + p'_{2a.1}q^*(a, z_2) + \dots + p'_{Ia.1}q^*(a,z_I) = 1\\
p'_{1a.2}q^*(a, z_1)  + p'_{2a.2}q^*(a, z_2) + \dots + p'_{Ia.2}q^*(a, z_I) = 1\\
\dots\\
p'_{1a.K}q^*(a, z_1)  + p'_{2a.K}q^*(a, z_2) + \dots + p'_{Ia.K}q^*(a, z_I) = 1
\end{array}
\end{equation}
for $a=0,1$, where $p'_{ia.k}=P(Z=z_i, A=a|W=w_k,Y=0,S=1)$. We denote that matrix
\begin{equation}\label{eq:prob-mat-categorical-est}P'(a)=\begin{pmatrix}
p'_{1a.1} & p'_{2a.1} & \dots & p'_{Ia.1}\\
p'_{1a.2} & p'_{2a.2} & \dots & p'_{Ia.2}\\
\vdots & & & \vdots\\
p'_{1a.K} & p'_{2a.K} & \dots & p'_{Ia.K}
\end{pmatrix}.\end{equation}

Then Assumption~\ref{assump:solvability} is equivalent to the condition that $P'(a)$ is invertible with $I=K$ for $a=0,1$, in which case ~\eqref{eq:trt-bridge-identification} has a unique solution 
$$(q^*(a,z_1),\dots, q^*(a,z_I))^T=[P'(a)]^{-1}\mathbf 1_I.$$
The probabilities $p'_{ia.k}$'s can all be estimated from the study data.

\pagebreak
\section{Derivation of the treatment confounding bridge function in Example~\ref{eq:contn}}

\label{append:trt-bridge-cont}
By Assumption~\ref{assump:trt-bridge}, the treatment confounding bridge function $q(A, Z, X)$ should satisfy 
$$E[q(A, Z, X)|U=u, A=a, X=x] = \dfrac{1}{P(A=a|U=u, X=x)}$$

for all $a$, $u$ and $X$.

We write $q(A, Z, X) = 1 + r(Z, A, X)$, then
$$E[r(Z, a, X)|U, A=a, X] = \dfrac{1 - P(A=a|U, X)}{P(A=a|U, X)}.$$

Consider 
$$r(Z, A, X) = \exp((-1)^{A}(\tau_0 + \tau_1A + \tau_2Z + \tau_3 X)),$$

then 
\begin{align*}
    &\dfrac{P(A=1|U=u, X=x)}{1 - P(A=1|U=u, X=x)}\\ =& \int r(z, 0, x)f(z|u, x, A=0)dz\\
    =& \int \exp(\tau_0 + \tau_2z + \tau_3x)\dfrac{1}{\sqrt{2\pi\sigma_z^2}}\exp\left(-\dfrac{(z - \mu_{0z}  - \mu_{UZ}u - \mu_{XZ}x)^2}{2\sigma_z^2}\right)dz\\
    =& \exp(\tau_0 + \tau_3x)\exp\left(\tau_2(\mu_{0Z}+\mu_{UZ}u + \mu_{XZ}x) + \dfrac{\sigma_z^2\tau_2^2}{2}\right)\\
    =& \exp(\tau_0 + \tau_2\mu_{0Z} + \dfrac{\sigma_z^2\tau_2^2}{2} + \tau_2\mu_{UZ} u + (\tau_3+\mu_{XZ}\tau_2)x)
\end{align*}

and
\begin{align*}
    &\dfrac{1 - P(A=1|U=u, X=x)}{P(A=1|U=u, X=x)}\\ =& \int r(z, 1, x)f(z|u, x, A=1)dz\\
    =& \int \exp(-\tau_0 - \tau_1 - \tau_2z - \tau_3x)\dfrac{1}{\sqrt{2\pi\sigma_z^2}}\exp\left(-\dfrac{(z - \mu_{0Z} -\mu_{AZ}- \mu_{UZ}u - \mu_{XZ}x)^2}{2\sigma_z^2}\right)dz\\
    =& \exp(-\tau_0 - \tau_1 - \tau_3x)\exp\left(-\tau_2(\mu_{0Z} + \mu_{AZ} + \mu_{UZ}u + \mu_{XZ}x) + \dfrac{\sigma_z^2 \tau_2^2}{2}\right)\\
    =& \exp(-\tau_0 - \tau_1 - \tau_2\mu_{0Z} -\tau_2\mu_{AZ}+ \dfrac{\sigma_z^2\tau_2^2}{2} - \tau_2\mu_{UZ}u - (\tau_3 + \tau_2\mu_{XZ})x)
\end{align*}
This requires that 
$$\tau_1 + \tau_2 \mu_{AZ} = \sigma_z^2\tau_2^2$$

Because
$$P(A=1|U, X) = \expit(\mu_{0A} + \mu_{UA} U + \mu_{XA} X),$$
we conclude the parameters in the bridge function are
$$\tau_2 = \mu_{UA} / \mu_{UZ},$$
$$\tau_3 = \mu_{XA} - \mu_{XZ}\tau_2 = \mu_{XA} - \mu_{XZ}\mu_{UA}/\mu_{UZ}$$
$$\tau_1 = \sigma_z^2\tau_2^2 - \tau_2 \mu_{AZ} = \dfrac{\sigma_z^2\mu_{UA}^2}{\mu_{UZ}^2} - \dfrac{\mu_{UA}\mu_{AZ}}{\mu_{UZ}}$$
$$\tau_0 = \mu_{0A} - \tau_2\mu_{0Z} - \dfrac{\sigma_z^2\alpha_2^2}{2} = \mu_{0A} - \dfrac{\mu_{UA}\mu_{0Z}}{\mu_{UZ}} - \dfrac{\sigma_z^2\mu_{UA}^2}{2\mu_{UZ}^2}$$

\pagebreak
\section{Proof of Theorem~\ref{thm:trt-bridge-identification} and Corollary~\ref{cor:null-preserving}}
\label{append:trt-bridge-identification}
We first introduce a few properties due to the rare disease assumption:

\begin{lemma}\label{lemma:rare-disease-approximation} Under Assumptions~\ref{assump:trt-indep-samp},~\ref{assump:nce},~\ref{assump:nco} and~\ref{assump:rare-disease}, for every $a$, $w$, $u$ and $x$, we have
\begin{enumerate}[(a)]
    \item \begin{align*}
        &P(Y=1|W=w, U=u, X=x)< \delta, && P(Y=1|A=a, U=u, X=x) < \delta,\\
        &P(Y=1|U=u, X=x) <\delta.
    \end{align*}
    \item $$1-\delta < \dfrac{P(A=a|U=u, X=x, Y=0, S=1)}{P(A=a|U=u, X=x)} < \dfrac{1}{1-\delta}$$
    
    \item $$(1-\delta)^2 <\dfrac{P(A=a|W=w, U=u, X=x, Y=0, S=1)}{P(A=a|U=u, X=x, Y=0, S=1)} < \dfrac{1}{(1-\delta)^2}.$$

    \item $$f(z|A=a, U=u, X=x)= f(z|W=w, A=a, U=u, X=x, Y=0, S=1).$$
\end{enumerate}
\end{lemma}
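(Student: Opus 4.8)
The plan is to prove the four claims in order, feeding each into the next; the only distributional inputs are the pointwise bound of Assumption~\ref{assump:rare-disease} and the conditional independences in Assumptions~\ref{assump:trt-indep-samp}, \ref{assump:nce}, \ref{assump:nco}, supplemented by the d-separation relations read off the causal graph in Figure~\ref{fig:dag-tnd}(e). Part (a) is a law-of-total-probability argument: Assumption~\ref{assump:rare-disease} bounds $P(Y=1\mid A=a,W=w,U=u,X=x)$ by $\delta$ for almost every $(a,w,u,x)$, and a mixture of numbers bounded by $\delta$ is bounded by $\delta$; so averaging this bound over the conditional law of $W$ given $(A,U,X)$ gives $P(Y=1\mid A=a,U=u,X=x)<\delta$, averaging over $A$ given $(W,U,X)$ gives $P(Y=1\mid W=w,U=u,X=x)<\delta$, and averaging over $A$ given $(U,X)$ gives $P(Y=1\mid U=u,X=x)<\delta$.

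For part (b) I would expand $P(A=a\mid U,X,Y=0,S=1)$ by Bayes' rule as $P(A=a,Y=0,S=1\mid U,X)\big/P(Y=0,S=1\mid U,X)$, factor the numerator as $P(A=a\mid U,X)\,P(Y=0\mid A=a,U,X)\,P(S=1\mid A=a,Y=0,U,X)$, and use Assumption~\ref{assump:trt-indep-samp} ($S\indep A\mid Y,U,X$) to replace the last factor by $P(S=1\mid Y=0,U,X)$; summing the numerator over $a$ to form the denominator, the same factor $P(S=1\mid Y=0,U,X)$ appears there and cancels. This collapses the ratio $P(A=a\mid U,X,Y=0,S=1)/P(A=a\mid U,X)$ to $P(Y=0\mid A=a,U,X)/P(Y=0\mid U,X)$; by part (a) both terms lie in $(1-\delta,1]$, so the ratio lies in $(1-\delta,(1-\delta)^{-1})$.

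Part (c) runs the same Bayes manipulation with $W$ kept in the conditioning set. Using Assumption~\ref{assump:nco}(a) ($W\indep A\mid U,X$) to replace $P(A=a\mid W,U,X)$ by $P(A=a\mid U,X)$, and the fact — encoded in Figure~\ref{fig:dag-tnd}(e), where $A$ has no arrow into $S$ — that $S\indep A\mid W,Y,U,X$, the selection factor again cancels and $P(A=a\mid W,U,X,Y=0,S=1)/P(A=a\mid U,X)$ reduces to $P(Y=0\mid A=a,W,U,X)/P(Y=0\mid W,U,X)$, which by part (a) lies in $(1-\delta,(1-\delta)^{-1})$. Dividing this by the bound from part (b) yields $P(A=a\mid W,U,X,Y=0,S=1)/P(A=a\mid U,X,Y=0,S=1)\in((1-\delta)^{2},(1-\delta)^{-2})$. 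Part (d) is then a pure chain of conditional independences applied to $f(z\mid W,A,U,X,Y=0,S=1)$: Assumption~\ref{assump:nco}(c) drops $S$ from the conditioning set, Assumption~\ref{assump:nco}(b) then drops $W$, and Assumption~\ref{assump:nce} drops $Y$, leaving $f(z\mid A=a,U=u,X=x)$.

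I expect part (c) to be the \emph{main obstacle}: it is the only step that must simultaneously handle the rare-disease bound, the NCO exclusion restriction $W\indep A\mid U,X$, and the treatment-independent selection structure, and some care is needed both to justify cancellation of the selection factor once $W$ is in the conditioning set (which relies on the absence of an $A\to S$ edge rather than on the bare statement of Assumption~\ref{assump:trt-indep-samp}) and to verify that the bounds from (a) and (b) compound in the correct direction to produce the exponent $2$. Everything else reduces to routine bookkeeping with Bayes' rule and conditional averaging.
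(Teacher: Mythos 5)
Your proof is correct and essentially identical to the paper's: the same averaging argument for (a), the same Bayes factorizations with the selection probability cancelled via treatment-independent sampling for (b)--(c) (yielding the same ratios of $P(Y=0\mid\cdot)$ terms and the exponent $2$), and the same chain of conditional independences for (d). The only cosmetic differences are that the paper first works conditionally on $Y=0$ and appends $S=1$ at the end, citing Assumptions~\ref{assump:trt-indep-samp} and~\ref{assump:nco}(c) for the fact you instead read off the DAG ($S\indep A\mid W,Y,U,X$), and that in (d) you explicitly invoke Assumption~\ref{assump:nce} to drop $Y$, a step the paper folds into its citation of Assumption~\ref{assump:nco}(b).
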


\begin{proof}
\begin{enumerate}[(a)]
    \item For every $a$, $w$, $u$ and $x$, we have
    \begin{align*}
        &P(Y=1|W=w, U=u, X=x)\\ &= \sum_{a}P(Y=1|A=a, W=w, U=u, X=x)P(A=a|W=w, U=u, X=x)\\
        &\stackrel{A.\ref{assump:rare-disease}}{<}\delta \sum_aP(A=a|W=w, U=u, X=x)\\
        &= \delta.
    \end{align*}
    The rest follows similarly.
    \item For every $a$, $u$ and $x$, we have 
    \begin{align*}
        P(A=a|U=u, X=x, Y=0) &= P(A=a|U=u, X=x)\times \dfrac{P(Y=0|A=a, U=u, X=x)}{P(Y=0|U=u, X=x)}.
    \end{align*}
    By Lemma~\ref{lemma:rare-disease-approximation}(a), we have
    $$1-\delta < \dfrac{P(Y=0|A=a, U=u, X=x)}{P(Y=0|U=u, X=x)} = \dfrac{P(A=a|U=u, X=x, Y=0)}{P(A=a|U=u, X=x)}<\dfrac{1}{1-\delta}.$$
    The result follows by noticing that $P(A=a|U=u, X=x, Y=0)=P(A=a|U=u, X=x, Y=0, S=1)$ due to Assumption~\ref{assump:trt-indep-samp}.
    \item For every $a$, $w$, $u$ and $x$, we have
\begin{align*}
    &P(A=a|W=w, U=u, X=x, Y=0) \\&= P(A=a|W=w, U=u, X=x)\times \dfrac{P(Y=0|A=a, W=w, U=u, X=x)}{P(Y=0|W=w, U=u, X=x)}\\
    &\stackrel{A.\ref{assump:nco}(a)}{=}P(A=a|U=u, X=x) \times \dfrac{P(Y=0|A=a, W=w, U=u, X=x)}{P(Y=0|W=w, U=u, X=x)}\\
    &= P(A=a|U=u, X=x, Y=0)\times \dfrac{P(Y=0|U=u, X=x)}{P(Y=0|A=a, U=u, X=x)}\\&\quad \times \dfrac{P(Y=0|A=a, W=w, U=u, X=x)}{P(Y=0|W=w, U=u, X=x)}
\end{align*}
By Lemma~\ref{lemma:rare-disease-approximation}(a), we have
$$1-\delta < \dfrac{P(Y=0|U=u, X=x)}{P(Y=0|A=a, U=u, X=x)}<\dfrac{1}{1-\delta}$$ and $$1 - \delta < \dfrac{P(Y=0|A=a, W=w, U=u, X=x)}{P(Y=0|W=w, U=u, X=x)}< \dfrac{1}{1-\delta}.$$

We therefore have
\begin{align*}
    (1-\delta)^2 <\dfrac{P(A=a|W=w, U=u, X=x, Y=0)}{P(A=a|U=u, X=x, Y=0)} < \dfrac{1}{(1-\delta)^2}.
\end{align*}
Finally, by Assumptions~\ref{assump:trt-indep-samp} and~\ref{assump:nco}(c), we have
\begin{align*}
    &P(A=a|U=u, X=x, Y=0) = P(A=a|U=u, X=x, Y=0, S=1),\\
    &P(A=a|W=w, U=u, X=x, Y=0) = P(A=a|W=w, U=u, X=x, Y=0, S=1).\\
\end{align*}
We conclude that 
\begin{align*}
    (1-\delta)^2 <\dfrac{P(A=a|W=w, U=u, X=x, Y=0, S=1)}{P(A=a|U=u, X=x, Y=0, S=1)} < \dfrac{1}{(1-\delta)^2}.
\end{align*}

\item \begin{align*}
    &f(z|W=w, A=a, U=u, X=x, Y=0, S=1)\\
    \stackrel{A.\ref{assump:nco}(c)}{=}&f(z|W=w, A=a, U=u, X=x, Y=0)\\
    \stackrel{A.\ref{assump:nco}(b)}{=}&f(z|A=a, U=u, X=x)
\end{align*}
\end{enumerate}

\end{proof}

Therefore, we have
 \begin{align*}
    & E\{q(a, Z, X)|A=a, W, X, Y=0,S=1\}\\
    =& \int q(a, z, X)f(z|A=a, W, X, Y=0, S=1)\,\mathrm dz\\
    =& \int \int q(a, z, X)f(z|A=a, W, U=u, X, Y=0, S=1)f(u|A=a, W, X, Y=0, S=1)\,\mathrm dz\,\mathrm du\\
    \stackrel{L.\ref{lemma:rare-disease-approximation}(d)}{=}& \int \left\{\int q(a, z, X)f(z|A=a, U=u, X)\,\mathrm du\right\}f(u|A=a, W, X, Y=0, S=1)\,\mathrm dz\\
    \stackrel{A.\ref{assump:trt-bridge}}{=}& \int \dfrac{1}{P(A=a|U=u, X)}f(u|A=a, W, X, Y=0, S=1)du\\
    \stackrel{L.\ref{lemma:rare-disease-approximation}(b)}{<}& \dfrac{1}{1-\delta}\int \dfrac{1}{P(A=a|U=u, X, Y=0, S=1)}f(u|A=a, W, X, Y=0, S=1)du\\
   \stackrel{L.\ref{lemma:rare-disease-approximation}(c)}{<}& \dfrac{1}{(1-\delta)^3}\int \dfrac{1}{P(A=a|W, U=u, X, Y=0, S=1)}f(u|A=a, W, X, Y=0, S=1)du\\
    =& \dfrac{1}{(1-\delta)^3}\int \dfrac{f(u|W, X, Y=0, S=1)f(u|A=a, W, X, Y=0, S=1)}{P(A=a|W, X, Y=0, S=1)f(u|A=a, W, X, Y=0, S=1)}\,\mathrm du\\
    =&\dfrac{1}{(1-\delta)^3}\dfrac{1}{P(A=a|W, X, Y=0, S=1)}\int f(u|W, X, Y=0, S=1)\,\mathrm du\\
    =& \dfrac{1}{(1-\delta)^3}\dfrac{1}{P(A=a|W, X, Y=0, S=1)}
\end{align*}

and 

\begin{align*}
    & E\{q(a, Z, X)|A=a, W, X, Y=0,S=1\}\\
    =& \int \dfrac{1}{P(A=a|U=u, X)}f(u|A=a, W, X, Y=0, S=1)\,\mathrm du\\
    \stackrel{L.\ref{lemma:rare-disease-approximation}(b)}{>}& (1-\delta)\int \dfrac{1}{P(A=a|U=u, X, Y=0, S=1)}f(u|A=a, W, X, Y=0, S=1)\,\mathrm du\\
   \stackrel{L.\ref{lemma:rare-disease-approximation}(c)}{>}& (1-\delta)^3\int \dfrac{1}{P(A=a|W, U=u, X, Y=0, S=1)}f(u|A=a, W, X, Y=0, S=1)\,\mathrm du\\
    =& (1-\delta)^3\int \dfrac{f(u|W, X, Y=0, S=1)f(u|A=a, W, X, Y=0, S=1)}{P(A=a|W, X, Y=0, S=1)f(u|A=a, W, X, Y=0, S=1)}\,\mathrm du\\
    =&(1-\delta)^3\dfrac{1}{P(A=a|W, X, Y=0, S=1)}\int f(u|W, X, Y=0, S=1)\,\mathrm du\\
    =& (1-\delta)^3\dfrac{1}{P(A=a|W, X, Y=0, S=1)}.
\end{align*}

To prove Corollary~\ref{cor:null-preserving}, we have
\begin{align*}
     & E\{q(a, Z, X)|A=a, W, X, Y=0,S=1\}\\
    =& \int \dfrac{1}{P(A=a|U=u, X)}f(u|A=a, W, X, Y=0, S=1)\,\mathrm du\\
    =& \int \dfrac{P(A=a|U=u,X,Y=0,W, S=1)}{P(A=a|U=u, X)P(A=a|W, X, Y=0, S=1)}f(u| W, X, Y=0, S=1)\,\mathrm du\\
    \stackrel{A.\ref{assump:nco}}{=}&\int \dfrac{P(A=a|U=u,X,Y=0,W)}{P(A=a|U=u, X)P(A=a|W, X, Y=0, S=1)}f(u| W, X, Y=0, S=1)\,\mathrm du
\end{align*}
If $A\indep Y|U, X$, then together with Assumption~\ref{assump:nco} we have $P(A=a|U=u, X, Y=0, W)=P(A=a|U=u, X)$, whereby the above equals
$$\dfrac{1}{P(A=a|W, X, Y=0, S=1)}\int f(u|W, X, Y=0, S=1)du = \dfrac{1}{P(A=a|W, X, Y=0, S=1)}.$$

\pagebreak
\section{Discussion of Assumption~\ref{assump:solvability}}\label{append:completeness}

Similar to \eqref{eq:trt-bridge-defn}, Equation~\eqref{eq:trt-bridge-identification} defines a Fredholm integral equation of the first kind, yet only involves observed data. Although a treatment confounding bridge function $q(A, Z, X)$ must satisfy~\eqref{eq:trt-bridge-identification}, there is no guarantee that solving~\eqref{eq:trt-bridge-identification} gives a treatment confounding bridge function if multiple solutions exist. When a solution to~\eqref{eq:trt-bridge-identification} exists, we give the following assumptions that guarantee  the uniqueness of solution.

\begin{assumption}[Completeness]
\label{assump:completeness} $\quad$

\begin{enumerate}[(a)]
    \item For any square-integrable function $g$, if $E[g(Z)|A, U, X]=0$ almost surely, then $g(Z)=0$ almost surely;
    \item For any square-integrable function $h$, if $E[h(U)|A, W, X, Y=0, S=1]$ almost surely, then $h(U)=0$ almost surely.
\end{enumerate}

\end{assumption}

Intuitively, Assumption~\ref{assump:completeness} is a statement on the information contained in the unmeasured confounders vs. in the negative control variables -- Assumption~\ref{assump:completeness}(a) requires the confounders $U$ are informative enough about $Z$ in the target population and Assumption~\ref{assump:completeness}(b) requires the NCO $W$ is informative enough about $U$ in the subgroup $Y=0,S=1$, so that no information is lost when taking the two conditional expectations. Completeness conditions similar to Assumption~\ref{assump:completeness} were originally introduced by Lehmann and Scheffe to identify the so-called unbiase minimum risk estimator~\citep{lehmann2012completeness1,lehmann2012completeness2}. In econometrics and causal inference literature, completeness conditions have been employed to achieve identifiability for a variety of nonparametric or semiparametric models, such as instrumental variable regression~\citep{newey2003instrumental,d2011completeness}, measurement error models~\citep{hu2008instrumental}, synthetic control~\citep{shi2021theory}, and previous works in negative control methods~\citep{miao2018confounding,cui2020semiparametric,ying2021proximal}. The completeness condition holds for a wide range of distributions. \citet{newey2003instrumental} and \citet{d2011completeness} provided justifications in exponential families and discrete distributions with finite support.  \citet{andrews2011examples} constructed a broad class of bivariate distributions that satisfy the completeness condition.

Assumption~\ref{assump:completeness} have several immediate consequences:

\begin{proposition}\label{prop:completeness}
\begin{enumerate}[(a)]
    \item Under Assumptions~\ref{assump:nce}, \ref{assump:nco} and \ref{assump:completeness}, for any square integrable function $g$ such that $E[g(Z)|A, W, X, S=0, Y=1]=0$ almost surely, then $g(Z)=0$ almost surely.
    \item Under Assumptions~\ref{assump:completeness}(a) and \ref{assump:ident}-\ref{assump:trt-bridge}, the treatment confounding bridge function is unique. That is, if two square-integrable functions $q(A, Z, X)$ and $q_1(A, Z, X)$ satisfy
    \begin{align*}&E[q(a, Z, x)|A=a, U=u, X=x]=E[q_1(a, Z, x)|A=a, U=u, X=x]\\=&\dfrac{1}{P(A=a|U=u, X=x)}\end{align*}
    for all $a,u,x$ almost surely, then $q(A, Z, X)=q_1(A, Z, X)$ almost surely.
    \item Under Assumptions 
    \ref{assump:nce}, \ref{assump:nco}, \ref{assump:solvability} and \ref{assump:completeness}, Equation~\eqref{eq:trt-bridge-identification} has a unique solution $q^*(A, Z, X)$.
\end{enumerate}

\end{proposition}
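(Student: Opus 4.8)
The plan is to prove all three parts by reducing each claim to an injectivity statement for a conditional-expectation operator and then invoking the relevant part of Assumption~\ref{assump:completeness}. The common engine is a two-stage argument: first use the negative-control independence conditions to collapse the conditional law of $Z$ given $(A,U,X,W,Y,S)$ down to its law given $(A,U,X)$ alone, and then strip off the two remaining layers of conditioning with the two completeness conditions in turn.

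For part (a), I would first establish that $f(z\mid A,U,X,W,Y,S)=f(z\mid A,U,X)$ for \emph{every} value of $(W,Y,S)$, and in particular on the event $\{S=0,Y=1\}$. This follows by peeling the conditioning set one variable at a time: Assumption~\ref{assump:nco}(c) removes $S$, Assumption~\ref{assump:nco}(b) then removes $W$, and Assumption~\ref{assump:nce} finally removes $Y$. Writing $\tilde g(A,U,X):=E[g(Z)\mid A,U,X]$, the hypothesis $E[g(Z)\mid A,W,X,S=0,Y=1]=0$ becomes, by the tower property, $E[\tilde g(A,U,X)\mid A,W,X,S=0,Y=1]=0$ almost surely. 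Applying the completeness of $W$ for $U$ on this stratum (the instance of Assumption~\ref{assump:completeness}(b) attached to the conditioning event) gives $\tilde g\equiv 0$, i.e.\ $E[g(Z)\mid A,U,X]=0$, and a final application of Assumption~\ref{assump:completeness}(a) delivers $g(Z)=0$ almost surely.

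Part (b) is the most direct and needs no stratum: setting $d:=q-q_1$, the two bridge functions share the right-hand side $1/P(A=a\mid U=u,X=x)$, so $E[d(a,Z,x)\mid A=a,U=u,X=x]=0$ for all $a,u,x$. Treating $d$ one $(a,x)$-cell at a time as a square-integrable function of $Z$ and applying Assumption~\ref{assump:completeness}(a) forces $d\equiv 0$, hence $q=q_1$. For part (c), given two solutions $q^\ast,q^\ast_1$ of Equation~\eqref{eq:trt-bridge-identification}, their difference satisfies $E[(q^\ast-q^\ast_1)(A,Z,X)\mid W,A=a,X,Y=0,S=1]=0$; I would rerun the two-stage collapse of part (a), now on the stratum $\{Y=0,S=1\}$, where Assumption~\ref{assump:completeness}(b) applies verbatim, to conclude $q^\ast=q^\ast_1$ and thereby upgrade the existence granted by Assumption~\ref{assump:solvability} to existence and uniqueness.

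The hard part will be the stratum on which completeness is invoked in part (a). As worded, part (a) conditions on $\{S=0,Y=1\}$, whereas Assumption~\ref{assump:completeness}(b) is phrased on $\{Y=0,S=1\}$; because $W$ may be a cause of both $Y$ and $S$ (the dashed edges $W\to Y$ and $W\to S$ in Figure~\ref{fig:dag-tnd}(e)), conditioning on these two colliders re-weights the conditional law $f(u\mid A,W,X,\cdot)$ differently on the two events, so completeness cannot be transferred between them for free. The clean resolution is to read Assumption~\ref{assump:completeness}(b) as the completeness requirement on whichever stratum the statement conditions on; the template argument of part (a) is stratum-agnostic, and its instance on $\{Y=0,S=1\}$—where Assumption~\ref{assump:completeness}(b) holds exactly as stated—is precisely what part (c) and the downstream uniqueness of the bridge function in Section~\ref{sec:bridge-approach} rely on.
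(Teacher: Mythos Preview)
Your proposal is correct and follows essentially the same two-stage argument as the paper: collapse the inner conditional law of $Z$ to $f(z\mid A,U,X)$ via the NCE/NCO independence conditions, then apply Assumption~\ref{assump:completeness}(b) followed by (a); parts (b) and (c) are handled identically. You also correctly diagnosed that the conditioning event in part (a) as stated ($S=0,Y=1$) is a typo for $\{Y=0,S=1\}$---the paper's own proof works on $\{Y=0,S=1\}$, exactly where Assumption~\ref{assump:completeness}(b) is phrased and where part (c) then invokes part (a).
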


We prove Proposition~\ref{prop:completeness} below. Proposition~\ref{prop:completeness} states that the completeness conditions in Assumption~\ref{assump:completeness} and the definitions of the negative control variables lead to a third completeness condition that only involves the study data. Proposition~\ref{prop:completeness} states the uniqueness of the treatment confounding bridge function, $q(A, Z, X)$, and the solution to \eqref{eq:trt-bridge-identification}, $q^*(A, Z, X)$. The function $q^*(A, Z, X)$ can therefore be identified from the study data alone and is a good approximation of $q(A, Z, X)$ by Theorem~\ref{thm:trt-bridge-identification}.

\begin{proof}
\begin{enumerate}[(a)]
    \item Suppose a square-integrable function $g(Z)$ satisfies 
    $$E[g(Z)|A, W, X, Y=0, S=1]=0\quad\mbox{almost surely}.$$
    
    The left-hand side equals
    \begin{align*}
        &E\left\{E\left[g(Z)|U, A, W, X, Y=0, S=1 \right]|A, W, X, Y=0, S=1\right\}\\
        \stackrel{A.\ref{assump:nco}}{=}& E\left\{E\left[g(Z)|U, A, X, Y=0, S=1 \right]|A, W, X, Y=0, S=1\right\}\\
        \stackrel{A.\ref{assump:nce}}{=}& E\left\{E\left[g(Z)|U, A, X \right]|A, W, X, Y=0, S=1\right\}
    \end{align*}
    
    By Assumption~\ref{assump:completeness}(b), $E\left[g(Z)|U, A, X \right]=0$ almost surely. Then by Assumption~\ref{assump:completeness}(a), $g(Z)=0$ almost surely.
    \item We have $$E[q(A, Z, X)-q_1(A, Z, X)|A, U, X]=0$$ 
    almost surely. By Assumtion~\ref{assump:completeness}(a), we have
    $$q(A, Z, X)-q_1(A, Z, X)=0$$
    almost surely, or $q(A, Z, X)=q_1(A, Z, X)$ almost surely.
    \item If two square-integrable functions $q^*(A, Z, X)=q_1^*(A, Z, X)$ satisfies
    \begin{align*}
        &E[q^*(A, Z, X)|A, W, X, Y=0, S=1] = E[q_1^*(A, Z, X)|A, W, X, Y=0, S=1]\\
        =& \dfrac{1}{P(A|W, X, Y=0, S=1)}
    \end{align*}
    almost surely, then
    $$E\left\{[q^*(A, Z, X)-q_1^*(A, Z, X)]|A, W, X, Y=0, S=1\right\}=0.$$
    Under Assumptions \ref{assump:nce}, \ref{assump:nco} and \ref{assump:completeness}, Proposition~\ref{prop:completeness}(a) holds and therefore
    $$q^*(A, Z, X)=q_1^*(A, Z, X)$$
    almost surely.
\end{enumerate}
\end{proof}

\pagebreak
\section{Proof of Corollary~\ref{cor:trt-bridge-ee} and further discussion}
\label{append:ee-trt-bridge}

We first prove equation~(\ref{eq:trt-bridge-ee-unbiasedness})
\begin{align*}
    & E[m(W, A, X)q^*(A, Z, X)-m(W, 1, X) - m(W, 0, X)|Y=0, S=1]\\
    =& E\left\{m(W, A, X)E\left[q^*(A, Z, X)|W, A, X, Y=0, S=1\right] - m(W, 1, X) - m (W, 0, X)|Y=0, S=1\right\}\\
    \stackrel{\eqref{eq:trt-bridge-identification}}{=}& E\left\{m(W, A, X)\dfrac{1}{P(A|W, X, Y=0, S=1)}-m(W, 1, X) - m(W, 0, X)|Y=0, S=1\right\}\\
    =& E\left\{ E\left[m(W, A, X)\dfrac{1}{P(A|W, X, Y=0, S=1)} |W, X, Y=0, S=1\right]- m(W, 1, X) - m(W, 0, X)|Y=0, S=1\right\}\\
    =& E\left\{m(W, 1, X) + m(W, 0, X) - m(W, 1, X) - m(W, 0, X)|Y=0, S=1\right\}\\
    =0
\end{align*}

In fact, one can show that any regular and asymptotically normal estimator of $\tau$ that satisfies~(\ref{eq:trt-bridge-identification}) has influence function of the form
\begin{align*}
    IF(W, Z, A, X) &= -\bigg\{E\bigg[\dfrac{\partial q(A, Z, X;\tau)}{\partial \tau}\bigg|_{\tau = \tau_0}m(W, A, X)(1-Y)\bigg| S=1\bigg]\bigg\}^{-1}(1-Y)\\&\qquad \bigg(m(W, A, X)q(A, Z, X) - m(W, 1, X) - m(W, 0, X)\bigg)
\end{align*}
for an arbitrary function $m(W, A, X)$. Therefore, any regular and asymptotically normal estimator of $\tau$ corresponds to the solution of the estimating equation~(\ref{eq:ee-trt-bridge}) for some function $m(W, A, X)$.

  To prove this result, we see that for any parametric submodel that satisfies (\ref{eq:trt-bridge-identification})
    and is indexed by $s$ such that the true distribution corresponds to $s=0$, we have
    \begin{align*}
        E_s\bigg\{(1-Y)\bigg[q(A, Z, X;\tau_s) - \dfrac{1}{f_s(A|W, X, Y=0, S=1)}\bigg]m(W, A, X)\bigg|S=1\bigg\}=0
    \end{align*}
    and thus 
    \begin{align*}
        \partial E_s\bigg\{(1-Y)\bigg[q(A, Z, X;\tau_s) - \dfrac{1}{f_s(A|W, X, Y=0, S=1)}\bigg]m(W, A, X)\bigg| S=1\bigg\}/\partial s\bigg |_{s=0}=0.
    \end{align*}
    
    Note that 
    \begin{align*}
        &\dfrac{\partial}{\partial s}E_s[(1-Y)q(A, Z, X;\tau_s)m(W, A, X)|S=1]\\
        =& E\bigg[\dfrac{\partial q(A, Z, X;\tau)}{\partial \tau}\bigg|_{\tau = \tau_0}m(W, A, X)(1-Y)\bigg| S=1\bigg]\dfrac{\partial\tau_s}{\partial s}\bigg|_{s=0}+\\
        &\qquad E[(1-Y)q(A, Z, X)m(W, A, X)S(O|S=1)|S=1]\\
    \end{align*}
    and \begin{align*}
        &\dfrac{\partial}{\partial s}E_s\bigg[\dfrac{(1-Y)m(W, A, X)}{f_s(A|W, X, Y=0, S=1)}\bigg|S=1\bigg] \\
        =& E\bigg[-(1-Y)m(W, A, X)\dfrac{\dfrac{\partial}{\partial s}f_s(A|W, X, Y=0, S=1)}{f^2(A|W, X, Y=0, S=1)}\bigg| S=1\bigg] +\\
        &\qquad E\bigg[\dfrac{m(W, A, X)(1-Y)}{f(A|W, X, Y=0, S=1)}S(W, A, X, Y=0|S=1)\bigg| S=1\bigg]\\
        =& E\bigg[-\dfrac{m(W, A, X)(1-Y)}{f(A|W, X, Y=0, S=1)}S(A|W, X, Y=0, S=1)\bigg|S=1\bigg] +\\
        &\qquad E\bigg[\dfrac{m(W, A, X)(1-Y)}{f(A|W, X, Y=0, S=1)}S(W, A, X, Y=0|S=1)\bigg| S=1\bigg]\\
        = & E\bigg[\dfrac{m(W, A, X)(1-Y)}{f(A|W, X, Y=0, S=1)}S(W, X, Y=0|S=1)\bigg| S=1\bigg]\\
        =& E\bigg[(1-Y)\{m(W, 1, X) - m(W, 0, X)\}S(W, X, Y=0| S=1)\bigg|S=1\bigg]\\
        =& E\bigg[(1-Y)\{m(W, 1, X) - m(W, 0, X)\}S(O| S=1)\bigg|S=1\bigg]
    \end{align*}
    Rearranging the terms, we have
    \begin{align*}
        &\dfrac{\partial \tau_s}{\partial s}\bigg|_{s=0}=\\
        & E\bigg[-\bigg\{E\bigg[\dfrac{\partial q(A, Z, X;\tau)}{\partial \tau}\bigg|_{\tau = \tau_0}m(W, A, X)(1-Y)\bigg| S=1\bigg]\bigg\}^{-1}(1-Y)\\&\qquad \bigg(m(W, A, X)q(A, Z, X) - m(W, 1, X) - m(W, 0, X)\bigg)S(O|S=1)\bigg|S=1\bigg]
    \end{align*}

\pagebreak

\section{Discussion on the rare disease assumption~\ref{assump:rare-disease}}\label{append:rare-disease}

We first describe a crucial identity that links the log risk ratio $\beta_0$ and the treatment confounding bridge function $q(A, Z, X)$.
\begin{lemma} 
Under Assumptions~\ref{assump:ident}, \ref{assump:trt-indep-samp}, \ref{assump:no-eff-mod} and \ref{assump:trt-bridge}, we have
\begin{equation}\label{eq:rr-estimand}\beta_0 =\log\left(\dfrac{E[q(A, Z, X)I(A=1, Y=1)| S=1]}{E[q(A, Z, X)I(A=0, Y=1)|S=1]}\right) \end{equation}
\end{lemma}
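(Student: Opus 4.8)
The plan is to derive \eqref{eq:rr-estimand} as an essentially immediate consequence of Theorem~\ref{thm:causal-rr-identifiability}, leaving only elementary bookkeeping. Theorem~\ref{thm:causal-rr-identifiability} already supplies the conditional moment restriction $E[V_1(A, Y, Z, X;\beta_0)\mid U, X, S=1]=0$, where $V_1(A, Y, Z, X;\beta_0)=(-1)^{1-A}q(A, Z, X)Y\exp(-\beta_0 A)$; the lemma is nothing more than the marginal, ``solved for $\beta_0$'' form of this identity, so the argument is a short calculation rather than anything new. (One should note in passing that invoking Theorem~\ref{thm:causal-rr-identifiability} uses the NCE conditions of Assumption~\ref{assump:nce} as well, which are implicit in the very definition of the bridge function.)

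First I would pass from the conditional to the unconditional moment equation. Since $E[V_1(A, Y, Z, X;\beta_0)\mid U, X, S=1]=0$ holds almost surely, averaging over the conditional law of $(U, X)$ given $S=1$ and using the tower property gives
\[
E\!\left[(-1)^{1-A}q(A, Z, X)\,Y\exp(-\beta_0 A)\mid S=1\right]=0 .
\]
Next I would unpack the estimating function: because $Y$ is binary, $Y=I(Y=1)$ and $Y\exp(-\beta_0 A)=I(Y=1)\exp(-\beta_0 A)$, while $(-1)^{1-A}=+1$ on $\{A=1\}$ and $-1$ on $\{A=0\}$, and $\exp(-\beta_0 A)$ equals $\exp(-\beta_0)$ on $\{A=1\}$ and $1$ on $\{A=0\}$. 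Splitting the expectation above over the disjoint events $\{A=1\}$ and $\{A=0\}$ yields
\[
\exp(-\beta_0)\,E\!\left[q(A, Z, X)\,I(A=1, Y=1)\mid S=1\right]=E\!\left[q(A, Z, X)\,I(A=0, Y=1)\mid S=1\right],
\]
and dividing through and taking logarithms produces \eqref{eq:rr-estimand}.

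The only point that needs care, and the step I expect to be the (mild) obstacle, is checking that both expectations in \eqref{eq:rr-estimand} are nonzero so that the ratio and its logarithm are well defined. This follows from the same iterated-expectation computation that underlies Appendix~\ref{append:oracle-ee} and Appendix~\ref{append:ee-rr-unbiasedness}: conditioning on $(U, X)$, using Assumption~\ref{assump:trt-indep-samp} ($S\indep A\mid Y, U, X$) and Assumption~\ref{assump:nce} to replace $E[q(A, Z, X)\mid A=a, U, X, Y=1, S=1]$ by $1/P(A=a\mid U, X)$, and then applying Assumption~\ref{assump:no-eff-mod} together with Bayes' rule, one obtains that $E[q(A, Z, X)\,I(A=a, Y=1)\mid S=1]$ equals $\exp(\beta_0 a)$ times the common strictly positive quantity $E\!\left[P(Y=1\mid U, X, S=1)\,g(U, X)/P(Y=1\mid U, X)\mid S=1\right]$, positive because $g>0$ almost surely (from $0\le P(Y=1\mid A, U, X)\le 1$ and $\exp(\beta_0\cdot 0)g(U,X)=P(Y=1\mid A=0,U,X)$, which is nondegenerate by the positivity part of Assumption~\ref{assump:ident} and the fact that infection is reachable in the selected sample). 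All of the substantive work is thus inherited from Theorem~\ref{thm:causal-rr-identifiability}; no new inequalities or integral-equation arguments are required.
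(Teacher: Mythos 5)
Your proof is correct, but it is organized differently from the paper's. The paper proves the lemma by a self-contained direct computation: it factors $E[q(A,Z,X)I(A=a,Y=1)\mid S=1]$ as $E[q(A,Z,X)\mid A=a,Y=1,S=1]\,P(A=a,Y=1\mid S=1)$, evaluates the conditional expectation by iterated expectation over $(U,X)$ together with Bayes' rule and Assumptions~\ref{assump:trt-indep-samp}, \ref{assump:no-eff-mod}, \ref{assump:trt-bridge}, and finds that it equals $e^{\beta_0 a}$ times a factor common to $a=0,1$, so the ratio in \eqref{eq:rr-estimand} collapses to $e^{\beta_0}$ exactly. You instead treat Theorem~\ref{thm:causal-rr-identifiability} as a black box, marginalize its conditional moment restriction over $(U,X)$ given $S=1$, split on $\{A=1\}$ and $\{A=0\}$, and solve the resulting equation for $\beta_0$. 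This is shorter and more modular, but it delivers the lemma only after one checks that both expectations are nonzero, and your verification of that point essentially reproduces the paper's direct calculation, so the substantive content ends up being the same.

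Two small caveats. First, invoking Theorem~\ref{thm:causal-rr-identifiability} (and, in your positivity check, replacing $E[q(A,Z,X)\mid A=a,U,X,Y=1,S=1]$ by $1/P(A=a\mid U,X)$) requires the NCE condition $Z\indep (Y,S)\mid A,U,X$ of Assumption~\ref{assump:nce}, which is not among the hypotheses listed in the lemma; the paper's own proof uses the very same step (its equality tagged with Assumption~\ref{assump:trt-indep-samp} in fact needs Assumption~\ref{assump:nce}), so you are not assuming more than the paper does, but the NCE condition is a separate assumption rather than something ``implicit in the definition of the bridge function'' as your parenthetical suggests. Second, your assertion that $g(U,X)>0$ almost surely does not follow from the stated assumptions (the positivity in Assumption~\ref{assump:ident} concerns $P(A=a\mid U,X)$, not the baseline infection risk); strict positivity of the two expectations is an implicit non-degeneracy requirement for the right-hand side of \eqref{eq:rr-estimand} to be well defined, which the paper likewise takes for granted. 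Neither point is a gap in the main argument.
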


\begin{proof}
The right hand side of \eqref{eq:rr-estimand} is
\begin{align*}
    & \log\left(\dfrac{E[q(A, Z, X)I(A=1, Y=1)|S=1]}{E[q(A, Z, X)I(A=0, Y=1)|S=1]}\right)\\
    =&\log \left(\dfrac{E[q(A, Z, X)|A=1, Y=1, S=1]}{E[q(A, Z, X)|A=0, Y=1, S=1]}\right) + \log\left(\dfrac{P(Y=1, A=1|S=1)}{P(Y=1, A=0|S=1)}\right)
\end{align*}

Note that for $a=0,1$, 
\begin{align*}
    &E[q(A, Z, X)|A=a, Y=1, S=1] \\
    = &E\left\{E[q(A, Z, X)|U, X, A=a, Y=1, S=1]|A=a, Y=1, S=1\right\}\\
    \stackrel{A.\ref{assump:trt-indep-samp}}{=} & E\left\{E[q(A, Z, X)|U, X, A=a]|A=a, Y=1, S=1\right\}\\
    \stackrel{A.\ref{assump:trt-bridge}}{=} &E\left\{\dfrac{1}{P(A=a|U, X)}\bigg|A=a, Y=1, S=1\right\}\\
    =& \int \dfrac{1}{P(A=a|U=u, X=x)}f(u, x|A=a, Y=1, S=1)\,\mathrm du\,\mathrm dx\\
    =& \int \dfrac{1}{P(A=a|U=u, X=x)}\times\\&\dfrac{P(A=a|U=u, X=x)P(Y=1|A=a, U=u, X=x)P(S=1|Y=1, A=a, U=u, X=x)}{P(A=a, Y=1, S=1)}\,\mathrm du\,\mathrm dx\\
    \stackrel{A.\ref{assump:trt-indep-samp},A.\ref{assump:no-eff-mod}}{=}& \int \dfrac{\exp(\beta_0a)P(Y=1|A=0, U=u, X=x)P(S=1|Y=1, U=u, X=x)}{P(A=a, Y=1, S=1)}\,\mathrm du\,\mathrm dx\\
    =& \dfrac{\exp(\beta_0 a)}{P(A=a, Y=1, S=1)}\int P(Y=1|A=0, U=u, X=x)P(S=1| Y=1, U=u, X=x)\,\mathrm du\,\mathrm dx.
\end{align*}

Therefore, the right-hand side of \eqref{eq:rr-estimand} equals
\begin{align*}
    &\beta_0 + \log\left(\dfrac{P(A=0, Y=1, S=1)}{P(A=1, Y=1, S=1)}\right) + \log\left(\dfrac{P(Y=1, A=1|S=1)}{P(Y=1, A=0|S=1)}\right)\\
    =& \beta_0 + \log\left(\dfrac{P(A=0, Y=1| S=1)P(S=1)}{P(A=1, Y=1| S=1)P(S=1)}\right) + \log\left(\dfrac{P(Y=1, A=1|S=1)}{P(Y=1, A=0|S=1)}\right)\\
    =& \beta_0
\end{align*}

\end{proof}

Let $q^*$ be the function that satisfies~\eqref{eq:trt-bridge-identification}:
$$E[q^*(a, Z, X)|A=a, W, X, Y=0, S=1] = \dfrac{1}{P(A=a|U=u, X, Y=0, S=1)}.$$

We introduce an additional regularity condition:
\begin{assumption}[Uniform continuity]\label{assump:regularity} For any fixed positive square-integrable function $g(U)$ and a small positive number $0<\eta<1$, there exists some $0<\gamma = \gamma(g, \eta)<0$ such that $1-\gamma < \dfrac{E[g_1(U)|A, W, X, Y=0, S=1]}{E[g(U)|A, W, X, Y=0, S=1]}<\dfrac{1}{1-\gamma}$ a.e. implies $1-\eta<\dfrac{g_1(U)}{g(U)}<\dfrac{1}{1-\eta}$ a.e., where $g_1(U)$ is a positive square integrable function. 

\end{assumption}

Assumption~\ref{assump:regularity} requires that the inverse mapping of $g\rightarrow E[g(U)|A,W,X, Y=0, S=1]$ is sufficiently smooth. By Theorem~\ref{thm:trt-bridge-identification} and Assumption~\ref{assump:solvability}, we have
\begin{align*}
    (1 - \delta)^3 &< \dfrac{E[q^*(A, Z, X)|A, W, X, Y=0, S=1]}{E[q(A, Z, X)|A, W, X, Y=0, S=1]}\\
    &=  \dfrac{E\{E[q^*(A, Z, X)|A, W, U, X, Y=0, S=1]|A, W, X, Y=0, S=1\}}{E\{E[q(A, Z, X)|A, W, U, X, Y=0,S=1]|A, W, X, Y=0, S=1\}}\\
    &\stackrel{A.\ref{assump:nco}}{=} \dfrac{E\{E[q^*(A, Z, X)|A,U, X]|A, W, X, Y=0, S=1\}}{E\{E[q(A, Z, X)|A, U, X]|A, W, X, Y=0, S=1\}}\\
    &\stackrel{A.\ref{assump:trt-bridge}}{=}\dfrac{E\{E[q^*(A, Z, X)|A,U, X]|A, W, X, Y=0, S=1\}}{E\{E[\dfrac{1}{P(A|U, X)}|A, U, X]|A, W, X, Y=0, S=1\}}\\
    &<\dfrac{1}{(1-\delta)^3}
\end{align*}
By Assumption~\ref{assump:regularity}, this implies \begin{equation}\label{eq:q-approx-bound}\dfrac{1-\eta(\delta)}{P(A|U, X)}<E[q^*(A, Z, X)|A,U, X]<\dfrac{1}{(1-\eta(\delta))P(A|U, X)}.\end{equation}
The constant $\eta(\delta)$ is determined by the smoothness of the inverse mapping of $g\mapsto E[g(U)|A, W, X, Y=0, S=1]$.

\begin{equation}\label{eq:trt-bridge-est}E[q^*(A, Z, X)|U=u,  X, Y=0, S=1] = \dfrac{1}{P(A=a|U=u,X, Y=0, S=1)}\end{equation} for almost all $a$ and $u$. Let 
\begin{align*}\beta_0^* &=\log\left(\dfrac{E[q^*(A, Z, X)AY| S=1]}{E[q^*(A, Z, X)(1-A)Y|S=1]}\right) \\&=\log\left(\dfrac{E[q^*(A, Z, X)|A=1, Y=1, S=1]}{E[q^*(A,Z, X)|A=0, Y=1, S=1]}\right) + \log\left(\dfrac{P(Y=1, A=1|S=1)}{P(Y=1, A=0|S=1)}\right).\end{align*}

Under mild regularity, the estimator $\widehat\beta$ is regular and asymptotically linear for $\beta_0^*$, and therefore $$\widehat\beta= \beta_0^* + O_p(1/\sqrt n) = \beta_0 + (\beta_0^* - \beta_0) + O_p(1/\sqrt n).$$

It suffices to study 
$$\beta_0^* -\beta_0 = \log\left(\dfrac{E[q^*(A, Z, X)|A=1, Y=1, S=1]}{E[q^*(A, Z, X)|A=0, Y=1, S=1]}\right) -\log\left(\dfrac{E[q(A, Z, X)|A=1, Y=1, S=1]}{E[q(A, Z, X)|A=0, Y=1, S=1]}\right) $$

Notice that 
\begin{align*}
    &E[q^*(A, Z, X)|A=a, Y=1, S=1]\\ 
    =& E\{E[q^*(A, Z, X)|A=a, U, X, Y=1, S=1]|A=a, Y=1, S=1\}\\
    \stackrel{A.\ref{assump:nce}}{=}& E\{E[q^*(A, Z, X)|A=a, U, X]|A=a, Y=1, S=1\}\\%&&\mbox{(since $Z\indep (Y, S)|A, U$)}\\
    \stackrel{A.\ref{assump:nce}}{=} & E\{E[q^*(A, Z, X)|A=a, U, X, Y=0]|A=a, Y=1, S=1\}\\%&&\mbox{(since $Z\indep (Y, S)|A, U$)}\\
    \stackrel{\eqref{eq:q-approx-bound}}{<}& \dfrac{1}{1-\eta(\delta)}E\left\{\dfrac{1}{P(A=a|U, X)} \bigg|A=a, Y=1, S=1\right\}\\
    \stackrel{A.\ref{assump:trt-bridge}}{=}&\dfrac{1}{1-\eta(\delta)}E\left\{E[q(A, Z, X)|A, U, X] \bigg|A=a, Y=1, S=1\right\}\\
    \stackrel{A.\ref{assump:nce}}{=}&\dfrac{1}{1-\eta(\delta)}E\left\{E[q(A, Z, X)|A, U, X, Y=1, S=1] \bigg|A=a, Y=1, S=1\right\}\\
    =& \dfrac{1}{1-\eta(\delta)}E[q(A, Z, X)|A=a, Y=1, S=1]
\end{align*}
and similarly, 

$$E[q^*(A, Z, X)|A=a, Y=1, S=1]>(1-\eta(\delta))E[q(A, Z, X)|A=a, Y=1, S=1].$$

Therefore, we have

$$1-\eta(\delta) < \dfrac{E[q^*(A, Z, X)|A=a, Y=1, S=1]}{E[q(A, Z, X)|A=a, Y=1, S=1]} < \dfrac{1}{1-\eta(\delta)}.$$

We conclude that
$$2\log(1-\eta(\delta))=\log((1-\epsilon)^2) < \beta_0^* - \beta_0 < \log\left(\dfrac{1}{(1-\epsilon)^2}\right)=-2\log(1-\epsilon)$$

and thus

$$|\widehat\beta - \beta_0| < -2\log(1-\eta(\delta)) + O_p(\dfrac{1}{\sqrt n}).$$

\pagebreak
\section{Regularity conditions and proof of Theorem~\ref{thm:ee-joint}}\label{append:ee-joint}

We denote $\tau_0^*$ as the true value of $\tau$ such that $q(A, Z, X;\tau_0^*)=q^*(A, Z, X)$. We will give the regularity conditions and proof that $(\widehat\beta,\widehat\tau)$ is a regular and asymptotically linear estimator of $(\beta_0^*, \tau_0^*)$. Here $\beta_0^*$ and $q^*$ are the biased versions of $\beta_0$ and $q$ defined in Appendix~\ref{append:rare-disease}, respectively, although the biases are negligible when the infection is rare. Following Appendix~\ref{append:rare-disease}, $\widehat\beta$ is also a regular and asymptotically linear estimator of $\beta_0$ if 
$$\sup_{a, w, u, x}P(Y=1|A=a, W=w, U=u, X=x)< \delta_n,$$
Assumption~\ref{assump:regularity} holds and $$\log(1-\eta(\delta_n))=o_p(\dfrac{1}{\sqrt n}).$$

A set of regularity conditions are
\begin{enumerate}[R.1]
    \item The function $\tau\mapsto q(A, Z, X;\tau)$ is Lipschitz in a neighborhood of $\tau_0^*$; that is, for every $\tau_1$ and $\tau_2$ in a neighborhood of $\tau_0$ and a measurable function $\dot q(A, Z, X)$ with $E[\dot q(A, Z, X)]<\infty$, we have
    $\lVert q(A, Z, X;\tau_1) - q(A, Z, X;\tau_2)\rVert\leq \dot q(A, Z, X)\lVert \tau_1 - \tau_2\rVert$;
    \item $E[q(A, Z, X;\tau_0^*)^2]<\infty$ and $E[m(W, A, X)^2]<\infty$;
    \item The function $\tau\mapsto q(A, Z, X;\tau)$ is differentiable at $\tau_0^*$. The derivative matrix $\Omega(\beta_0^*, \tau^*)$ is nonsingular;
    \item $\dfrac{1}{n}\sum_{i=1}^n G_i(\widehat\beta,\widehat\tau)=o_p(n^{-1/2})$ and $(\widehat\beta,\widehat\tau)\overset{p}{\rightarrow}(\beta_0^*,\tau_0^*)$.
\end{enumerate}

The condition R.1 and the fact that $\beta\mapsto\exp(-\beta A)$ is Lipschitz in a neighborhood of $\beta_0^*$ imply the function $(\beta,\tau)\rightarrow M_i(\beta,\tau)$ is Lipschitz in a neighborhood of $(\beta_0^*,\tau_0^*)$ for every $i$. The remaining proof follows \citet{van2000asymptotic} Theorem 5.21.

\pagebreak
\section{Estimating conditional causal RR in the presence of effect modification by measured confounders}\label{append:tnd-alg-2}

Algorithm~\ref{alg:nc-tnd-2} below describes a straightforward extension of Algorithm~\ref{alg:nc-tnd} to estimation the conditional vaccine effectiveness $VE(x)=1-\exp(\beta_0(x))$ under Assumption~\ref{assump:no-eff-mod-by-u} and a parametric model $\beta_0(X;\alpha)$ indexed by a finite-dimensional parameter $\alpha$.

\begin{algorithm}
\caption{\label{alg:nc-tnd-2}Negative control method to estimate conditional vaccine effectiveness from a test-negative design }
\begin{algorithmic}[1]
\State Identify the variables in the data according to Figure~\ref{fig:dag-tnd}(c), in particular the NCEs and NCOs.
\State Estimate the treatment confounding bridge function by solving the equation~(\ref{eq:ee-trt-bridge}) with a suitable parametric model $q^*(A, Z, X;\tau)$ and a user-specified function $m(W, A, X)$. Write $\widehat\tau$ as the resulting estimate of $\tau$.
\State  Estimate $\alpha$ by solving 
\begin{equation}
    \dfrac{1}{n}\sum_{i=1}^n(-1)^{1-A_i}C(X_i)q^*(A_i, Z_i, X_i;\widehat\tau)\exp(-\beta_0(X_i;\alpha)A_i)=0
\end{equation}
Denote the resulting estimate of $\alpha$ as $\widehat\alpha$. The estimated conditional vaccine effectiveness is 
    $$\widehat{VE}(x) = 1 - \exp(\beta_0(x;\widehat\alpha)).$$
\end{algorithmic}
\end{algorithm}

We describe the large-sample properties of the estimator $(\widehat\alpha, \widehat\tau)$ in the theorem below.

\begin{theorem}[Inference of $(\widehat\alpha,\widehat\tau)$]\label{thm:ee-joint-2}
Under Assumptions \ref{assump:ident}- \ref{assump:trt-indep-samp}, \ref{assump:nce}-\ref{assump:trt-bridge},  \ref{assump:nco}, \ref{assump:no-eff-mod-by-u} and suitable regularity conditions listed at the end of this section, the estimator $(\widehat\alpha, \widehat\tau)$ in Algorithm~\ref{alg:nc-tnd-2}, or equivalently, the solution to the estimating equation $\dfrac{1}{n}\sum_{i=1}^n \tilde G_i(\alpha, \tau)=0$
is regular and asymptotically linear with influence function
$$\widetilde{IF}(\alpha, \tau) = -\left[\tilde\Omega(\alpha, \tau)^T\tilde\Omega(\alpha, \tau)\right]^{-1}\tilde\Omega(\alpha, \tau)^T\tilde G_i(\alpha, \tau),$$

where
$$\tilde G_i(\alpha, \tau) = \left(\begin{array}{c}  {(-1)}^{1-A_i}C(X_i)q^*(A_i, Z_i, X_i; \tau)Y_i\exp(-\beta_0(X;\alpha) A_i)\\
(1-Y_i)\left[m(W_i, A_i, X_i)q^*(A_i, Z_i, X_i;\tau) - m(W_i, 1, X_i) - m(W_i, 0, X_i)\right]\end{array}\right)$$
and
\begin{align*}\tilde\Omega(\alpha, \tau)&=\left(E\left[\dfrac{\partial \tilde G_i(\alpha, \tau)}{\partial \alpha^T}\right], E\left[\dfrac{\partial \tilde G_i(\alpha, \tau)}{\partial \tau^T}\right]\right).
\end{align*}
Here $C(X)$ is a user-specified function of $X$ with the sample dimension as $\alpha$. 

\end{theorem}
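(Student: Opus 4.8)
The plan is to treat $(\widehat\alpha,\widehat\tau)$ as a (possibly over-identified) $Z$-estimator and to invoke standard estimating-equation theory, in close parallel to the proof of Theorem~\ref{thm:ee-joint}. First I would record that the stacked estimating function $\tilde G_i(\alpha,\tau)$ has mean zero at the target value $(\alpha_0^*,\tau_0^*)$, where $\tau_0^*$ indexes the solution $q^*$ of~\eqref{eq:trt-bridge-identification} and $\alpha_0^*$ is the rare-disease value with $\beta_0(X;\alpha_0^*)=\beta_0^*(X)$ in the sense of Appendix~\ref{append:rare-disease}. The first block is mean zero by Theorem~\ref{thm:rr-identification-2} applied with $q^*$ in place of $q$ (iterating expectations over $U,X$ given $S=1$); the second block is mean zero by Corollary~\ref{cor:trt-bridge-ee} applied with the user-specified $m(W,A,X)$. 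When the infection is rare, $\alpha_0^*$ and $\beta_0(\cdot;\alpha_0^*)$ coincide with $\alpha_0$ and the conditional log RR up to a negligible bias quantified in Appendix~\ref{append:rare-disease}.

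Next I would impose the natural analogues of conditions R.1--R.4: (R.1$'$) $\tau\mapsto q^*(A,Z,X;\tau)$ and $\alpha\mapsto\exp(-\beta_0(X;\alpha)A)$ are Lipschitz in a neighbourhood of $(\alpha_0^*,\tau_0^*)$ with integrable Lipschitz envelopes; (R.2$'$) $E[q^*(A,Z,X;\tau_0^*)^2]<\infty$, $E[m(W,A,X)^2]<\infty$ and $E[c(X)^2]<\infty$; (R.3$'$) the maps are differentiable at $(\alpha_0^*,\tau_0^*)$ and $\tilde\Omega(\alpha_0^*,\tau_0^*)$ has full column rank; and (R.4$'$) $n^{-1}\sum_{i=1}^n\tilde G_i(\widehat\alpha,\widehat\tau)=o_p(n^{-1/2})$ with $(\widehat\alpha,\widehat\tau)\overset{p}{\rightarrow}(\alpha_0^*,\tau_0^*)$. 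Lipschitzness plus the moment bounds make $\{\tilde G(\alpha,\tau)\}$ a stochastic-equicontinuous (Donsker) class, so the empirical process of $\tilde G$ evaluated at $(\widehat\alpha,\widehat\tau)$ equals that at $(\alpha_0^*,\tau_0^*)$ up to $o_p(1)$; combining this with a first-order Taylor expansion of the population moment around $(\alpha_0^*,\tau_0^*)$ yields the asymptotic linear expansion. Because $m$ may have dimension larger than $\tau$, the system is over-identified; writing $\widehat\theta=(\widehat\alpha,\widehat\tau)$ as the minimiser of the identity-weighted GMM criterion $\| n^{-1}\sum_{i=1}^n\tilde G_i(\theta)\|^2$, its first-order condition $\tilde\Omega^\top n^{-1}\sum_{i=1}^n\tilde G_i(\widehat\theta)=o_p(n^{-1/2})$ produces precisely the sandwich-type influence function $\widetilde{IF}(\alpha,\tau)=-[\tilde\Omega(\alpha,\tau)^\top\tilde\Omega(\alpha,\tau)]^{-1}\tilde\Omega(\alpha,\tau)^\top\tilde G_i(\alpha,\tau)$; regularity of the estimator then follows from smoothness of the model in the usual way (cf.\ \citet{van2000asymptotic}, Theorem~5.21), and the asymptotic variance is estimated by the empirical sandwich formula analogous to~\eqref{eq:sandwich}.

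I expect the main obstacle to be bookkeeping around two points rather than a genuinely new argument. First, the moment equation holds only \emph{approximately} at the nominal parameter, since $q^*$ solves the observed-data equation~\eqref{eq:trt-bridge-identification} and is not the true bridge function; thus the clean statement is asymptotic linearity for $(\alpha_0^*,\tau_0^*)$, and the transfer to the causal target $(\alpha_0,\tau_0)$ must be routed through the rare-disease bias bounds of Appendix~\ref{append:rare-disease} (with the constant $\eta(\delta)$ there controlling the discrepancy), which invokes Assumption~\ref{assump:solvability} and the uniform-continuity condition stated there. Second, care is needed with the over-identified GMM step: one must verify that $\tilde\Omega(\alpha_0^*,\tau_0^*)$ has full column rank --- intuitively guaranteed when $c(\cdot)$ has dimension matching $\alpha$ and $m$ is rich enough relative to $\tau$, which is the role of the completeness-type Assumption~\ref{assump:completeness} --- and otherwise the Moore--Penrose inverse appearing in Theorem~\ref{thm:ee-joint} must be retained. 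Everything else is standard $Z$-estimator calculus.
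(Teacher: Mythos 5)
Your proposal matches the paper's treatment: the paper likewise targets $(\alpha_0^*,\tau_0^*)$ with $q^*$ solving \eqref{eq:trt-bridge-identification} and $\beta_0(X;\alpha_0^*)=\beta_0^*(X)$, imposes the regularity conditions R$'$.1--R$'$.4 (Lipschitz and moment bounds, differentiability with nonsingular $\tilde\Omega$, approximate zero plus consistency), and then invokes standard $Z$-estimator theory via \citet{van2000asymptotic} Theorem~5.21, routing the discrepancy with the causal target through the rare-disease bounds of Appendix~\ref{append:rare-disease}. Your added detail on the identity-weighted GMM first-order condition for the over-identified case simply makes explicit how the stated form $-\left[\tilde\Omega^T\tilde\Omega\right]^{-1}\tilde\Omega^T\tilde G_i$ arises, which the paper leaves implicit.
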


Suppose that in Algorithm~\ref{alg:nc-tnd-2}, one specifies  $\beta_0(X;\alpha)=X^T\alpha$, then a natural choice for $C(X)$ is $C(X)=X$. A sandwich estimator of the asymptotic variance of $(\widehat\alpha,\widehat\tau)$ can be deduced from previous derivations. 
Under Assumption~\ref{assump:no-eff-mod-by-u}, we have shown that one can identify $VE(X)$, however, one may be unable to identify the population marginal $VE$ without an additional assumption. % because the distribution of $X$ is unknown in the target population. 
 Interestingly, we note that the population marginal risk ratio would remain non-identified even if one had access to a random sample from the target population to inform the marginal distribution of $X$. Specifically, as shown in~\citet{huitfeldt2019collapsibility}, the marginal $RR=E[Y(1)]/E[Y(0)]$ can be written as $RR=E[RR(X)|Y(0)=1]$, i.e. the average risk ratio among subjects who would contract say Influenza had they possibly contrary to fact, not been vaccinated against Influenza. However, the distribution of $X$ within the group $Y(a=0)=1$  cannot be identified in presence of unmeasured confounding, thus ruling out identification of the population marginal $RR$.

Define $q^*$ as before and $$\beta_0^*(x)=\log\left(\dfrac{E[q^*(A, Z, x)I(A=1, Y=1)| S=1, X=x]}{E[q^*(A, Z, x)I(A=0, Y=1)|S=1,X=x]}\right).$$
Denote $\alpha_0^*$ as the value of $\alpha$ such that $\beta_0(x;\alpha_0^*)=\beta_0^*(x)$. Below we give the set of regularity conditions such that $(\widehat\alpha,\widehat\tau)$ is a regular and asymptotically linear estimator of $(\alpha_0^*,\tau_0^*)$:

\begin{enumerate}[R'.1]
    \item The function $\tau\mapsto q(A, Z, X;\tau)$ is Lipschitz in a neighborhood of $\tau_0^*$ and $\alpha\mapsto \beta_0(X;\alpha)$ is Lipschitz in a neighborhood of $\alpha_0^*$. 
    \item $E[q(A, Z, X;\tau_0^*)^2]<\infty$,  $E[m(W, A, X)^2]<\infty$, $E[C(X)^2]<\infty$ and $E[\exp(-2\beta_0(X;\alpha_0^*))]<\infty$. ;
    \item The function $\tau\mapsto q(A, Z, X;\tau)$ is differentiable at $\tau_0^*$ and $\alpha\mapsto\beta_0(X;\alpha)$ is differentiable at $\alpha_0^*$. The derivative matrix $\tilde\Omega(\alpha_0^*, \tau^*)$ is nonsingular;
    \item $\dfrac{1}{n}\sum_{i=1}^n \tilde G_i(\widehat\alpha,\widehat\tau)=o_p(n^{-1/2})$ and $(\widehat\alpha,\widehat\tau)\overset{p}{\rightarrow}(\alpha_0^*,\tau_0^*)$.
\end{enumerate}

\pagebreak
\section{Proof of Theorem 1$'$}
\label{append:causal-or-ee}

\begin{lemma}\label{lemma:or}
Under Assumptions 2$'$, 3$'$ and 6$'$, we have
\begin{align*}&P(A=a, Y=y|U,W, X, S=1)\\&\qquad=\dfrac{1}{c}P(A=a|Y=0, U,W, X, S=1)P(Y=y|A=0,U,W, X, S=1)\exp(\beta_0'ay)\end{align*}
where $c=\sum_{a^*,y^*}P(A=a^*|Y=0, U,W, X, S=1)P(Y=y^*|A=0,U,W, X, S=1)\exp(\beta_0'a^*y^*)$.
\end{lemma}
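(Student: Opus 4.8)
The plan is to reduce the statement to the standard fact that a bivariate distribution of two binary variables is pinned down by its two ``baseline'' conditionals together with its odds ratio, and then to verify that under the stated assumptions the relevant odds ratio is exactly $\exp(\beta_0')$. Concretely, write $\mathcal{C}=(U,W,X,S=1)$ and $p_{ay}=P(A=a,Y=y\mid\mathcal{C})$. Fact~1 (pure algebra): if the conditional odds ratio $\psi=p_{11}p_{00}/(p_{10}p_{01})$ equals $\exp(\beta_0')$, then $p_{a0}\,p_{0y}\,\psi^{ay}=p_{00}\,p_{ay}$ for every $a,y\in\{0,1\}$; summing over $(a,y)$ gives $p_{00}=\sum_{a,y}p_{a0}p_{0y}\psi^{ay}$, and since $p_{a0}\propto P(A=a\mid Y=0,\mathcal{C})$ and $p_{0y}\propto P(Y=y\mid A=0,\mathcal{C})$ with proportionality constants not depending on $(a,y)$, renormalizing yields exactly the claimed formula with $\xi$ as the stated normalizer. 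This part is mechanical, so the real work is establishing Fact~2, namely that the conditional odds ratio of $(A,Y)$ given $(U,W,X,S=1)$ equals $\exp(\beta_0')$.

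For Fact~2 I would begin from $P(A=a,Y=y\mid U,W,X,S=1)=P(S=1\mid A=a,Y=y,U,W,X)\,P(A=a,Y=y\mid U,W,X)/P(S=1\mid U,W,X)$ and form the odds ratio; the factor $P(S=1\mid U,W,X)$ cancels, leaving the product of a ``selection factor'' $(I)$ built from the four terms $P(S=1\mid A=a,Y=y,U,W,X)$ and the odds ratio $(II)$ of $(A,Y)$ given $(U,W,X)$ only. For $(I)$: Assumption 6$'$ yields $W\indep S\mid A,U,X,Y$ (obtained by marginalizing $Z$ in $W\indep(A,Z,S)\mid U,X,Y$ to get $W\indep(A,S)\mid U,X,Y$), hence $P(S=1\mid A=a,Y=y,U,W,X)=P(S=1\mid A=a,Y=y,U,X)$; then Assumption 2$'$, which says $P(S=1\mid A=a,Y=1,U,X)/P(S=1\mid A=a,Y=0,U,X)=\exp(h(U,X))$ for \emph{both} $a=0$ and $a=1$, makes $(I)=\exp(h(U,X))\exp(-h(U,X))=1$.

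For $(II)$: Assumption 6$'$ also gives $W\indep A\mid U,X,Y$, so by Bayes' rule $P(Y=y\mid A=a,U,W,X)=P(W\mid Y=y,U,X)\,P(Y=y\mid A=a,U,X)/P(W\mid A=a,U,X)$, and in the ratio $P(Y=1\mid A=a,U,W,X)/P(Y=0\mid A=a,U,W,X)$ the $a$-free factor $P(W\mid Y=1,U,X)/P(W\mid Y=0,U,X)$ cancels across $a=0,1$. Therefore $(II)$ equals the conditional odds ratio of $(A,Y)$ given $(U,X)$, which is $\exp(\beta_0')$ by Assumption 3$'$. Combining $(I)=1$ and $(II)=\exp(\beta_0')$ gives Fact~2, and then Fact~1 finishes the proof.

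The step I expect to require the most care is the bookkeeping in Fact~2: one must invoke Assumption 6$'$ in two distinct weakenings ($W\indep S\mid A,U,X,Y$ for the selection factor and $W\indep A\mid U,X,Y$ for the outcome factor), and one must use the full strength of Assumption 2$'$ — that the infection–selection risk ratio $\exp(h(U,X))$ does not depend on $A$ — since otherwise the selection factor $(I)$ would not collapse to $1$. Everything else (the Bayes inversions, the cancellations, and Fact~1) is routine conditional-probability algebra.
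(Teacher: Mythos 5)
Your proposal is correct and follows essentially the same route as the paper: both arguments reduce the lemma to showing that the conditional odds ratio of $(A,Y)$ given $(U,W,X,S=1)$ equals $\exp(\beta_0')$, using Assumption 6$'$ to eliminate $W$, Assumption 2$'$ to cancel the selection factor, and Assumption 3$'$ for the baseline odds ratio, and then invoke the standard representation of a $2\times 2$ conditional law by its baseline conditionals and odds ratio. The only difference is that your ``Fact 1'' spells out the algebra that the paper delegates to the citation of Chen (2003), and you peel off the selection factor before removing $W$ rather than after, which is immaterial.
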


\begin{proof}
Note that
\begin{align*}
    &\dfrac{P(Y=1|A=1, U,W, X, S=1)P(Y=0|A=0, U,W, X, S=1)}{P(Y=1|A=0, U,W, X, S=1)P(Y=0|A=1, U,W, X, S=1)}\\
    \stackrel{A.6'}{=} &\dfrac{P(Y=1|A=1, U, X, S=1)P(Y=0|A=0, U, X, S=1)}{P(Y=1|A=0, U, X, S=1)P(Y=0|A=1, U, X, S=1)}\\
    =& \exp(\beta_0')\times \dfrac{P(S=1|Y=1, A=1, U, X)}{P(S=1|Y=0, A=1, U, X)}\times \dfrac{P(S=1|Y=0, S=0, U, X)}{P(S=1|Y=1, S=0, U, X)}\\
    \stackrel{A.2'}{=}& \exp(\beta_0')\times \exp(h(U, X))\times \exp(-h(U, X))\\
    =& \exp(\beta_0')
\end{align*}

The result follows after \citet{chen2003note}.
\end{proof}

To prove Theorem 1$'$, we need to show $$E[(-1)^{1-A}q(A,Z,X)Y\exp(-\beta_aA)|S=1]=0$$

It suffices to prove that 
$$E[(-1)^{1-A}q(A, Z, X)Y\exp(-\beta_aA)|U, W, X, S=1]=0.$$

The left-hand side is 
\begin{align*}
    &\sum_{a, y}\int (-1)^{1-a}y\exp(-\beta_0' a)q(a, z, X)P(A=a, Y=y|U,X, S=1)f(z|U, X, A=a, Y=y, S=1)dz\\
    \stackrel{L.\ref{lemma:or}}{=}&\sum_{a, y}\int (-1)^{1-a}y\exp(-\beta_0' a)q(a, z, X)\times \\&\quad \dfrac{1}{c}P(A=a|Y=0, U, X, S=1)P(Y=y|A=0, U,X, S=1)\exp(\beta_0' ay)f(z|U, X, A=a, Y=y, S=1)dz\\
    =& \sum_{a}\int (-1)^{1-a}\exp(-\beta_0' a)q(a, z, X)\times \\&\quad \dfrac{1}{c}P(A=a|Y=0, U, X, S=1)P(Y=1|A=0, U, X, S=1)\exp(\beta_0' a)f(z|U, X, A=a, Y=1, S=1)dz\\
     =& \sum_{a} \dfrac{(-1)^{1-a}}{c}P(A=a|Y=0, U, X, S=1)P(Y=1|A=0, U, X, S=1)\times\\
     &\quad\int q(a, z, X)f(z|U,X, A=a, Y=1, S=1)dz\\
     \stackrel{A.\ref{assump:trt-bridge}'}{=}&\sum_{a} \dfrac{(-1)^{1-a}}{c}P(A=a|Y=0, U, X, S=1)P(Y=1|A=0, U, X, S=1)\times\\
     &\quad \dfrac{1}{P(A=a|Y=0, U, X, S=1)}\\
     =& \sum_{a} \dfrac{(-1)^{1-a}}{c}P(Y=1|A=0, U, X, S=1)\\
     =& 0
\end{align*}

\pagebreak

\section{Proof of Theorem~\ref{thm:trt-bridge-identification}$'$}\label{append:tilde-q-identification}

\begin{align*}
    & E[\tilde q(a, Z, X)|A=a, W, X, Y=0, S=1]\\
    =& E\left\{E[\tilde q(a, Z, X)|A=a, U, W, X, Y=0, S=1]|A=a, W, X, Y=0, S=1\right\}\\
    \stackrel{A.\ref{assump:nco}'}{=}& E\left\{E[\tilde q(a, Z, X)|A=a, U, X, Y=0, S=1]|A=a, W, X, Y=0, S=1\right\}\\
    \stackrel{A.\ref{assump:nce}'}{=}& E\left\{E[\tilde q(a, Z, X)|A=a, U, X]|A=a, W, X, Y=0, S=1\right\}\\
    \stackrel{A.\ref{assump:trt-bridge}'}{=}&E\left\{\dfrac{1}{P(A=a| U, X, Y=0, S=1)}|A=a, W, X, Y=0, S=1\right\}\\
    =& \int \dfrac{1}{P(A=a| U=u, X, Y=0, S=1)}f(u|A=a, W, X, Y=0, S=1)du \\
    =& \int \dfrac{1}{P(A=a| U=u, X, Y=0, S=1)}\dfrac{f(u| W, X, Y=0, S=1)P(A=a|U=u, W, X, Y=0, S=1)}{P(A=a|W, X, Y=0, S=1)}du\\
    \stackrel{A.\ref{assump:nco}'}{=}& \int \dfrac{1}{P(A=a| U=u, X, Y=0, S=1)}\dfrac{f(u| W, X, Y=0, S=1)P(A=a|U=u, X, Y=0, S=1)}{P(A=a|W, X, Y=0, S=1)}du\\
    =& \dfrac{1}{P(A=a|W, X, Y=0, S=1)}\int f(u| W, X, Y=0, S=1)du\\
    =& \dfrac{1}{P(A=a|W, X, Y=0, S=1)}
\end{align*}

\pagebreak

\section{Simulation setting with binary unmeasured confounder}\label{append:sim-binary}
We generate the data of a size $N=7,000,000$ general population according to Figure~\ref{fig:dag-tnd}, with the following distribution:

\begin{align*}
    U & \sim Bernoulli(p_U);\\
    Z|U & \sim Bernoulli(p_{0Z} + p_{UZ}U);\\
    A|U & \sim Bernoulli(p_{0A} + p_{UA}U);\\
   Y|A, Y & \sim Bernoulli(\exp(\eta_{0Y} + \beta_0A + \eta_{UY}U));\\
    W|U &\sim Bernoulli(p_{0W}+ p_{UW}U);\\
    D|U & \sim Bernoulli(p_{0D} + p_{UD}U);\\
    S|Y, D, W, U &\sim Bernoulli(\max(Y, D, W)(p_{YS} + p_{UYS}U)).
\end{align*}

In the above data generating process, to mimic a test-negative design platform, we created a binary indicator $D$ for flu-like diseases other than $W$. The study sample contains subjects with $S=1$. The distribution of $S$ indicates that only subjects with at least one of $Y$, $D$ and $W$ equal to one will be recruited into the study sample. We chose the parameters as in Table~\ref{tab:param_binary}, which resulted in an average study sample size of between around 48,000 and around 52,000.

\begin{table}[!htbp]
    \centering
    \caption{Parameter values of the data generating distribution in the simulation with a binary unmeasured confounder.}
    \begin{tabular}{cc||cc}
    \hline
    Parameter     & Value & Parameter & Value  \\
    \hline
    $p_U$ &  0.5   & $p_{0Z}$ & 0.2\\
    $p_{UZ}$ & 0.4 & $p_{0A}$ & 0.2\\
    $p_{UA}$ & 0.4 & $\eta_{0Y}$ & $\log(0.01)$\\
    $\beta_0$ & $\log(0.2)$, $\log(0.5)$, $\log(0.7)$, or $0$ & $\eta_{UY}$ &$\log(0.5)$\\
    $p_{0W}$ & 0.02 & $p_{UW}$ & 0.02\\
    $p_{0D}$ & 0.02 & $p_{UD}$ & -0.015\\
    $p_{YS}$ & 0.1 & $p_{UYS}$ & 0.4\\
    \hline
    \end{tabular}
    
    \label{tab:param_binary}
\end{table}

To obtain the true treatment confounding bridge function, by \eqref{eq:trt-bridge-defn}, we have
$$\sum_z q(a, z)f(z|u, a) = \sum_z q(a, z)f(a|u, z)f(z|u)/f(a|u) = 1/f(a|u)$$ and thus
\begin{align*}
&\sum_z q(a, z)f(a|u, z)f(z|u)\\
=& q(a, 0)[p_{0A} + p_{UA}u]^a[1-(p_{0A} + p_{UA}u)]^{1-a}[1 - (p_{0Z}+ p_{UZ}u)]+\\
& q(a, 1)[p_{0A} + p_{UA}u]^a[1-(p_{0A} + p_{UA}u)]^{1-a}(p_{0Z}+p_{UZ}u)\\=&1\end{align*}
for each $u, a$. We obtain that
\begin{align*}
    \begin{pmatrix}
    q(0, 0)\\q(0, 1)
    \end{pmatrix} &= \begin{pmatrix}
    [1 - p_{0A}](1 - p_{0Z}) &[1-p_{0A}]p_{0Z}\\
    [1 - (p_{0A} + p_{UA})][1 - (p_{0Z} + p_{UZ})] & [1-(p_{0A} + p_{UA})](p_{0Z} + p_{UZ})
    \end{pmatrix}^{-1}\begin{pmatrix}
    1\\1
    \end{pmatrix}\\
    \begin{pmatrix}
    q(1, 0)\\q(1, 1)
    \end{pmatrix} &= \begin{pmatrix}
    p_{0A}(1 - p_{0Z}) & p_{0A}p_{0Z}\\
    (p_{0A} + p_{UA})[1 - (p_{0Z} + p_{UZ})] & (p_{0A} + p_{UA})(p_{0Z} + p_{UZ})
    \end{pmatrix}^{-1}\begin{pmatrix}
    1\\1
    \end{pmatrix}
\end{align*}

\pagebreak
\section{Simulation setting with a continuous unmeasured confounder}\label{append:sim-continuous}

In simulation studies with a continuous unmeasured confounder, we generate data for a general population of 7,000,000 individuals following the distribution below:
\begin{align*}
    U & \sim Uniform(0, 1);\\
    X & \sim Uniform(0, 1); \\
    A|U, X & \sim Binomial(\expit(\mu_{0A} + \mu_{UA}U + \mu_{XA}X))\\
    Z|A, U, X & \sim N(\mu_{0Z} + \mu_{AZ}A + \mu_{XZ}X + \mu_{UZ}U, \sigma_Z^2);\\
    W|U, X & \sim N(\mu_{0W} + \mu_{XW}X + \mu_{UW}U, \sigma_W^2);\\
    Y|A, U, X & \sim Binomial(\expit(\mu_{0Y} + \beta A + \mu_{UY}U + \mu_{XY}X) +\mu_{UXY}UX)\\
    D|U, X & \sim Binomial(\expit(\mu_{0D} + \mu_{XD}X + \mu_{UD}U))\\
    S|U, X, Y, D & \sim \max(Y, D)\times Binomial(\expit(\mu_{0S}+\mu_{XS}X + \mu_{US}U + \mu_{UXS}UX)).
\end{align*}

We chose the parameter values according to Table~\ref{tab:param_continuous}, which results in an average study sample size of between around 43,000 and around 47,000.

\begin{table}[!htbp]
    \centering
    \caption{Parameter values of the data generating distribution in the simulation with a continuous unmeasured confounder.}
    \begin{tabular}{cc||cc}
    \hline
    Parameter     & Value & Parameter & Value  \\
    \hline
    $\mu_{0A}$ &  -1   & $\mu_{UA}$ & -1\\
    $\mu_{XA}$ & 0.25 & $\mu_{0Z}$ & 0\\
    $\mu_{AZ}$ & 0.25 & $\mu_{XZ}$ & $0.25$\\
    $\mu_{UZ}$ & 4 & $\sigma_Z$ & 0.25\\
    $\mu_{0Y}$ & $\log(0.01)$ &
    $\beta_0$ & $\log(0.2)$, $\log(0.5)$, $\log(0.7)$, or $0$\\  
    $\mu_{UY}$ & -2 &
    $\mu_{XY}$ & -0.25 \\
    $\mu_{0W}$ & 0 & $\mu_{XW}$ & 0.25\\
    $\mu_{UW}$ & 2 & $\sigma_W$ & 0.25 \\
    $\mu_{0D}$ & $\log(0.01)$ & $\mu_{XD}$ & 0.25 \\
    $\mu_{UD}$ & -0.2 & $\mu_{0S}$ & -1.4 \\
    $\mu_{XS}$ & 0.5 & $\mu_{US}$ & 2 \\
    $\mu_{UXS}$ & 1 & & \\
    \hline
    \end{tabular}
    
    \label{tab:param_continuous}
\end{table}

\pagebreak
\section{Logistic regression as a na\"ive approximate estimator of log risk ratio, ignoring unmeasured confounders}
\label{append:logit-reg}
Figure \ref{fig:dag-no-u} shows a causal diagram of a test-negative design with no unmeasured confounders. Again we assumed selection into the study $S$ is independent of the subjects' treatment status $A$, given the subjects' infection status $Y$ and other covariates $X$.

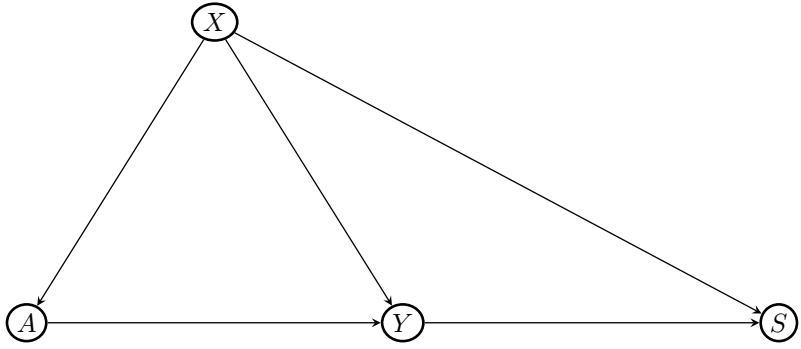
\begin{figure}[!htbp]
		\centering
		\begin{tikzpicture}
		
		\tikzset{line width=1pt,inner sep=5pt,
			swig vsplit={gap=3pt, inner line width right=0.4pt},
			ell/.style={draw, inner sep=1.5pt,line width=1pt}}

\node[shape = ellipse, ell] (A) at (-5, 0) {$A$};

\node[shape=ellipse,ell] (Y) at (0,0) {$Y$};

\node[shape=ellipse,ell] (S) at (5,0) {$S$};

\node[shape=ellipse,ell] (X) at (-2.5,4) {$X$};

\foreach \from/\to in {X/A, X/Y, X/S, A/Y, Y/S}
\draw[-stealth, line width = 0.5pt] (\from) -- (\to);

\end{tikzpicture}
		\caption{\label{fig:dag-no-u} Directed acylic graph of a test-negative design with no unmeasured confounders. }
	\end{figure}

In this scenario, we note that the conditional odds ratio given $X$ in the study population equals the conditional odds ratio in the general population, i.e.
\begin{align*}
    &\dfrac{P(Y=1|A=1, X, S=1)}{P(Y=1|A=0, X, S=1)}\bigg /\dfrac{P(Y=0| A=1, X, S=1)}{P(Y=0|A=0, X, S=1)}\\
    =& \dfrac{P(Y=1|A=1, X)P(S=1|Y=1, A=1, X)}{P(Y=1|A=0, X)P(S=1|Y=1, A=0, X)}\bigg /\dfrac{P(Y=0| A=1, X)P(S=1|Y=0, A=1, X)}{P(Y=0|A=0, X)P(S=1|Y=0, A=0, X)}\\
    \stackrel{A.\ref{assump:trt-indep-samp}}{=}& \dfrac{P(Y=1|A=1, X)P(S=1|Y=1, X)}{P(Y=1|A=0, X)P(S=1|Y=1, X)}\bigg /\dfrac{P(Y=0| A=1, X)P(S=1|Y=0,X)}{P(Y=0|A=0, X)P(S=1|Y=0, X)}\\
    =& \dfrac{P(Y=1|A=1, X)}{P(Y=1|A=0, X)}\bigg /\dfrac{P(Y=0| A=1, X)}{P(Y=0|A=0, X)}
\end{align*}

This implies that we may estimate the stratum-specific odds ratio in the general population by fitting a logistic regression model to the study data:
$$P(Y=1|A, X, S=1) = \expit(\gamma_0 + \gamma_XX + \gamma_A A)$$
where $\expit(x) = \exp(x)/[1 + \exp(x)]$, and $\gamma_A$ is the log odds ratio. When the outcome is rare in the general population, i.e. $P(Y=0|A=a, X=x)\approx 1$ for any $a, x$, the log odds ratio $\gamma_A$ also approximates the log risk ratio.

\pagebreak
\section{Simulation for non-rare diseases}\label{append:non-rare}

To investigate the performance of our method for non-rare diseases, we repeat the simulation with the same setup with binary or continuous confounders. For the simulation with binary confounders, we set $\eta_{0Y}$ to be $\log(0.20)$; for the simulation with continuous confounders, we set $\mu_{0Y}$ to be $\log(0.20)$ -- both correspond to an infection risk of 20\% for subjects with $A=U=X=0$.

The results are in Figure~\ref{fig:sim-nonrare}. While the NC-Oracle estimator remains unbiased and maintains calibrated confidence intervals, the NC estimator is in general biased with under-covered confidence intervals. Notably, the NC estimator is unbiased with calibrated confidence intervals under the null hypothesis where $\beta_0=0$. The logistic regression estimator is biased and has unpredictable behavior.

\begin{figure}[h]
    \centering
    \begin{tabular}{cc}
    \includegraphics[width = 0.45\textwidth]{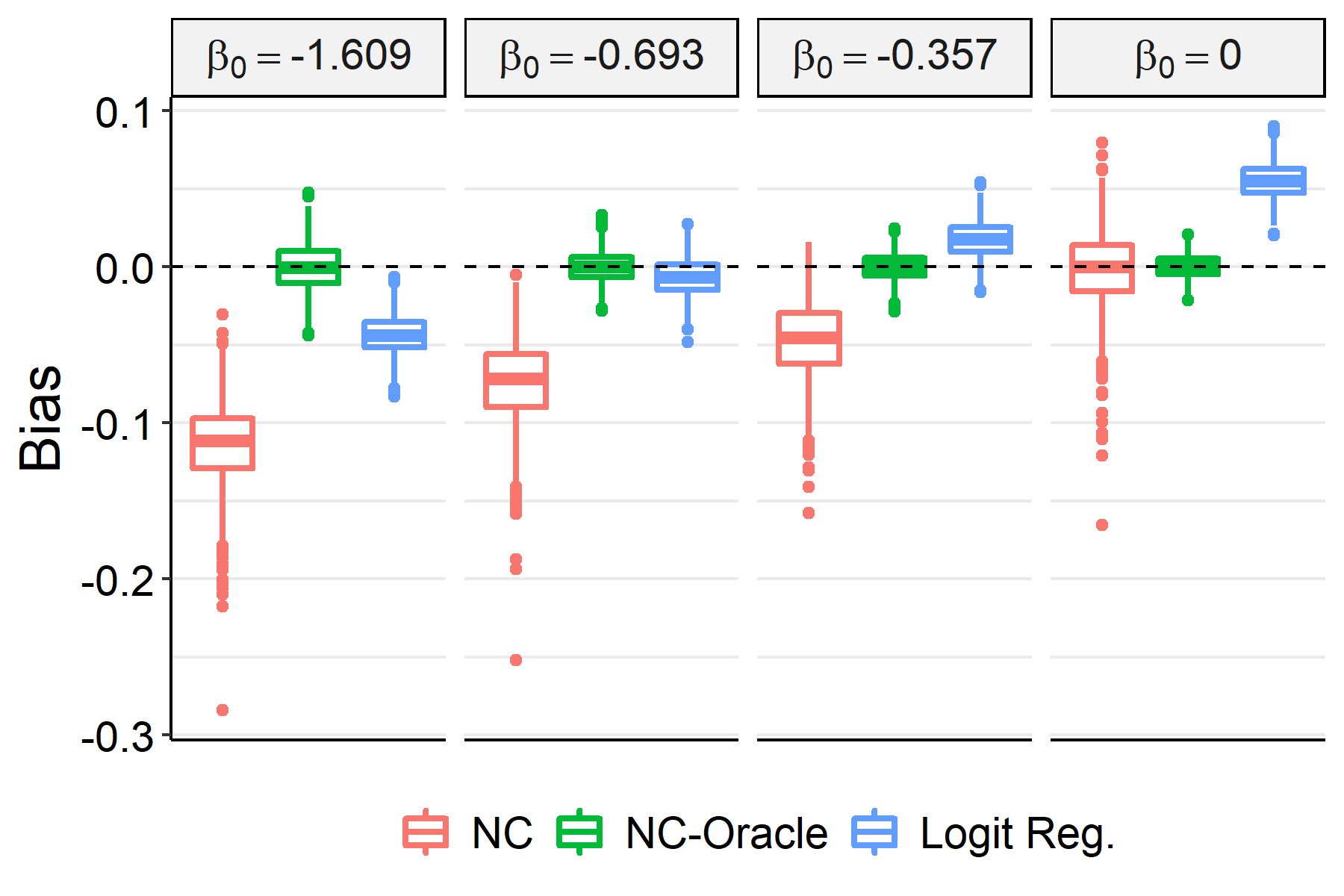}&
    \includegraphics[width = 0.45\textwidth]{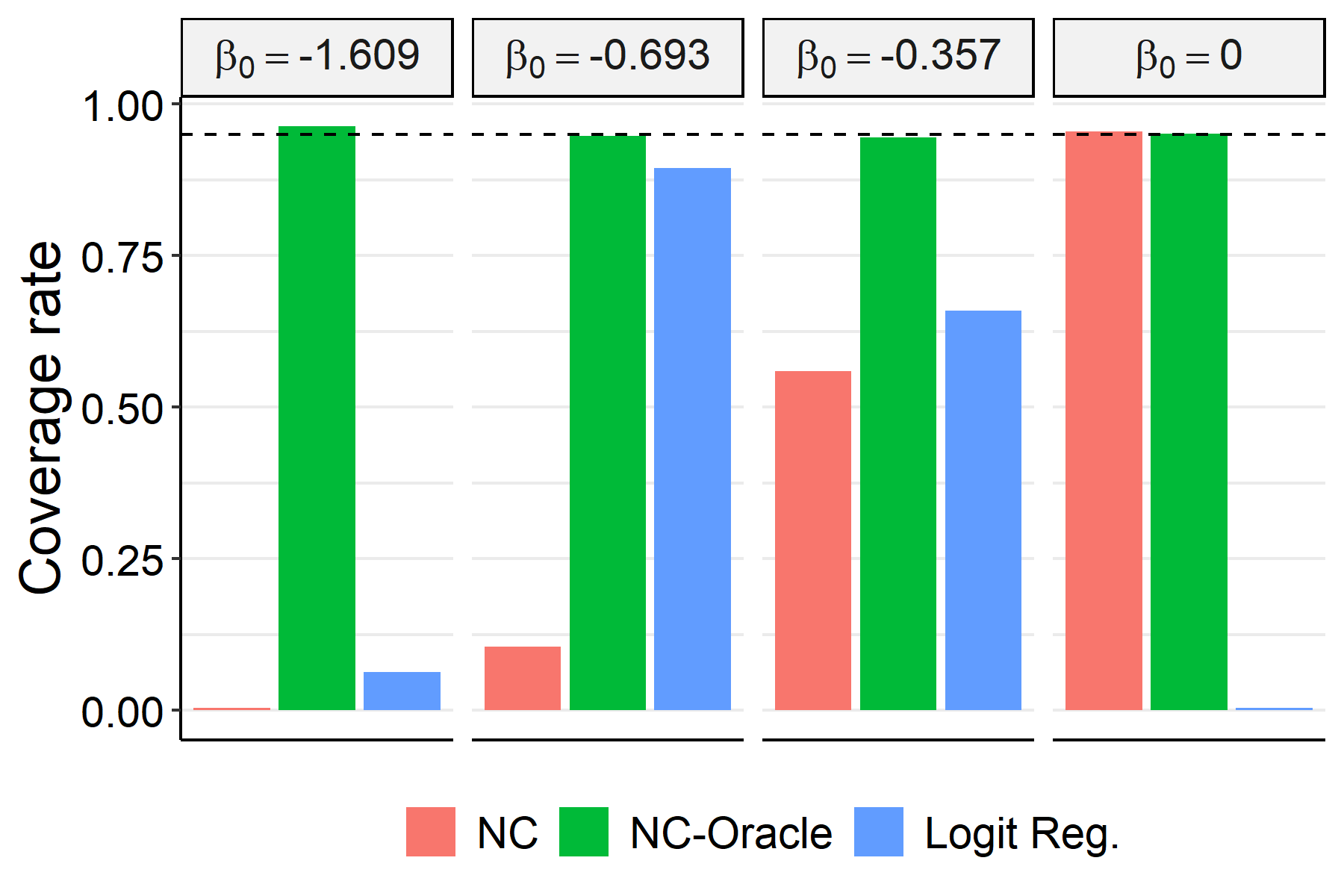}\\
    \multicolumn{2}{c}{(a)}\\
    \includegraphics[width = 0.45\textwidth]{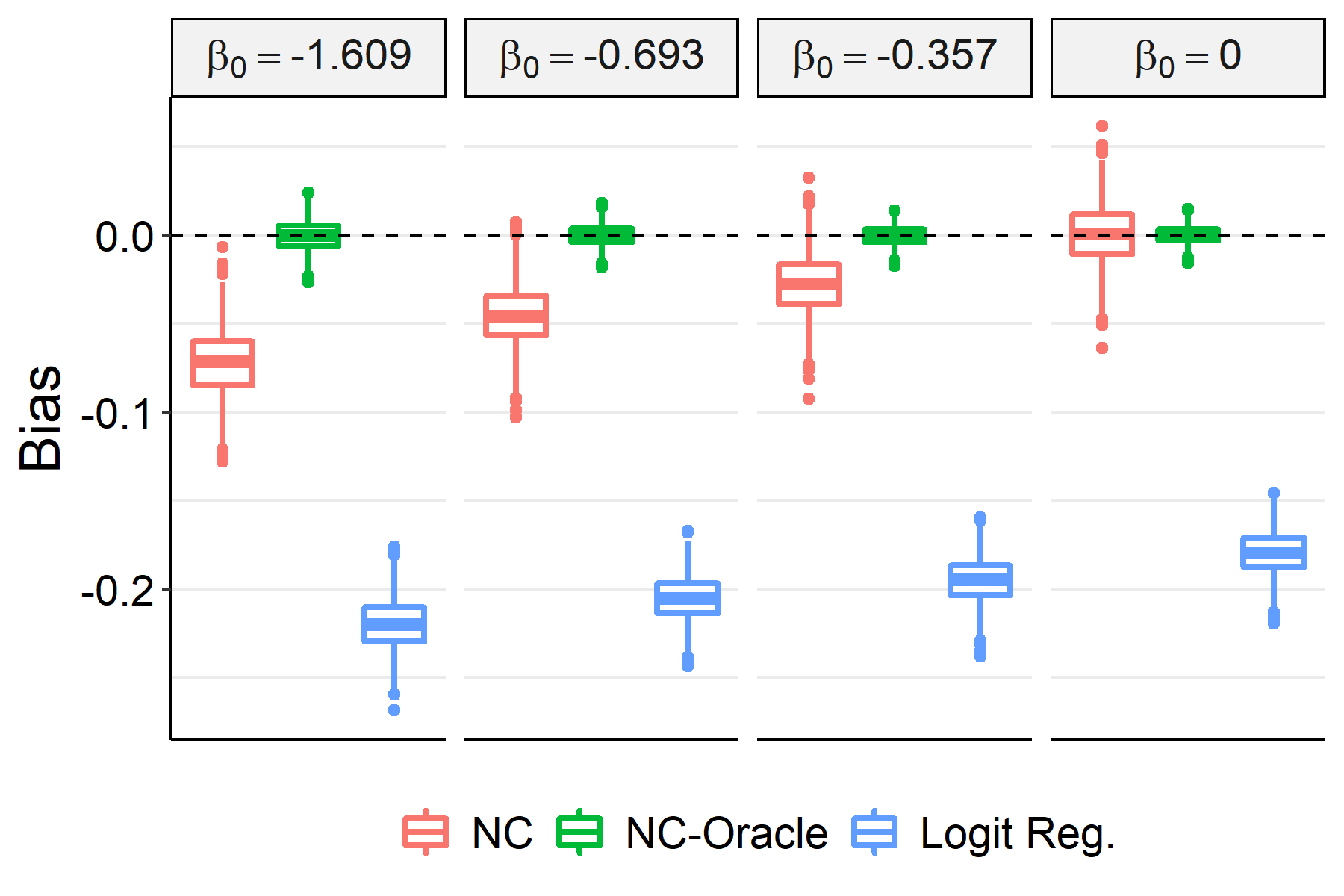}&
    \includegraphics[width = 0.45\textwidth]{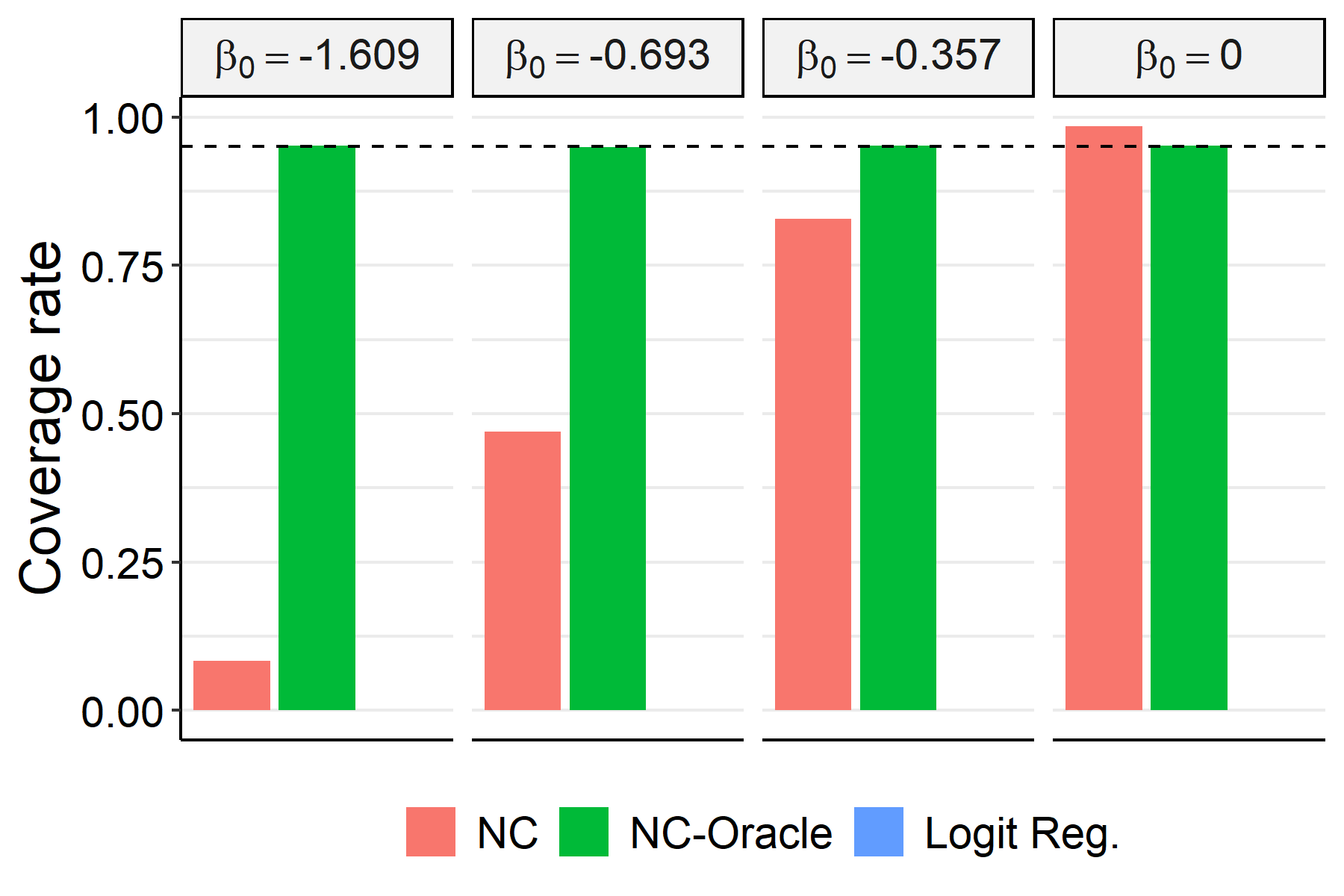}\\
    \multicolumn{2}{c}{(b)}\\
    \end{tabular}
    \caption{Bias (left) and coverage rates of 95\% confidence interval (right) for the oracle estimator (NC-Oracle), GMM estimator (NC-GMM) and logistic regression (Logit Reg.) with a binary or unmeasured confounder, where $\eta_{0y}$ or $\mu_{0Y}$ is $\log(0.20)$.}
    \label{fig:sim-nonrare}
\end{figure}

\pagebreak
\section{Detailed results of University of Michigan Health System Data analysis}\label{append:umhs-tables}

\begin{table}[!htbp]
    \centering
    \caption{Descriptive statistics of University of Michigan Health System COVID-19 Data. Variables were summarized as counts  (percentage\%).}
    \begin{tabular}{l|rr}
    \hline
    & Unvaccinated (N=777) & Vaccinated (N=39,377)\\
    \hline
    Vaccine types \\
    $\quad$Pfizer-BioNTech & / & 20,239 (51.4\%) \\
    $\quad$Moderna & / & 10,719 (27.2\%) \\
    $\quad$Johnson \& Johnson's Janssen & / & 1,405 (3.6\%)\\
    $\quad$Other & / & 7,014 (17.8\%) \\
    COVID-19 Infection & 456 (58.7\%) & 2,752 (7.0\%)\\
    NCE: Immunization before Dec 2020 & 433 (55.7\%) & 18,049 (45.8\%)\\
    NCO conditions \\
    $\quad$Arm/leg cellulitis & 7 (0.9\%) & 158 (0.4\%)\\
     $\quad$Eye/ear disorder & 6 (0.8\%) & 514 (1.3\%)\\
%     $\quad$Influenza & 2 (0.2\%) & 74 (0.2\%)\\
     $\quad$Gastro-esophageal disease & 51 (6.6\%) & 3,164 (8.0\%)\\
     $\quad$Atopic dermatitis & 1 (0.1\%) & 41 (0.1\%)\\
     $\quad$Injuries & 62 (8.0\%) & 3,652 (9.3\%)\\
     $\quad$General adult examination & 72 (9.3\%)& 4,657 (11.8\%)\\
     No. of NCO conditions $\geq$ 1 & 160 (20.6\%) & 10,280 (26.1\%)\\
     Age &&\\
     $\quad\leq 18$ & 93 (12.0\%) & 2,602 (6.6\%) \\ 
     $\quad\geq 18, <60$ & 543 (69.9\%) & 23,286 (59.1\%) \\
     $\quad\geq$ 60& 141 (18.1\%\%) & 13,489 (34.2\%)\\
     Male & 332 (42.7\%) & 16,151 (41.0\%)\\
     White & 557 (71.7\%) & 30,562 (77.6\%)\\
     Charlson score $\geq$ 3 & 50 (6.5\%)& 3,175 (8.1\%)\\
    \hline
    \end{tabular}
    \label{tab:data_app}
\end{table}

\pagebreak

\begin{table}[!htbp]
    \centering
    \caption{Logistic regression of COVID-19 infection on COVID-19 vaccination, the NCE (previous immunization) and other baseline covariates.}
    \begin{tabular}{l|rrr}
    \hline
         & Est. & S.E. & p-value \\
         \hline
    (Intercept)   &  0.26 &0.12 & $<$0.036 \\
    COVID-19 vaccination   &  -3.17 & 0.08 & $<$0.001\\
    Previous Immunization & 0.16 &0.04& $<$0.001\\
    Age $\geq$18, $\le$60 &0.58& 0.09& $<$0.001\\ 
    Age $\geq$60 &0.33&0.09&$<$0.001\\
    Male &0.06&0.04&0.149\\
    White & 0.23 & 0.05 & $<$0.001\\
    Charlson score $\geq$3 & 0.21& 0.07& 0.003\\
    Calendar month &&&\\
    $\quad$April& -0.48 & 0.08& $<$0.001 \\
    $\quad$May & -1.40 & 0.09 & $<$0.001 \\
    $\quad$June & -2.23 & 0.14& $<$0.001 \\
    $\quad$July & -1.23 & 0.10 & $<$0.001\\
    $\quad$August & -0.51 & 0.07 & $<$0.001\\
    $\quad$September & -0.30 & 0.06 & $<$0.001 \\
    $\quad$October & -0.06 & 0.05 & 0.240\\
    \hline
    \end{tabular}
    \label{tab:reg-nce}
\end{table}

\pagebreak

\begin{table}[!htbp]
    \centering
    \caption{Logistic regression of having at least one NCO conditions on COVID-19 vaccination, the NCE (previous immunization) and other  covariates.}
    \begin{tabular}{l|rrr}
       \hline
         & Est. & S.E. & p-value \\
         \hline
    (Intercept)   &  -2.42 &0.11 & $<$0.001 \\
    COVID-19 vaccination   &  0.23 & 0.09 & 0.014\\
    Previous Immunization & 0.68 &0.02& $<$0.001\\
    Age $\geq$18, $\le$60 &0.47& 0.06& $<$0.001\\ 
    Age $\geq$60 & 1.00 & 0.06 & $<$0.001\\
    Male & -0.05 & 0.02 & 0.049\\
    White & 0.13 & 0.03 & $<$0.001\\
    Charlson score $\geq$3 & 0.29& 0.04& $<$0.001\\
    Calendar month &&&\\
    $\quad$April& -0.04 & 0.05& $<$0.472 \\
    $\quad$May & 0.12 & 0.04 & 0.004 \\
    $\quad$June & 0.13 & 0.04& 0.002 \\
    $\quad$July & 0.12 & 0.05 & 0.013\\
    $\quad$August & 0.10 & 0.04 & 0.022\\
    $\quad$September & -0.01 & 0.04 & 0.712 \\
    $\quad$October & 0.040 & 0.04 & 0.312\\
    \hline
    \end{tabular}
    \label{tab:reg-nco}
\end{table}

\pagebreak

\begin{table}[!htbp]
    \centering
    \caption{Estimates, standard errors, and 95\% confidence intervals of parameters for the NC analysis with the University of Michigan Health System data by Algorithm~\ref{alg:nc-tnd}.}
    \begin{tabular}{l|rrr}
    \hline
         & Est. & S.E. & 95\% CI \\
         \hline
    $\beta$ (log causal RR)  & -2.80 & 0.14& (-3.08, -2.54)\\
    VE ($1-\exp(\beta)$)  & 94.0\% & / & (92.1\%, 95.4\%) \\
    $\tau$: Intercept & -29.14 & 53.86 & (-134.70,76.42)\\
    $\tau$: COVID-19 vaccination & 30.48 & 53.85 & (-75.06, -136.02)\\
    $\tau$: Previous immunization & 234.71 & 90.11 & (58.09, 411.32) \\
    $\tau$: COVID-19 vaccination $\times$ Previous immunization  &-235.76& 90.18& (-412.51, -59.01)\\ 
     $\tau$: Age $\geq$18, $\le$60 &-0.04& 0.18& (-0.40, 0.31)\\
    $\tau$: Age $\geq$60 & 0.15 & 0.18 & (-0.19, 0.50)\\
    $\tau$: Male & -0.04 & 0.08 & (-0.20,0.12)\\
    $\tau$: White & -0.07 & 0.09 & (-0.25, 0.11)\\
    $\tau$: Charlson score $\geq$3 & 0.31& 0.15& (0.01, 0.60)\\
    $\tau$: Calendar month &&&\\
    $\qquad$April& -0.52 & 0.23& (-0.98, -0.06) \\
    $\qquad$May & 0.24 & 0.15 & (-0.06, 0.54) \\
    $\qquad$June & 0.35 & 0.14& (0.08, 0.62) \\
    $\qquad$July & 0.31 & 0.15 & (0.02, 0.60)\\
    $\qquad$August & 0.25 & 0.15 & (-0.04, 0.55)\\
    $\qquad$September & 0.38 & 0.13 & (0.12, 0.64) \\
    $\qquad$October & 0.22 & 0.13 & (-0.03, 0.48)\\
    \hline
    \end{tabular}
    \label{tab:umhs-nc-results}
\end{table}

\pagebreak

\begin{table}[!htbp]
    \centering
    \caption{Estimates, standard errors and 95\% confidence intervals of parameters in a logistic regression of COVID-19 infection on COVID-19 vaccination and other baseline covariates.}
    \begin{tabular}{l|rrr}
    \hline
         & Est. & S.E. & 95\% CI\\
         \hline
    (Intercept)   & 0.36 & 0.12 & (0.12, 0.60) \\
    COVID-19 vaccination   & -3.18 & 0.08 & (-3.35, -3.01) \\
  Age $\geq$18, $\le$60 &0.56& 0.09& (0.39, 0.73)\\
    Age $\geq$60 & 0.32 & 0.09 & (0.14, 0.51)\\
    Male & 0.05 & 0.04 & (-0.03,0.13)\\
    White & 0.24 & 0.05 & (0.14, 0.33)\\
    Charlson score $\geq$3 & 0.25& 0.07& (0.11, 0.39)\\
     Calendar month &&&\\
    $\quad$April& -0.48 & 0.08& (-0.63, -0.33) \\
    $\quad$May & -1.40 & 0.09 & (-1.58, -1.21) \\
    $\quad$June & -2.23 & 0.14& (-2.49, -1.69) \\
    $\quad$July & -1.23 & 0.10 & (-1.42, -1.03)\\
    $\quad$August & -0.51 & 0.07 & (-0.65, -0.37)\\
    $\quad$September & -0.30 & 0.06 & (-0.42, -0.19) \\
    $\quad$October & -0.06 & 0.05 & (-0.17, 0.04)\\
    \hline
    \end{tabular}
    \label{tab:logit-reg-umhs}
\end{table}

\pagebreak
\printbibliography
\end{document}